\newenvironment{proof}{\noindent {\bf Proof }}
{\hfill $\bullet$ \vspace{0.25cm}}
\def\La {{\Lambda}}
\def\si {{\sigma}}
\def\Si {{\Sigma}}
\def\la {{\lambda}}
\def\ga{{ \gamma}}
\def\Ga{{ \Gamma}}
\def\Om{{ \Omega}}
\newcommand{\und}{\underline}
\newcommand{\nn}{\nonumber}
\newcommand{\dis}{\displaystyle}
\newtheorem{thm}{Theorem}
\newtheorem{prop}{\indent Proposition}
\newtheorem{lem}{\indent Lemma}
\newtheorem{cor}{\indent Corollary}
\newcommand{\mmmintone}[1]{{\dis{\int\kern -.36cm-}}_{\kern-.21cm\substack{#1}}\;\;}
\newcommand{\mmmintwo}[2]{{\dis{\int\kern -.43cm-}}_{\kern-.21cm\substack{#1}}^{\substack{#2}}\;\;}
\newcommand{\submint}{{\scriptstyle{\int\kern -.66em -}}}
\newcommand{\submintone}[1]{{\scriptstyle{\int\kern -.66em-}}_{\scriptscriptstyle{\kern-.21em\substack{#1}}}}
\newcommand{\fracmint}{{\textstyle{\int\kern -.88em -}}}
\newcommand{\fracmintone}[1]{{\textstyle{\int\kern -.88em
-}}_{\scriptscriptstyle{\kern-.21em\substack{#1}}}\;}
\title{Scaling limits in highly anisotropic systems}
\title{Highly anisotropic scaling limits}
\author{M. Cassandro \footnote{ E-mail: marzio.cassandro@gmail.com  }, M. Colangeli \footnote{ E-mail: matteo.colangeli@gssi.infn.it }, E. Presutti\footnote{  E-mail: errico.presutti@gmail.com } \\Gran Sasso Science Institute, Via. F. Crispi 7, 00167 L' Aquila, Italy}
\date{\today}
\begin{document}

\maketitle

\begin{abstract}
We consider a highly anisotropic $d=2$ Ising spin model whose precise definition can be found at the beginning of Section \ref{Fsec.2}.
In this model the
spins on a same horizontal line (layer) interact via a $d=1$ Kac potential
while the vertical interaction is between nearest neighbors, both interactions being ferromagnetic. The temperature is set equal to 1 which is the mean field critical value, so that the mean field limit for
the Kac potential alone does not have
a spontaneous magnetization.  We compute
the phase diagram of the full system in the Lebowitz-Penrose limit showing that due to the vertical interaction it has a spontaneous magnetization.  The result  is not covered by the  Lebowitz-Penrose theory
because our  Kac potential has
support on regions of positive codimension.

\end{abstract}
%
%

%
%
%



\section{Introduction}
\label{Fsec.1}

This work focuses on the proof of the Lebowitz-Penrose limit for a highly anisotropic $d=2$ Ising spin model which has been first studied in \cite{FMMPV2}, its precise definition can be found at the beginning of Section \ref{Fsec.2}.
In this model the
spins on a same horizontal line (layer) interact via a $d=1$ Kac potential
while the vertical interaction is between nearest neighbors, both interactions are ferromagnetic. The temperature is set equal to 1 which is the mean field critical value (without vertical interactions), so that the mean field limit for
the Kac potential alone does not have spontaneous magnetization. However in \cite{FMMPV2}
it is proved that even a small vertical interaction is sufficient to produce a
phase transition at least for small values of the Kac scaling parameter $\ga$. The idea in \cite{FMMPV2}  is to study a model with fewer vertical interactions (those left have a chessboard structure):  by  the Ginibre inequalities if  a spontaneous magnetization
is present in the reduced model then it is also present in the true system as well.
The advantage of working in the reduced system is that one can split the system into blocks of two layers, the vertical interaction is left only inside each block so that blocks do not interact vertically with each other;  the horizontal interaction is unchanged. As a consequence in  \cite{FMMPV2} it is shown that it is sufficient
to carry out the Lebowitz-Penrose coarse graining procedure only for two-layer systems. It is then
proved that this can be done, that there is a positive spontaneous magnetization in the limits volume to infinity and then $\ga\to 0$ and that such a property remains valid also at finite small $\ga> 0$.  However the value of the spontaneous magnetization for the reduced system  is  certainly smaller than the real one because half of the vertical interactions has been dropped.

The problem of studying directly the original system and in particular to
find its true spontaneous magnetization
has been left open in \cite{FMMPV2}, we attack it in this paper
determining explicitly the limit phase diagram of the
true system when first the volume goes to infinity and then $\ga \to 0$.
This is not covered by the  Lebowitz-Penrose theory because our  Kac potential is singular having
support on regions of positive codimension.   We hope in a successive paper to prove that there is a positive spontaneous magnetization also at $\ga>0$ which converges as  $\ga\to 0$ to the one found here.

This work is part of a more general project (which besides us involves several other colleagues) where we
want to study  systems with Kac potentials having support on regions of  positive codimension
plus short range interactions, both in equilibrium and non equilibrium.
The description of the system is hybrid: referring to our Ising model
we can make a coarse graining on each layer and introduce macroscopic variables
but the interaction between layers is microscopic and it is described by
an effective interaction to be determined.  The purpose is to derive such an effective hamiltonian and find its ground states. In this paper we compute the limit
ground state energy but we hope in the future
to study the excited states and derive the large deviations functional.

Similar structures are present in SOS models, for instance in the SOS interface models where the real valued spin variables $S_x$, $x\in \mathbb Z^d$, represent the position of an interface in $\mathbb Z^{d+1}$.  Evidently the model is obtained by an anisotropic scaling limit for which the interface becomes sharp,the point  $S_x$  on the ``vertical'' line through $x$, while the interaction among spins remains short range. We hope to establish such connections starting from models like the one considered here.

The appearance of a
macroscopic description on the layers
may also originate from a canonical constraint with or without the presence of a Kac potential.
Considering the system in a finite box we may fix on each horizontal layer the total magnetization which gives rise to a multi-canonical ensemble.  Indeed when we study
the system with Kac potentials following the Lebowitz-Penrose procedure we coarse grain and get such multi-canonical ensembles.  Our analysis will be based on a proof that equivalence of ensembles extends to such cases.

The multi-canonical constraint appears naturally in dynamical problems when we consider a Kawasaki dynamics on each layer so that the total magnetization on each layer
is constant in time. The vertical interaction affects the rates of horizontal
exchanges on the layers so that in the hydrodynamic limit
the evolution is  conjectured to be ruled by coupled diffusions.
An interesting variant would be a weakly asymmetric simple exclusion on each layer
with small interactions among layers which should be in the KPZ class of systems.
We refer to the introductions in  \cite{FMMPV1}-- \cite{FMMPV2} for more references and
a list of open problems and conjectures, in particular the connection with quantum Ising models (via Feynman-Kac), phase transitions for the hard-rods Kac-Helfand model and the dependence
on $\ga$ of the critical value  of the vertical interaction  for a phase transition to occur.

We conclude the introduction by observing that
highly anisotropic interactions are present in nature, the best
example is the graphite where horizontal structures are rather
free to slide one with respect to the other.  However it may happen that
even a small interaction among layers produces macroscopic effects.  For instance
for bilayer graphene samples interacting via an interlayer coupling constant \cite{Zhang,Rutt,LeRoy}  the presence of a band gap in the energy spectrum, which is tunable by an external electric field, paves the way to a variety of applications in electronics \cite{Schwierz}.\\
Multilayer graphene samples have also gone, recently, under intense investigation \cite{Shahil,Yanko}, which revealed the rise of exceptional thermal conduction properties for these materials as well as the possibility to control the thermodynamically stable cristalline structure of the material through an external voltage.

\vskip1cm

\setcounter{equation}{0}

\section{The model and the main result} 
\label{Fsec.2}

As mentioned in the introduction one of our aims
is
the extension of the Lebowitz-Penrose theory
to   cases where the support of the Kac potential has a
positive codimension.
This is what we accomplish in this paper
in the simple context of
the $d=2$  Ising model.  Let $\La$ be a square
in
$\mathbb Z^2$, $L$ its side, $(x,i)$ its points.  Write $\si \in \{-1,1\}^\La$ for a  spin configuration  in $\La$, define  $\si(x,L+1)=\si(x,1)$, $\si(x+L,i)=\si(x,i)$ and let
 \begin{eqnarray}
    \label{102.1}
H^{\rm per}_{\ga, h_{\rm ext},L}(\si) &=& H_{\ga,L}(\si) + H^{\rm vert}_L(\si) + H_{h_{\rm ext},L}(\si)
 \\
 H_{\ga,L}(\si) &=&   \sum_{i=1}^L \{-\frac 12 \sum_{x\ne y}J_\ga(x,y)\si(x,i)\si(y,i)\}\nn\\
 H^{\rm vert}_L(\si) &=&   \sum_{ x=1}^L\{ - \la \sum_{i=1}^L \si(x,i)\si(x,i+1)\}\nn\\
H_{h_{\rm ext},L}(\si) &=&
 -\sum_{(x,i)\in \La}
h_{\rm ext}\si(x,i)\nn
   \end{eqnarray}
$H_{\ga,L}(\si)$ is the Kac hamiltonian, it has only horizontal interactions; $H^{\rm vert}_L(\si) $ is the hamiltonian of a nearest neighbor Ising model with only vertical interactions; $H_{h_{\rm ext},L}(\si)$ is the energy due to the external magnetic field $h_{\rm ext}$.  We suppose that
 \begin{equation}
    \label{102.1.1}
J_\ga(x,y)= c_\ga \ga J(\ga x,\ga y)
   \end{equation}
where $J(r,r')$ is a smooth, symmetric probability kernel on $\mathbb R$ which vanishes
for $|r-r'|\ge 1$; $c_\ga$ is such that
 \begin{equation}
    \label{102.1.2}
\sum_y J_\ga(x,y)= 1
   \end{equation}
Since $\int J(r,r')dr'=1$, $c_\ga \to 1 $ as $\ga\to 0$.
Let
 \begin{equation}
    \label{102.2}
Z_{\ga, h_{\rm ext},L}^{\rm per}= \sum_{\si\in \{-1,1\}^\La}e^{- H^{\rm per}_{\ga, h_{\rm ext},L}(\si)}
   \end{equation}
be the   partition function relative to the hamiltonian $H^{\rm per}_{\ga, h_{\rm ext},L}(\si)$. Call $f_\la(m)$  the free energy density with magnetization density $m$ relative to the hamiltonian $H^{\rm vert}_L(\si)$, since the horizontal interactions are absent $f_\la(m)$
is equal to the free energy of the $d=1$ Ising model with only nearest neighbor interactions of strength $\la$.

\medskip

\begin{thm}
\label{thm101.1}
For $\la\ge 0$
small enough
 \begin{equation}
    \label{102.3}
\lim_{\ga\to 0}\lim_{L\to \infty}\frac{\log Z_{\ga, h_{\rm ext},L}^{\rm per}}{|\La|}= -
\inf_{m\in [-1,1]} \Big\{ -h_{\rm ext} m + [ - \frac{m^2}{2}  + f_\la (m)]\Big\}
   \end{equation}

\end{thm}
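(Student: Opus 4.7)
\emph{Plan.} The argument combines a horizontal Lebowitz--Penrose coarse graining with a multi-canonical equivalence of ensembles for the vertical chains. Introduce a mesoscopic scale $\ell=\ell_\ga$ with $1\ll\ell\ll\ga^{-1}$, partition each horizontal layer of $\La$ into blocks $C_k$ of length $\ell$, and classify configurations by their block magnetizations $\mathbf m=\{m_{i,k}\}$, $m_{i,k}:=\ell^{-1}\sum_{x\in C_k}\si(x,i)$. Writing $Z^{\rm per}_{\ga,h_{\rm ext},L}=\sum_{\mathbf m}Z_{\mathbf m}$ and using that the number of admissible profiles is bounded by $(\ell+1)^{|\La|/\ell}=e^{o(|\La|)}$, the theorem reduces to an estimate of $\log Z_{\mathbf m}$ uniform in $\mathbf m$, matched by a lower bound on a suitable trial profile.

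\emph{Key estimates.} On the level set $\{\mathbf m(\si)=\mathbf m\}$, $H_{\ga,L}(\si)$ is a quadratic functional of $\mathbf m$ up to $o(|\La|)$ (since $J_\ga$ varies on the scale $\ga^{-1}\gg\ell$), and the convolution/Young inequality $\sum_{x,y}J_\ga(x,y)\si(x,i)\si(y,i)\le\sum_x\si(x,i)^2$ together with a block reorganisation give $-H_{\ga,L}(\si)\le\tfrac12|\La|\overline{m^2}+o(|\La|)$, with equality in the limit $\ga\to 0$ for slowly-varying $\mathbf m$; the field term is exactly $h_{\rm ext}|\La|\overline{m}$ on the level set. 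The residual sum over $\si$ factorises over the blocks: within $C_k$ the $\ell$ columns $\{\si(x,\cdot):x\in C_k\}$ are independent 1-d nearest-neighbor Ising chains of coupling $\la$ subject to the layer constraints $\ell^{-1}\sum_{x\in C_k}\si(x,i)=m_{i,k}$. Denoting the corresponding vertical partition function by $\mathcal Z_k(\mathbf m_k)$, the elementary tilting bound
\[
\mathcal Z_k(\mathbf m_k)\le e^{-\ell\sum_i h_i m_{i,k}}\big(Z^{1D}_{\la,L}(\mathbf h)\big)^\ell,\qquad Z^{1D}_{\la,L}(\mathbf h):=\sum_{\tau\in\{-1,1\}^L}e^{\la\sum_i\tau_i\tau_{i+1}+\sum_i h_i\tau_i},
\]
valid for every $\mathbf h\in\mathbb R^L$, combined with the transfer-matrix asymptotics $L^{-1}\log Z^{1D}_{\la,L}(\mathbf h)=L^{-1}\sum_i p_\la(h_i)+o(1)$ and the Legendre duality $f_\la(m)=\sup_h[hm-p_\la(h)]$, yields $\log\mathcal Z_k(\mathbf m_k)\le-\ell\sum_i f_\la(m_{i,k})+o(\ell L)$. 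Setting $g(m):=-h_{\rm ext}m-\tfrac12 m^2+f_\la(m)$, one deduces $|\La|^{-1}\log Z_{\mathbf m}\le-\overline{g(m)}+o(1)\le-\inf_{m\in[-1,1]}g(m)+o(1)$, which is the upper bound on the left-hand side of \eqref{102.3}. The matching lower bound comes from the constant trial profile $\mathbf m\equiv m^\ast$ with $m^\ast$ a minimiser of $g$: both the Young inequality and the tilting/Cram\'er bound above saturate in the limit via the classical single-magnetization equivalence of ensembles for $\ell$ i.i.d.\ 1-d Ising chains.

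\emph{Main obstacle.} The delicate step is to make the equivalence of ensembles uniform in $L$. Standard Lebowitz--Penrose arguments fix a single canonical magnetization per mesoscopic cell, whereas here a height profile of length $L\to\infty$ must be imposed on each block and $L$ is sent to infinity \emph{before} $\ell$. The uniform transfer-matrix asymptotics $L^{-1}\log Z^{1D}_{\la,L}(\mathbf h)\to L^{-1}\sum_i p_\la(h_i)$ used in the upper bound, together with the sharp (local CLT) large-deviation asymptotics for the constrained partition function of $\ell$ i.i.d.\ chains needed in the trial-profile lower bound, both rest on a uniform spectral gap of the transfer matrix and on uniform smoothness of $p_\la$ with respect to the field; this is where the hypothesis that $\la$ be sufficiently small enters the analysis.
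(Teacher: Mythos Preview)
The coarse-graining and tilting framework you set up is the same as the paper's first reduction (its Theorem~\ref{Fthm4.1} and Theorem~\ref{thm101.4}), and your lower bound via a constant trial profile is essentially how the paper closes the argument in Subsection~\ref{Fsubsec.4.1}. The upper bound, however, contains a genuine gap.

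The crucial step is your assertion
\[
\frac{1}{L}\log Z^{1D}_{\la,L}(\mathbf h)=\frac{1}{L}\sum_{i=1}^{L} p_\la(h_i)+o(1)
\]
uniformly in $\mathbf h$, which you attribute to a ``uniform spectral gap of the transfer matrix''. This is false. Take $h_i=(-1)^{i}h$: for large $h$ the left side tends to $h-\la$ (anti-aligned ground state) while the right side tends to $h+\la$. More generally, expanding to first order in $\la$ one finds the correction $-\tfrac{\la}{2L}\sum_i(\tanh h_i-\tanh h_{i+1})^2$, which is of order $\la$ uniformly in $L$; a spectral gap controls decay of correlations for a fixed field, not additivity of $\log Z$ across an inhomogeneous profile. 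Without this decoupling your chain collapses to
\[
|\La|^{-1}\log Z_{\mathbf m}\le \frac{1}{|\La|}\sum_{i,k}\ell\Big[\tfrac12 m_{i,k}^{2}+h_{\rm ext}m_{i,k}\Big]-\phi_\ell(\mathbf m)+o(1),
\]
with $\phi_\ell$ the genuine multi-layer Legendre transform of $\log Z^{1D}$, and you still have to show that the supremum over $\{m_{i,k}\}$ is attained at a constant profile. The obvious relaxation (tilting with a constant field $h$) gives only $\phi_\ell(\mathbf m)\ge f_\la(\bar m)$, and since $\frac{1}{L}\sum_i m_i^2\ge \bar m^2$ points the wrong way, alternating layer profiles are not excluded.

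Proving that the optimal profile is homogeneous is precisely the heart of the paper and occupies most of it: one writes $\log Z^{*}_{\ell,\und h}$ by cluster expansion (Theorem~\ref{thm103.1}), re-expresses every monomial $u^{N(\cdot)}$ as a sum of one-body terms plus gradient squares (Theorem~\ref{Fthm.5.9}), and then shows, via a delicate sign analysis, that the resulting effective Hamiltonian has a Ginzburg--Landau structure $\sum_i f(u_i)+\sum_{i<j}b_{i,j}(u_i-u_j)^2$ with the gradient part nonnegative for $\la$ small (Theorems~\ref{Fthm.4.8}, \ref{Fthm.4.8.1}, \ref{Fthm.4.8.2} and Proposition~\ref{Fprop.4.8.1}). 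Your ``Main obstacle'' paragraph names the secondary issue of uniformity in $L$; the primary obstacle---homogenisation of the optimal layer profile---is what your argument assumes rather than proves.
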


\medskip

After a few comments on Theorem \ref{thm101.1}
we give a heuristic derivation of \eqref{102.3} followed by
a description of how proofs are organized in the various sections.

\medskip
\subsection{ Remarks on Theorem \ref{thm101.1}.}

\begin{itemize}

\item   \eqref{102.3} remains valid for general Van Hove regions and boundary conditions
since the interaction has finite range for any fixed value of  $\ga> 0$.
The restriction to small $\la$ is needed for cluster expansion, it is
technical and could be presumably removed.

\item  The limit in \eqref{102.3} is the sum of the external magnetic field energy
$ -h_{\rm ext} m$, the mean field energy $-m^2/2$ and the vertical free energy $f_\la(m)$: it reflects the analogous splitting of the hamiltonian in \eqref{102.1}.

\item $\lim_{L\to \infty}\frac{\log Z_{\ga, h_{\rm ext},L}^{\rm per}}{|\La|}=:P_\ga(h_{\rm ext})$ is the pressure of the system with hamiltonian $H^{\rm per}_{\ga, h_{\rm ext},L}$.  By ferromagnetic inequalities $P_\ga(h_{\rm ext})$ is for any $\ga>0$ a convex function of $h_{\rm ext}$  differentiable at any $h_{\rm ext}\ne 0$; its derivative is the magnetization which is equal to the average spin for the unique DLR measure at the given values of $h_{\rm ext}$ and $\ga$.  The  limits (by subsequences)
    of $P_\ga(h_{\rm ext})$ as $\ga\to 0$ are thus convex functions and Theorem \ref{thm101.1} proves that the limit actually exists (without going to subsequences) and identifies its value.

\item  The limit of $P_\ga(h_{\rm ext})$ as $\ga\to 0$  is the pressure
$P(h_{\rm ext})$ in the Lebowitz-Penrose limit  when first $|\La| \to \infty$
and then $\ga\to 0$. \eqref{102.3} shows that $P(h_{\rm ext})$ is the Legendre transform of
the function $[ - \frac{m^2}{2}  + f_\la (m)]$ and therefore the free energy $F_\la(m)$
defined as the Legendre transform of the  pressure $P(h_{\rm ext})$ is equal to the convex envelope:
 \begin{equation}
    \label{101.4}
F_\la(m)= CE  \Big\{ - \frac{m^2}{2}  + f_\la (m)\Big\}
   \end{equation}

\item
\eqref{101.4} is in agreement with the Lebowitz-Penrose result which states that the limit free energy density is the convex envelope of  $- \frac{m^2}{2}$ plus the free energy density of the reference system (i.e.\ without the Kac interaction).  The  Lebowitz-Penrose analysis however applies if the Kac interaction is non degenerate being
positive in two dimensional regions.  Our theorem shows that this is not necessary.

\item When $\la=0$,  $f_0 (m) = -S(m)$ where $S(m)$ is the entropy of the
free Ising model:
 \begin{equation}
    \label{101.4.00}
-S(m) = \frac{1+m}{2} \log \frac{1+m}{2} +  \frac{1-m}{2} \log \frac{1-m}{2}
\end{equation}
In this case $ - \frac{m^2}{2}  + f_0 (m)$ is strictly convex and coincides with $F_0(m)$.
When $\la>0$ we shall see that the function  $ - \frac{m^2}{2}  + f_\la (m)$ is no longer convex.  In fact the Taylor expansion of $-S(m)$ gives
  \begin{equation}
    \label{101.4.00.01}
 -S(m) = -\log 2 +\sum_{k=0}^\infty \frac 1{2k+1}\frac 1{2k+2} m^{2k+2}
 \end{equation}
and to leading orders in $\la$, $f_\la(m) = -S(m) - \la m^2$
so that
$\dis{ - \frac{m^2}{2}  + f_\la (m)}$
has a  double well shape with minima at $\pm \sqrt{6\la}$ and
$F_\la(m)$ is constant in the interval with endpoints $\pm \sqrt{6\la}$. The spontaneous magnetization is then  $\sqrt{6\la}$ to be compared with the value $\sqrt{3\la}$ found  in  \cite{FMMPV2} for the system with reduced vertical interactions, as described in the introduction.


\item The proof of Theorem \ref{thm101.1} does not require the use of
a non local free energy functional as the one introduced by Lebowitz-Penrose, but we have
nonetheless established
some basic ingredients for its derivation which will be used in
a future work to study
the large deviations.

\end{itemize}

\vskip.5cm

\subsection {Heuristic derivation of the mean field equation}
\label{Fsubsec.2.2}
Let $\langle \si(x,i)\rangle=:m$ be the average spin in an extremal, translation invariant DLR measure
at $\ga>0$.  Then
 \begin{equation}
    \label{101.5}
\langle \si(x,i)\rangle =\langle \tanh\{ \sum_y J_\ga(x,y) \si(y,i) +
\la[\si(x,i+1)+\si(x,i-1)] + h_{\rm ext}\}\rangle
   \end{equation}
By the law of large numbers $\sum_y J_\ga(x,y) (\si(y,i)-m) \to 0$
in the limit $\ga\to 0$,
recall that $\sum_y J_\ga(x,y)=1$.
In such an approximation \eqref{101.5} becomes
 \begin{equation}
    \label{101.6}
\langle \si(x,i)\rangle =\langle \tanh\{
\la[\si(x,i+1)+\si(x,i-1)] + h_{\rm ext}+m \}\rangle
   \end{equation}
This is the equation for the average spin in a $d=1$
Ising model with only nearest neighbor interactions of strength $\la$
and magnetic field $h_{\rm ext}+m$.  Then the average
spin is equal to the thermodynamic magnetization $m$ which is related to the free energy $f_\la(m)$ by a variational principle which gives
 \begin{equation}
    \label{101.7}
0= (h_{\rm ext}+m) - f'_\la(m)= h_{\rm ext} - \frac d{dm}\Big(-\frac{m^2}2+f_\la(m)\Big)
   \end{equation}
in agreement with   \eqref{102.3}--\eqref{101.4}.

\vskip.5cm

\subsection {Organization of the paper}

The proof of \eqref{102.3}
distinguishes
large and small values of the magnetization and consequently of the magnetic field.
Large  magnetic fields are studied in Section \ref{Fsec.3} by using the Dobrushin high temperature theory based on the
Vaserstein distance; the ``small'' values of the magnetic field are studied in the remaining sections. In Section \ref{Fsec.4} we give the scheme of   proof of Theorem \ref{thm101.1} which is based on
the following steps (each step being discussed in a subsection). (1) a coarse graining procedure a la Lebowitz-Penrose to reduce to a $d=1$ system with only nearest neighbor interactions and without Kac potentials.  The price is that we have a variational problem with multiple constraints as
we have fixed the magnetization
on each layer. (2)  We then consider the analogous problem in the multi gran canonical ensemble
where on each layer we have a magnetic field.  The partition function of such a system is studied in details using cluster expansion under the assumption that $\la$ is sufficiently small.
(3)  We prove an extended equivalence of ensembles so that the original variational problem with
constraints given by the magnetization is replaced by a variational problem where one needs to optimize on the value of the auxiliary magnetic fields.
(4) The proof proceeds by showing that the minimizer is made by magnetic fields equal to each other on each layer. (5) We then show  that Theorem \ref{thm101.1} follows.

In Section \ref{Fsec.5} we prove a combinatorial lemma which says that
any monomial $u_1^{n_1}\cdots u_k^{n_k}$ in the variables $u_1,..,u_k$, $n_1+\cdots n_k=N\ge 2$, can be written as a sum of one body monomials $p_iu_i^N$, $p_i$ positive numbers, plus a sum of terms proportional to gradients squared, $\sum_{i<j} d_{i,j} (u_i-u_j)^2$, the $d_{i,j}$ polynomials of degree $N-2$.
This is the essential property needed to prove that the minimizers are homogeneous.

The proofs of all the above statements are   reported in successive appendices.

\vskip1cm

\setcounter{equation}{0}

\section{Large magnetic fields}
\label{Fsec.3}
%

The   heuristic argument presented in Subsection \ref{Fsubsec.2.2} is made rigorous for large magnetic fields in the following  theorem.

\medskip

\begin{thm}
\label{thm101.2}
For any $\la>0$ let $h_{\rm ext}>0$ be so large that
 \begin{equation}
    \label{101.8}
r:=\frac{1+2\la} {\cosh^2( h_{\rm ext}-1-2\la)} < \frac 14
   \end{equation}
Then (i) for any $\ga>0$ there is a unique DLR measure (by ferromagnetic inequalities
the statement
actually holds for any $h_{\rm ext}\ne 0$); (ii)  its
magnetization $m_\ga$ (the average value of a spin) converges as $\ga\to 0$ to the value $m$ for which \eqref{101.7} holds; (iii)  $m$ is the unique minimizer of \eqref{102.3} and
 \begin{equation}
    \label{102.3.00}
\lim_{\ga\to 0}\lim_{L\to \infty}\frac{\log Z_{\ga, h_{\rm ext},L}^{\rm per}}{|\La|}= -
\Big\{ -h_{\rm ext} m + [ - \frac{m^2}{2}  + f_\la (m)]\Big\}
   \end{equation}

\end{thm}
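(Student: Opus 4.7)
The plan is to derive all three claims from Dobrushin's uniqueness theorem in its Vaserstein-coupling formulation, with the hypothesis $r<1/4$ playing the role of a uniform contraction of the single-site specification. First I would verify Dobrushin's condition. The conditional mean of $\si(x,i)$ given the remaining spins is $\tanh h_\si(x,i)$, where
\begin{equation}
h_\si(x,i)=\sum_y J_\ga(x,y)\si(y,i)+\la\bigl[\si(x,i+1)+\si(x,i-1)\bigr]+h_{\rm ext};
\end{equation}
under \eqref{101.8} this field is always positive and $|\tanh'|\le 1/\cosh^2(h_{\rm ext}-1-2\la)$ along the relevant interval. Flipping $\si(y,i)$, $y\ne x$, perturbs $h_\si(x,i)$ by $2J_\ga(x,y)$, and flipping $\si(x,i\pm 1)$ perturbs it by $2\la$; since the Vaserstein distance between two Bernoulli laws equals half the difference of their means, the Dobrushin matrix row-sum at $(x,i)$ is at most $(1+2\la)/\cosh^2(h_{\rm ext}-1-2\la)=r<1/4<1$. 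Dobrushin's theorem then yields (i) together with exponential decay of correlations whose rate depends only on $r$.

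For (ii) I would exploit this decay to control the Kac average $U_\ga(x,i):=\sum_y J_\ga(x,y)\si(y,i)$. Since $J_\ga$ has amplitude $O(\ga)$ on support of size $O(\ga^{-1})$ and horizontal covariances decay exponentially uniformly in $\ga$, one obtains $\V[U_\ga(x,i)]=O(\ga)$, so $U_\ga(x,i)=m_\ga+o(1)$ in $L^2$. Inserting this into the DLR identity $m_\ga=\E[\tanh h_\si(x,i)]$ and using that $\tanh$ is $1$-Lipschitz gives
\begin{equation}
m_\ga=\E\bigl[\tanh(m_\ga+\la[\si(x,i+1)+\si(x,i-1)]+h_{\rm ext})\bigr]+O(\sqrt\ga).
\end{equation}
A second Vaserstein-coupling argument identifies the joint law of $(\si(x,i-1),\si(x,i+1))$, as $\ga\to 0$, with that of two neighboring spins in the pure $d=1$ nearest-neighbor Ising chain at coupling $\la$ and field $h_{\rm ext}+m_\ga$, so that any accumulation point $m$ of $m_\ga$ satisfies $m=M_\la(h_{\rm ext}+m)$, where $M_\la=(f'_\la)^{-1}$ is the $d=1$ magnetization; this is equivalent to \eqref{101.7}.

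Uniqueness of the fixed point $m$, hence of the minimizer of the functional on the right of \eqref{102.3}, follows because $\Phi(m):=M_\la(h_{\rm ext}+m)$ is a strict contraction on $[-1,1]$ under \eqref{101.8}: the argument $h_{\rm ext}+m\ge h_{\rm ext}-1$ lies well inside the saturation regime of the $d=1$ magnetization, where a direct transfer-matrix computation shows $M'_\la<1$, and in fact the resulting Lipschitz constant is controlled by $r$. Finally (iii) follows from convexity: the finite-volume pressure $|\La|^{-1}\log Z_{\ga,h_{\rm ext},L}^{\rm per}$ is convex in $h_{\rm ext}$ with derivative $\langle\si(x,i)\rangle_{\ga,L}$ converging to $m(h_{\rm ext})$, so integrating the pointwise convergence of magnetizations starting from a large reference field (where both $P_\ga$ and the Legendre transform on the right of \eqref{102.3.00} coincide with the explicit ground-state value $h_{\rm ext}+\la+1/2+o(1)$) yields \eqref{102.3.00}. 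The step I expect to be the main obstacle is the rigorous decoupling in (ii): one must upgrade the qualitative exponential decay delivered by Dobrushin's theorem to a quantitative, $\ga$-uniform estimate strong enough to survive the Kac rescaling of $J_\ga$ and to justify replacing $U_\ga$ by $m_\ga$ inside $\tanh$.
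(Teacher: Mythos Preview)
Your verification of the Dobrushin condition in (i) matches the paper exactly, and your reduction in (ii) to an approximate fixed-point equation $m_\ga\approx M_\la(h_{\rm ext}+m_\ga)$ via a Vaserstein coupling to the pure $d=1$ chain is essentially the paper's argument at one endpoint. The genuine structural difference is in how the pieces are organized. The paper does not integrate in $h_{\rm ext}$; instead it fixes from the outset \emph{some} solution $m_{h_{\rm ext}}$ of $h_{\rm ext}+m=f_\la'(m)$, introduces the interpolating Hamiltonian
\[
H_{t,\ga,L}=tH^{\rm per}_{\ga,h_{\rm ext},L}+(1-t)H^0_L,\qquad H^0_L=H^{\rm vert}_L+H_{h_{\rm ext},L}-m_{h_{\rm ext}}\sum\si(x,i),
\]
and runs the Dobrushin--Vaserstein machinery uniformly in $t\in[0,1]$. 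This buys two things you have to pay for separately. First, the pressure: the identity $\log Z_\ga-\log Z^0=\int_0^1 E_{t,\ga,L}[H^0-H^\ga]\,dt$ together with $m_{t,\ga,L}\to m_{h_{\rm ext}}$ computes the limit pressure directly, with the additive constant supplied by the explicit $d=1$ pressure of $H^0$; no anchor at $h_*\to\infty$ is needed. Second, uniqueness: since $m_\ga\to m_{h_{\rm ext}}$ for \emph{any} a priori choice of solution, the fixed point is automatically unique, so the contraction claim $M_\la'(h_{\rm ext}+m)<1$ on all of $[-1,1]$ (which you assert is ``controlled by $r$'' but do not prove, and which is not entirely obvious for $\la>0$) is never invoked.

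Your approach is not wrong, but the two loose ends --- the quantitative bound on $M_\la'$ and the uniform-in-$\ga$ matching of pressures as $h_*\to\infty$ --- are exactly what the $t$-interpolation absorbs for free. The ``main obstacle'' you flag (upgrading Dobrushin decay to a $\ga$-uniform estimate on the Kac average) is handled in the paper by the path-sum bound \eqref{101.10}--\eqref{101.12}, which is $\ga$-uniform because the Dobrushin coefficients $r_{\ga,L}$ have row-sum $\le r$ independent of $\ga$; this part of your sketch is fine.
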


\medskip

As we have already mentioned the proof of Theorem
\ref{thm101.2}  is based on  the techniques introduced
by Dobrushin to prove his famous large temperature uniqueness theorem.
In this way we get uniqueness of the DLR measures and
exponential decay of correlations  for any fixed $\ga>0$.
We then use an interpolation procedure to derive the
phase diagram of the system which was introduced in
\cite{merola} to study the corrections
in $\ga$ to the mean field limit and thus prove Theorem
\ref{thm101.2}.  The details are reported in Appendix \ref{app.A}.

\vskip1cm

\setcounter{equation}{0}

\section{Theorem \ref{thm101.1}: scheme of   proof}   
\label{Fsec.4}

Theorem
\ref{thm101.1} is thus proved for large magnetic fields
and the remaining part of the paper deals
with the ``bounded'' magnetic fields.
To be precise we suppose hereafter $\la \in(0,1)$, but
further requests on the smallness of $\la$ will be asked later on, and restrict to
magnetic fields
 \begin{equation}
    \label{101.17}
h_{\rm ext}\in [0,h^*],\quad  h^*:=\frac{3} {\cosh^2( h^*-3)} = \frac 14
   \end{equation}
as Theorem
\ref{thm101.2} covers the values $h_{\rm ext} > h^*$.  By default in the sequel
$h_{\rm ext}\in [0,h^*]$ (the analysis of negative magnetic follows by symmetry).

The first step is to use   coarse graining as in Lebowitz-Penrose.

\vskip.5cm

\subsection{The Lebowitz-Penrose procedure}

In this subsection we use the Lebowitz-Penrose
procedure to reduce to a partition
function where the Kac potential is absent.  Let us
first recall the Lebowitz-Penrose result and consider
the partition function
$Z_{\ga, h_{\rm ext},L}^{\rm per}$
with the same short range, vertical interaction as in our case (the ``reference system'' in the Lebowitz-Penrose terminology) but with a Kac potential which has support
on a region of full dimension ($d=2$).  After coarse graining and exploiting
(i) the smoothness
of the Kac potential, (ii) the ferromagnetic nature of the Kac potential, Lebowitz-Penrose have proved that  $Z_{\ga, h_{\rm ext},L}^{\rm per}$ has
the same ``asymptotics'' as
\begin{equation*}
Z^{\rm max}_{\Delta}:=\max_{ m\in \mathcal M_\Delta} e^{(h_{\rm ext}m+ m^2/2) |\Delta|} \sum_{\si \in \{-1,1\}^{\Delta}} e^{-H^{\rm vert}_\ell(\si) } \mathbf 1_{\sum_{x\in \Delta} \si(x)=m|\Delta|}
   \end{equation*}
where $\Delta$ is a square of side $\ell$, $\ell$ the integer part of
$\ga^{-1/2}$, and
\[
\mathcal M_\Delta = \{ -1, -1+ \frac{2}{|\Delta|},\dots,1- \frac{2}{|\Delta|},  1 \}
\]
the set of all possible values of the empirical spin magnetization in   $\Delta$.

By  same ``asymptotics'' we mean that
 \begin{equation}
    \label{F4.1}
 \lim_{\ga\to 0}\lim_{|\La|\to \infty} \frac 1{|\La|} \log Z_{\ga, h_{\rm ext},L}^{\rm per}=
 \lim_{|\Delta|\to \infty} \frac 1{|\Delta|}\log Z^{\rm max}_{\Delta}
   \end{equation}
The same procedure works in our case as well leading to
Theorem \ref{Fthm4.1} below whose proof is given in Appendix \ref{app.B}.

\medskip

\begin{thm}
\label{Fthm4.1}
Let $\Delta$ and $\ell$ be as above,  $\dis{\mathcal M_\ell = \{ -1, -1+ \frac{2}{\ell},\dots,1- \frac{2}{\ell},  1 \}}$,
$m_\Delta(x,i)$, $(x,i)\in \Delta$, a function with values in $\mathcal M_\ell$ which depends only on
$i$,
\begin{equation}
    \label{F4.3}
 \phi_\ell(m_\Delta) = -\frac {1}{|\Delta|}\log \sum_{\si\in \{-1,1\}^\Delta}
  e^{-H^{{\rm vert}}_{\ell}(\si)}\;
  \mathbf 1_{\sum_x \si(x,i) =  m_\Delta(\cdot,i)\;\text{for all $i$}}
   \end{equation}
Then there is $m_+ \in (0,1)$ so that
\begin{equation}
    \label{F4.5}
 \lim_{\ga\to 0}\lim_{L\to \infty} \frac 1{|\La|}
 \log Z_{\ga, h_{\rm ext},L}^{\rm per} =
\lim_{|\Delta|\to \infty} \frac 1{|\Delta|} \log
 Z^{\rm max}_{\Delta}
   \end{equation}
where
\begin{equation}
    \label{F4.4}
\log  Z^{\rm max}_{\Delta}:= \max_{ m_\Delta: |m_\Delta(x,i)| \le m_+} {\sum_{(x,i)\in \Delta}
 \{\frac {m(x,i)^2}2 +
 h_{\rm ext}m(x,i)-\phi_\ell(m_{\Delta})\} }
   \end{equation}

\end{thm}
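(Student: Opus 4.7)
The approach is the Lebowitz--Penrose block coarse graining, adapted to the fact that the Kac interaction is codimension-one, so that the natural coarse-grained variable is a per-layer, rather than a per-block, magnetisation.

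First, I would partition $\La$ into squares $\Delta$ of side $\ell=\lfloor\ga^{-1/2}\rfloor$ and parameterise each configuration by the empirical layer magnetisations $m(\Delta,i)=\ell^{-1}\sum_{x\in\Delta_i}\si(x,i)\in\mathcal M_\ell$, where $\Delta_i$ is the horizontal slice of $\Delta$ at height $i$. Since $\ga\ell\le\ga^{1/2}\to 0$, the smoothness of $J$ yields that $J_\ga(x,y)$ is essentially constant on pairs of boxes at the same height, so the Kac Hamiltonian equals, up to a pointwise error $O(\ga^{1/2})$ that integrates to $o(|\La|)$, the quadratic form $-\tfrac12\sum_{i,\Delta,\Delta'}\ell^2K_\ga(\Delta,\Delta')\,m(\Delta,i)m(\Delta',i)$, with block-averaged kernel $K_\ga$ satisfying $\sum_{\Delta'}\ell K_\ga(\Delta,\Delta')=1$. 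The identity $2ab=a^2+b^2-(a-b)^2$ rewrites this as $-\tfrac{\ell}{2}\sum_{i,\Delta}m(\Delta,i)^2+\tfrac14\sum_{i,\Delta,\Delta'}\ell^2 K_\ga[m(\Delta,i)-m(\Delta',i)]^2$. At fixed profile the residual spin sum involves only the vertical nearest-neighbour interaction and the external field; because $H^{\rm vert}$ decouples over columns, it factorises across boxes within each horizontal strip of height $\ell$, and across strips at a cost of $O(L)$ per interface (negligible versus $|\La|=L^2$). This produces the factor $\prod_\Delta e^{-|\Delta|\phi_\ell(m_\Delta)}$ of \eqref{F4.3}.

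For the upper bound on $\log Z^{\rm per}_{\ga,h_{\rm ext},L}$ one discards the nonnegative gradient term and applies Laplace's method: the number of admissible profiles is at most $|\mathcal M_\ell|^{|\La|/\ell}$, whose logarithm is $o(|\La|)$, so the sum over profiles is dominated by its maximum, matching the right-hand side of \eqref{F4.4}. For the lower bound one exhibits a single profile uniform across boxes (all $m_\Delta$ equal to a common near-maximiser $m_\Delta^\star$ of the per-box functional), for which the gradient term vanishes identically; its contribution is already $\exp\{(|\La|/|\Delta|)\log Z^{\max}_\Delta\}$, which matches the upper bound up to the coarse-graining and strip errors noted above. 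The cutoff $|m|\le m_+$ is introduced by observing that $\phi_\ell(m)\to+\infty$ as $|m|\to 1$, uniformly in $\ell$: one picks $m_+<1$ such that, whenever some $|m(\Delta,i)|>m_+$, the one-box integrand is strictly smaller than its value at, say, $m\equiv 0$.

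The main technical obstacle is securing all the small errors simultaneously and, in particular, proving the uniform-in-$\ell$ control of $\phi_\ell$ needed both for the existence of $m_+$ and for the Lipschitz dependence of $\phi_\ell$ on its $\ell$ layer constraints (which guarantees that rounding magnetisations to $\mathcal M_\ell$ costs only $o(|\La|)$ in the variational problem). For small $\la$ this reduces, via cluster-expansion bounds on the one-dimensional vertical Ising model, to the statement that shifting one layer's constraint by $O(1/\ell)$ changes $\phi_\ell$ by $O(1/\ell)$. With these ingredients in hand, matching upper and lower bounds and sending $L\to\infty$ then $\ga\to 0$ yields \eqref{F4.5}.
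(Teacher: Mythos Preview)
Your proposal is correct and follows essentially the same Lebowitz--Penrose route as the paper: coarse-grain at scale $\ell=\lfloor\ga^{-1/2}\rfloor$, use smoothness of $J$ to replace spins by block magnetisations with $O(\ga^{1/2}|\La|)$ error, factorise the vertical interaction over horizontal strips at cost $O(L)$ per interface, rewrite the Kac quadratic form via $2ab=a^2+b^2-(a-b)^2$, replace the profile sum by its maximum via the cardinality bound $|\mathcal M_\ell|^{|\La|/\ell}=e^{o(|\La|)}$, and observe that the maximiser is the uniform profile because the gradient term is nonnegative.

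Two minor points where the paper is more economical. First, you invoke cluster expansion and Lipschitz dependence of $\phi_\ell$ on its layer constraints to handle rounding to $\mathcal M_\ell$; the paper does not need this, because the empirical magnetisations already lie in $\mathcal M_\ell$ and the variational problem over the finite set $\mathcal M_\ell^\ell$ attains its maximum exactly, so the identity $\frac{1}{|\La|}\log Z^{\max}_{\ga,L,\ell}=\frac{1}{|\Delta|}\log Z^{\max}_\Delta$ holds without any continuity argument. Second, for the cutoff $m_+$ the paper does not appeal to $\phi_\ell\to+\infty$ but gives a direct one-layer entropy comparison: if on some layer $|m_i|>m_+$, one shows (using $|h_i(x)|\le 2$ and concavity of $m\mapsto m^2/2+h^*m+S(m)$ near its maximiser $m^*$) that replacing $m_i$ by $m^*$ strictly increases the integrand, for $\la$ small enough. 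This is elementary and avoids any uniform-in-$\ell$ estimate on $\phi_\ell$.
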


\medskip
\eqref{F4.1} and \eqref{F4.5} are identical but the meaning of $Z^{\rm max}_{\Delta}$
is different in the two cases.  In \eqref{F4.1} it is a max over a scalar $m$
of the canonical partition function with magnetization $m$.  By classical results on the thermodynamic limit this is related to the free energy of the system and one gets a formula as on the right hand side of \eqref{102.3}.  Thus one has essentially finished once he gets
\eqref{F4.1}, in our case instead \eqref{F4.5} is just the beginning of the work.
In fact the variational problem behind \eqref{F4.4} involves a vector $m_\Delta$ in a space whose dimensions diverge in the thermodynamic limit.  Moreover the relation between
$\phi_\ell(m_\Delta)$ and the $d=1$ free energy $f_\la(m)$ which appears in \eqref{102.3} is not evident due to the multi-canonical constraint of fixing the magnetization on each layer.

 \medskip

The picture looks simpler if we replace the multi-canonical ensemble
by a gran canonical ensemble with auxiliary magnetic fields  on each layer:
let then
$\und h=(h_1,..,h_\ell)$ and
 \begin{equation}
    \label{F4.6}
 Z_{\Delta,\und h} = \sum_{\si\in\{-1,1\}^\Delta} e^{-H^{{\rm vert}}_{\ell}(\si)
 -\sum_{(x,i)\in \Delta}\ h_i  \si(x,i) }
   \end{equation}
The goal is to rewrite  $\phi_\ell(m_\Delta)$ in terms $\log Z_{\Delta,\und h}$ thus proving
an extended version of the equivalence of ensembles theorem. The
first step in this direction is to get a full understanding of
$Z_{\Delta,\und h}$ as provided by the cluster expansion.

\vskip.5cm

\subsection{Cluster expansion}

We first observe that
   \begin{eqnarray}
    \label{F4.10}
 &&\log Z_{\Delta,\und h} = \ell \log Z_{\ell,\und h }\nn\\&&
 Z_{\ell,\und h } = \sum_{\si \in \{-1,1\}^{[1,\ell]}} e^{ \sum_{i=1}^\ell\{\la \si(i)\si(i+1)+h_i\si(i)\}},\quad
 \si(\ell+1)=\si(1)
 \end{eqnarray}
with $Z_{\ell,\und h }$ the partition function of the $d=1$ Ising model with nearest neighbor interactions of strength $\la$ and space dependent magnetic field $\und h$.
We  define
 \begin{eqnarray}
    \label{103.1}
 Z^*_{\ell,\und h }&:=&Z_{\ell,\und h }\;\{\prod_{i=1}^\ell ( e^{h_i}+e^{-h_i})\}^{-1}\\
    \label{103.2}
 u_i &:=& \tanh\{ h_i\}
   \end{eqnarray}
In   Appendix \ref{app.C} we   shall suppose $\la$ small  and use cluster expansion to prove:

\medskip

\begin{thm}
\label{thm103.1}
For any $\la>0$ small enough 
 \begin{equation}
    \label{103.3}
\log Z^*_{\ell,\und h }= \sum_{N(\cdot)}
A_{N(\cdot)} u ^{N(\cdot)}
   \end{equation}
where $N(\cdot): [1,\ell]\to \mathbb N$ and
    \begin{equation}
    \label{103.4}
 u ^{N(\cdot)} =\prod_{i=1}^\ell u_i^{N(i)}
   \end{equation}
The coefficients $A_{N(\cdot)}$ satisfy the following bounds. Call
 \begin{equation}
    \label{103.7}
e^b:= \la^{-5/12},\quad |N(\cdot)|= \sum_x N(x),\quad \|N(\cdot)\| = \max\{|N(\cdot)|,R(N(\cdot))\}
   \end{equation}
where $R(N(\cdot))$ denotes the cardinality of the support of $N(\cdot)$ (i.e.\
the smallest interval
which contains the set  $\{ i: N(i)>0 \}$). Then
for any  $i\in [1,\ell]$ and any positive integer $M$ 
 \begin{equation}
    \label{103.8}
\sum_{N(\cdot):N(i)>0,\|N(\cdot)\|\ge M} |A_{N(\cdot)}| \le e^{-  b  M }
   \end{equation}
Moreover
$A_{N(\cdot)}=0$ if $|N(\cdot)|$ is odd and there are
coefficients $\alpha_{k}$, $k>0$,  and $c$ so that
  \begin{equation}
    \label{103.5}
\sum_{N(\cdot): |N(\cdot)|=2}A_{N(\cdot)}u^{N(\cdot)}=\sum_{i<j }   \alpha_{j-i} u_iu_j
   \end{equation}
 \begin{eqnarray}
    \label{103.6}
&&|\alpha_{1} - \la| \le c \la e^{-2b},\quad
 |\alpha_{j-i}| \le c \la^{|i-j|}e^{|i-j|}
   \end{eqnarray}

\end{thm}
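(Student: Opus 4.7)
The plan is to derive \eqref{103.3}--\eqref{103.6} via a high-temperature expansion of the one-dimensional ring partition function followed by a Kotecky--Preiss cluster expansion in $\la$. The first observation is that the normalization in \eqref{103.1} is chosen precisely so that $Z^*_{\ell,\und h}=\E^u[e^{\la\sum_i\si(i)\si(i+1)}]$, where $\E^u$ is the expectation under the product Bernoulli measure on $\{-1,+1\}^\ell$ with $\E^u[\si(i)]=u_i$. Writing $e^{\la\si\si'}=\cosh\la\,(1+t\,\si\si')$ with $t=\tanh\la$, expanding the product over bonds, and using $\E^u[\si(j)^n]=u_j$ for $n$ odd and $=1$ for $n$ even, gives
\begin{equation*}
Z^*_{\ell,\und h}=(\cosh\la)^\ell\sum_{S\subset\text{bonds}}t^{|S|}\prod_j u_j^{\mathbf 1[n_j(S)\text{ odd}]},
\end{equation*}
where $n_j(S)$ is the number of bonds of $S$ incident to $j$.

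On the ring $n_j(S)\in\{0,1,2\}$ and is odd exactly at the two endpoints of each maximal interval of $S$; calling such an interval a polymer, a polymer $P$ with $r\ge 1$ bonds running from site $i$ to $i+r$ has activity $\zeta(P)=t^{r}u_i u_{i+r}$, depending only on its two endpoints, and two polymers are compatible iff their site supports are disjoint. For $\la$ small the Kotecky--Preiss criterion is straightforward to verify with weight $a(P)=b|P|$: since at most $r+r'+1$ polymers of length $r'$ overlap a given polymer of length $r$, the criterion reduces to a geometric series in $te^b$, which is small provided $e^b<\la^{-1}$, so the choice $e^b=\la^{-5/12}$ leaves ample margin. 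The standard KP theorem then yields the absolutely convergent expansion
\begin{equation*}
\log Z^*_{\ell,\und h}=\ell\log\cosh\la+\sum_{C}\Phi(C)\prod_{P\in C}\zeta(P),
\end{equation*}
with $\Phi(C)$ the usual Ursell coefficient and the sum over clusters of polymers.

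The conclusions are then read off as follows. Because each polymer contributes exactly two $u$-factors, a cluster of $k$ polymers produces a monomial $u^{N(\cdot)}$ with $|N(\cdot)|=2k$, so $A_{N(\cdot)}=0$ whenever $|N(\cdot)|$ is odd. The bound \eqref{103.8} is the KP estimate applied to clusters pinned at site $i$ with $\|N(\cdot)\|\ge M$: both $|N(\cdot)|\ge M$ and $R(N(\cdot))\ge M$ force $\sum_{P\in C}|P|\ge M/2$ (the second via a simple interval-covering argument, since $k$ connected subintervals covering a range $R$ satisfy $\sum|P|\ge R+1-k\ge R/2$), so the product of activities is at most $t^{M/2}$ and the cluster entropy is absorbed by the weight $e^{a(P)}$, leaving the net $e^{-bM}$ decay with $b$ as declared. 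For \eqref{103.5}--\eqref{103.6}, the key point is that two or more polymers already give $|N(\cdot)|\ge 4$, so only single-polymer clusters ($\Phi\equiv 1$) contribute to $|N|=2$; this gives $\alpha_{j-i}=t^{|j-i|}=(\tanh\la)^{|j-i|}$, from which \eqref{103.6} is immediate using $|\tanh\la-\la|=O(\la^{3})$ and $\tanh\la\le\la$.

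The main technical obstacle is the careful bookkeeping linking $\|N(\cdot)\|$ to $\sum_{P\in C}|P|$ and arranging the KP estimate so that the exponent $b$ with $e^b=\la^{-5/12}$ emerges cleanly; the parity statement and the explicit two-body form \eqref{103.5} are essentially immediate consequences of the ``endpoints only'' structure of the polymer activity $\zeta(P)=t^{|P|}u_iu_{i+|P|}$.
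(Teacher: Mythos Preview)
Your high-temperature (random-current / $\tanh$) expansion is a genuinely different and in several ways cleaner route than the paper's. The paper does not write $e^{\la\si\si'}=\cosh\la\,(1+t\si\si')$ but instead Mayer-expands $e^{\la\si\si'}=1+[e^{\la\si\si'}-1]$ and then further splits each excited bond as $\sinh\la\cdot\si\si'+(\cosh\la-1)$, so that a polymer is an interval of bonds \emph{together with an $X/S$ labeling of each bond}. In the paper's model a polymer can therefore carry any even number $|X|\in\{0,2,\dots\}$ of $u$-factors, whereas in yours every polymer carries exactly two. This is why you get $\alpha_{j-i}=(\tanh\la)^{|j-i|}$ on the nose (so \eqref{103.6} is immediate), while in the paper $\alpha_{j-i}$ receives corrections from multi-polymer clusters containing $S$-only polymers and must be estimated separately. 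Your parity argument and the identification of the $|N|=2$ sector are likewise simpler.

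The only point that needs tightening is the passage from $\sum_{P\in C}|P|\ge \|N(\cdot)\|/2$ to the decay $e^{-bM}$ in \eqref{103.8}. Verifying KP with weight $a(P)=b|P|$ does not by itself let you extract an $e^{-bM}$ factor; you must reserve part of the weight for that purpose. Concretely, verify KP with $a(P)=(1+2b)|P|$ (the condition is still a geometric series in $t\,e^{1+2b}\sim e\,\la^{1/6}\ll 1$), and then pull out $e^{-2b\sum|P|}\le e^{-bM}$ from the activities before applying the KP bound to the remaining $e^{|P|}$-weighted sum. This is exactly parallel to what the paper does in its Theorem on cluster expansion (their KP condition carries the extra $e^{b|\Ga|}$ precisely so that $e^{-b\|I\|}$ can be extracted, and their $\|I\|\ge\|N(\cdot)\|$ has no factor $2$ because their $|\Ga|=|C|+1$ already counts sites). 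Your ``ample margin'' remark shows you see the slack, but the write-up should make the split explicit. A secondary cosmetic point: on the ring there is also a polymer of length $\ell-|j-i|$ joining $i$ to $j$, and the full-ring polymer with no $u$-factors; both are harmless for the bounds but should be mentioned.
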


\vskip.5cm

\subsection{Equivalence of ensembles}

The magnetizations associated to $Z_{\ell,\und h }$, as defined in \eqref{F4.10}, are $\und m=(m_1,..,m_\ell)$
\begin{equation}
    \label{1.2}
m_i = 
\frac{\partial}{\partial h_i} \log Z_{\ell,\und h }
   \end{equation}
which are thus expressed via $\und h$ in terms of $(u_1,..,u_\ell)$.
We write more explicitly \eqref{1.2} as
 \begin{equation}
    \label{4.1}
m_{i} = u_i + \Psi_i(u),\quad
\Psi_i(u) = (1-u_i^2)
\sum_{N(\cdot): N(i)>0}N(i)
A_{N(\cdot)} u ^{N^{(i)}(\cdot)}
   \end{equation}
with   $N^{(i)}(k)= N(k)$ for $k\ne i$ and $N^{(i)}(i)= N(i)-1$.
In Appendix \ref{app.D} we will prove that there is a one to one correspondence
between $\und u$ and $\und m$ so that we may write
$\und u$ as a function of $\und m$.

\medskip

\begin{thm}
\label{Fthm5.6}

For any $\la>0$ small enough the following holds.  For any $\und m$ such that $|m_i| \le m_+$ ($m_+$ as in \eqref{F4.4}) there is a unique $\und h$
such that \eqref{1.2} holds for any $i=1,..,\ell$ and there exists $h_+>0$ so that all the components of $\und h$ are bounded by $h_+$.


\end{thm}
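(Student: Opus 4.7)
The plan is to recast \eqref{4.1} as a fixed point equation $\underline u = \underline m - \underline\Psi(\underline u)$ for $\underline u=(u_1,\dots,u_\ell)$ and solve it by a Banach contraction argument in the sup-norm on $\mathbb R^\ell$. The smallness of $\underline\Psi$ and its derivatives will be extracted from the cluster expansion bound \eqref{103.8}. Once $\underline u$ is found with $\|\underline u\|_\infty<1$, the assignment $h_i:=\tanh^{-1}(u_i)$ produces the unique $\underline h$ (since $\tanh$ is a bijection $\mathbb R\to(-1,1)$), and a uniform bound $\|\underline u\|_\infty\le u_+<1$ yields $|h_i|\le h_+:=\tanh^{-1}(u_+)<\infty$.

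Fix $u_+:=(1+m_+)/2\in(m_+,1)$ and let $B_+:=\{\underline u\in\mathbb R^\ell:\max_i|u_i|\le u_+\}$, equipped with $\|\cdot\|_\infty$. Define $T(\underline u):=\underline m-\underline\Psi(\underline u)$. Using $|u_j|\le 1$ so that $|u^{N^{(i)}(\cdot)}|\le 1$, $(1-u_i^2)\le 1$, and $N(i)\le\|N(\cdot)\|$, Theorem \ref{thm103.1} gives
\begin{equation*}
|\Psi_i(\underline u)|\le\sum_{N(\cdot):\,N(i)>0}N(i)|A_{N(\cdot)}|\le\sum_{M\ge 2}M\!\!\sum_{\substack{\|N(\cdot)\|=M\\ N(i)>0}}\!\!|A_{N(\cdot)}|\le\sum_{M\ge 2}Me^{-bM}.
\end{equation*}
Since $e^{-b}=\la^{5/12}$, the right-hand side is $O(\la^{5/6})$. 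Hence $\|T(\underline u)\|_\infty\le m_++O(\la^{5/6})\le u_+$ for $\la$ small enough, so $T(B_+)\subset B_+$.

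For the contraction property, differentiate \eqref{4.1}: off-diagonal, $\partial_j\Psi_i$ produces terms $N(i)N(j)|A_{N(\cdot)}|$ with $N(i),N(j)>0$, while for $j=i$ a boundary contribution from $\partial_{u_i}(1-u_i^2)=-2u_i$ must be added. Bounding $N(i)N(j)\le\|N(\cdot)\|^2$ and summing by parts using \eqref{103.8} yields
\begin{equation*}
\sup_{\underline v\in B_+}\sum_{j=1}^\ell\Big|\frac{\partial\Psi_i}{\partial u_j}(\underline v)\Big|\le C\sum_{M\ge 2}M^2 e^{-bM}=O(\la^{5/6}),
\end{equation*}
so by the mean value theorem $\|T(\underline u)-T(\underline u')\|_\infty\le\tfrac12\|\underline u-\underline u'\|_\infty$ for $\la$ small enough. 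The Banach fixed point theorem then yields a unique $\underline u\in B_+$ with $T(\underline u)=\underline u$, and hence a unique $\underline h$ with $|h_i|\le h_+$.

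To upgrade uniqueness from $B_+$ to the full cube $\{|u_i|<1\}^\ell$, observe that the bound on $|\Psi_i|$ used only $|u_j|\le 1$; hence any other solution $\underline u'$ of \eqref{4.1} with $|u'_i|<1$ automatically satisfies $\|\underline u'\|_\infty\le m_++O(\la^{5/6})\le u_+$, lies in $B_+$, and must coincide with $\underline u$. The main obstacle is the combinatorial-analytic step of extracting from the geometric-in-$\|N\|$ estimate \eqref{103.8} bounds for the weighted sums $\sum_N N(i)|A_N|$ and $\sum_N N(i)N(j)|A_N|$; the extra polynomial factors $M$ and $M^2$ are absorbed because $b=\tfrac{5}{12}\log\la^{-1}\to\infty$ as $\la\to 0$. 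Everything else reduces to routine contraction bookkeeping.
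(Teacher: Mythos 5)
Your proof is correct, and it takes a genuinely different route from the paper's. The paper proves existence by a homotopy argument: it introduces the interpolated equation $m = u(t) + t\Psi(u(t))$, $t\in[0,1]$, starting at $u(0)=m$, converts it into the ODE $\dot u(t) = -(1+t\nabla\Psi(u(t)))^{-1}\Psi(u(t))$, and extends the solution to $t=1$ by ruling out escape from $\{\|u\|_\infty<1\}$ via the boundary identity $\Psi_i(u)=0$ when $|u_i|=1$ (\eqref{FD.2}). Uniqueness and the a priori bound $\|u\|\le\|m\|+c_1\la$ are then derived separately, using \eqref{FD.1}. You instead phrase \eqref{4.1} directly as the fixed-point equation $u=m-\Psi(u)$ and run a Banach contraction on the ball $B_+=\{\|u\|_\infty\le u_+\}$ with $u_+=(1+m_+)/2$ strictly inside the cube, which handles existence, uniqueness, and the uniform bound in one stroke. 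The key inputs are the same in both arguments -- smallness of $\Psi$ and of $\nabla\Psi$ as $\la\to 0$, extracted from \eqref{103.8} -- but you never need the boundary property \eqref{FD.2}: choosing $B_+$ strictly inside the unit cube sidesteps the boundary issue entirely, and the a priori estimate $\|u\|\le m_+ + O(\la^{5/6})$ both proves self-mapping and extends uniqueness to the full cube. The trade-off is that your derivative bound $O(\la^{5/6})$ is slightly cruder than the paper's $O(\la)$ from Corollary \ref{lemma5.1}, but both vanish as $\la\to 0$, which is all that the argument needs; the paper's finer estimate is motivated by its later use in Appendix \ref{FApp.H} rather than by this theorem.

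One point worth making explicit in a write-up: the step from \eqref{103.8}, which controls tail sums over $\|N(\cdot)\|\ge M$, to the weighted sums $\sum_N N(i)|A_N|$ and $\sum_N N(i)|N(\cdot)||A_N|$ deserves a line of Abel summation, since a reader might object that \eqref{103.8} bounds tails, not shells. Writing $a_M=\sum_{\|N\|=M,\,N(i)>0}|A_N|$ and using $\sum_{M'\ge M}a_{M'}\le e^{-bM}$, one has $\sum_M M^k a_M = \sum_{k_1,\dots,k_k\ge 1}\sum_{M\ge\max(2,k_1,\dots,k_k)}a_M \le \sum_{k_1,\dots,k_k}e^{-b\max(2,k_1,\dots,k_k)} = O(e^{-2b})$ for each fixed $k$, which is the $O(\la^{5/6})$ you claim. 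Your sketch is correct but compresses this, and it is the only place where the argument could be read as glossing.
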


\medskip

\medskip

\begin{thm}
\label{thm101.4}
For any $\la>0$ small enough the following holds. For any $m_\Delta=\und m=(m_1,..,m_\ell)$, $|m_i| \le m_+$, $i=1,..\ell$, call $\und h=(h_1,..,h_\ell)$
the magnetic fields associated to $\und m$ via Theorem \ref{Fthm5.6}, then
for any $a \in (\frac 12, 1)$
there is $c$ so that
 \begin{equation}
    \label{101.24}
   \Big|  \frac{1}{\ell^2}\log \{e^{-\ell \sum_i h_i m_i}Z_{\Delta,\und h}\}
+ \phi_\ell ( m_\Delta)\Big| \le c \ell^{a-1}
   \end{equation}
where $\phi_\ell (m_\Delta)$ is defined in \eqref{F4.3} and $Z_{\Delta,\und h}$
in \eqref{F4.6}.

\end{thm}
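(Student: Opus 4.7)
My plan is to rewrite Theorem \ref{thm101.4} as a local limit statement for the vector of empirical layer magnetizations. Since the vertical Hamiltonian has no horizontal couplings, the $\ell$ columns of $\Delta$ are independent under the Gibbs measure $\mu_{\und h}$ associated to $Z_{\Delta,\und h}$, each distributed as a single $d=1$ Ising chain with space-dependent field $\und h$. Setting $\tilde m_i(\si):=\ell^{-1}\sum_{x=1}^\ell \si(x,i)$ for the empirical magnetization on layer $i$, the definition \eqref{F4.3} of $\phi_\ell$ gives
\[
Q_\ell(\und m)\;:=\;\sum_\si e^{-H^{{\rm vert}}_\ell(\si)}\,\mathbf 1_{\tilde m(\si)=\und m}\;=\;e^{-\ell^2\phi_\ell(\und m)},
\]
and partitioning $Z_{\Delta,\und h}$ according to the value of $\tilde m$ yields the exact identity
\[
e^{-\ell\sum_i h_i m_i}\,Z_{\Delta,\und h}\;=\;\frac{Q_\ell(\und m)}{\mu_{\und h}\bigl(\tilde m=\und m\bigr)}.
\]

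Dividing by $\ell^2$ and taking logarithms, \eqref{101.24} becomes equivalent to
\[
0\;\le\;-\,\frac{1}{\ell^2}\log\mu_{\und h}\bigl(\tilde m=\und m\bigr)\;\le\;c\,\ell^{a-1},
\]
the left inequality being automatic since $\mu_{\und h}(\tilde m=\und m)\in(0,1]$. Thus everything reduces to the lower tail bound $\mu_{\und h}(\tilde m=\und m)\ge\exp(-c\,\ell^{a+1})$. By the choice of $\und h$ in Theorem \ref{Fthm5.6}, $\mathbb E_{\mu_{\und h}}[\tilde m_i]=m_i$ for every $i$, so $\und m$ is exactly the mean of the $\ell$-dimensional empirical vector $\tilde m$, itself the arithmetic average of $\ell$ i.i.d.\ column vectors. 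I would prove the bound via a multivariate local central limit theorem on the lattice $(2/\ell)\Z^\ell\cap[-1,1]^\ell$, which provides
\[
\mu_{\und h}\bigl(\tilde m=\und m\bigr)\;\ge\;\frac{c}{\ell^{\ell/2}\sqrt{\det\Sigma_{\und h}}}
\]
with $\Sigma_{\und h}$ the covariance matrix of a single column. As soon as $\det\Sigma_{\und h}\ge e^{-c\ell}$ uniformly in $\ell$ and $\und h$, taking the negative logarithm gives $-\log\mu_{\und h}(\tilde m=\und m)\le\tfrac{\ell}{2}\log\ell+O(\ell)=o(\ell^{a+1})$ for every $a>0$, as required.

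The uniform non-degeneracy of $\Sigma_{\und h}$ is where Theorem \ref{thm103.1} enters: from \eqref{103.3}--\eqref{103.6}, for $\la$ small the connected two-point function $\Sigma_{ij}=\langle\si(i)\si(j)\rangle^c$ decays exponentially in $|i-j|$, so $\Sigma_{\und h}$ is a small perturbation of the diagonal matrix with entries $1-u_i^2\ge 1-\tanh^2 h_+>0$ and Gershgorin's theorem produces a spectral gap uniform in $\ell$ and in $\und h$ with $\|\und h\|_\infty\le h_+$. The main technical obstacle is the local CLT itself, since the dimension $d=\ell$ coincides with the sample size $n=\ell$ and the classical fixed-dimension estimates do not apply verbatim. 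The natural route is a direct Fourier inversion of the product characteristic function on $[-\pi,\pi]^\ell$, isolating a small neighbourhood of the origin where a quadratic saddle-point estimate produces the Gaussian prefactor $\ell^{-\ell/2}(\det\Sigma_{\und h})^{-1/2}$, and controlling the complementary region by exploiting both the $\pm 1$ aperiodicity of the single-column law and the uniform spectral gap of $\Sigma_{\und h}$. The dimension-dependent bookkeeping of the Fourier error terms is the delicate step and is presumably what forces the mild restriction $a>\tfrac12$ in the statement.
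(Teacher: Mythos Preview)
Your reduction is exactly right and coincides with the paper's: both arguments rewrite the claim as the lower bound
\[
\mu_{\und h}\bigl(\tilde m=\und m\bigr)\;\ge\;e^{-c\,\ell^{1+a}}.
\]
The divergence is in how this lower bound is obtained.

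You propose an $\ell$-dimensional local CLT for the average of $\ell$ i.i.d.\ column vectors. As you yourself note, this sits squarely in the regime where the dimension equals the sample size, and the standard local limit theorems do not apply. Your sketch (Fourier inversion, saddle point near the origin, Gershgorin for the covariance) is plausible in spirit, but the ``dimension-dependent bookkeeping of the Fourier error terms'' that you defer is not a technicality: in this regime the Gaussian heuristic $\ell^{-\ell/2}(\det\Sigma)^{-1/2}$ can fail by factors that are themselves exponential in $\ell$, and no uniform bound of the type you need is established. So as written the argument has a genuine gap at the step you flag. Incidentally, if your local CLT did hold with the stated prefactor, it would give $-\log\mu_{\und h}(\tilde m=\und m)=O(\ell\log\ell)$, i.e.\ an error $O(\ell^{-1}\log\ell)$ in \eqref{101.24}, far stronger than $\ell^{a-1}$; so your guess that the restriction $a>\tfrac12$ arises from Fourier remainders is off the mark.

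The paper avoids the high-dimensional local CLT entirely with a much cruder but completely elementary device. It splits the columns into a bulk of size $\ell-\ell^a$ and a reservoir of size $\ell^a$. On the bulk, for each fixed layer $i$ the spins $\si(x,i)$ are i.i.d.\ across $x$ (columns are independent), so a one-dimensional Gaussian tail bound gives, for $a'\in(\tfrac12,a)$,
\[
\mu_{\und h}\Bigl[\Bigl|\sum_{x\le \ell-\ell^a}(\si(x,i)-m_i)\Bigr|>\ell^{a'}\Bigr]\;\le\;e^{-b\,\ell^{2a'-1}},
\]
and a union bound over the $\ell$ layers handles all of them at once. Conditionally on this good event, one can choose the $\ell^{a}$ reservoir spins on each layer (independently across layers) so as to hit the exact constraint $\sum_x\si(x,i)=\ell m_i$; the conditional probability of any fixed configuration of the $\ell\cdot\ell^a$ reservoir spins is bounded below by $e^{-4\la\,\ell^{1+a}}2^{-\ell^{1+a}}$. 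Combining the two pieces gives $\mu_{\und h}(\tilde m=\und m)\ge (1-\ell e^{-b\ell^{2a'-1}})\,e^{-c\ell^{1+a}}$, which is exactly the required bound. Here the restriction $a>\tfrac12$ is needed so that there is room for $a'\in(\tfrac12,a)$ in the Gaussian tail, and the exponent $\ell^{1+a}$---hence the final rate $\ell^{a-1}$---comes transparently from the crude reservoir count. This bulk/reservoir trick sidesteps the dimension-versus-sample-size obstacle completely and requires nothing beyond one-dimensional concentration.
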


As a consequence:

\medskip

\begin{thm}
\label{Fthm4.2}
Let $Z^{\rm max}_{\Delta}$ be as in \eqref{F4.4}, then
   \begin{equation}
    \label{F4.7}
\lim_{|\Delta|\to \infty} \frac 1{|\Delta|}\log
 Z^{\rm max}_{\Delta} = \lim_{\ell\to\infty}
 \frac 1 \ell\max _{\und h: |h_i| \le h_+}\sum_{i=1}^\ell [ \frac{m_i^2}2+
 (h_{\rm ext}-h_i)m_i + \log Z_{\ell,\und h }]
    \end{equation}
where $m_i$ is the function of $\und h$ defined in \eqref{1.2}--\eqref{4.1}.

\end{thm}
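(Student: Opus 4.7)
The plan is to chain Theorem \ref{thm101.4} (equivalence of ensembles with explicit error) with the bijection $\und m\leftrightarrow\und h$ of Theorem \ref{Fthm5.6} in order to convert the multi-canonical maximum in \eqref{F4.4} into a max over auxiliary fields $\und h$, and then let $\ell\to\infty$.

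First I would exploit that the admissible $m_\Delta$ depends only on the layer index $i$: the horizontal sum in \eqref{F4.4} gives $\ell$ copies of each $i$-term, while $\phi_\ell(\und m)$ multiplies by $|\Delta|$. Dividing by $|\Delta|=\ell^2$,
$$\frac{1}{|\Delta|}\log Z^{\rm max}_\Delta=\max_{\und m:\,|m_i|\le m_+}\Big\{\frac{1}{\ell}\sum_{i=1}^\ell\Big(\frac{m_i^2}{2}+h_{\rm ext}m_i\Big)-\phi_\ell(\und m)\Big\}.$$
For any such $\und m$, let $\und h=\und h(\und m)$ be the field produced by Theorem \ref{Fthm5.6}. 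Using $\log Z_{\Delta,\und h}=\ell\log Z_{\ell,\und h}$ from \eqref{F4.10}, the bound \eqref{101.24} gives, uniformly in $\und m$ for any $a\in(1/2,1)$,
$$-\phi_\ell(\und m)=\frac{1}{\ell}\log Z_{\ell,\und h}-\frac{1}{\ell}\sum_{i=1}^\ell h_i m_i+O(\ell^{a-1}).$$
Substituting and changing variables from $\und m$ to $\und h$ via Theorem \ref{Fthm5.6},
$$\frac{1}{|\Delta|}\log Z^{\rm max}_\Delta=\frac{1}{\ell}\max_{\und h:\,|h_i|\le h_+}\Big[\sum_{i=1}^\ell\Big(\frac{m_i^2}{2}+(h_{\rm ext}-h_i)m_i\Big)+\log Z_{\ell,\und h}\Big]+O(\ell^{a-1}),$$
with $m_i=m_i(\und h)$ defined by \eqref{1.2}--\eqref{4.1}. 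Letting $\ell\to\infty$ (equivalently $|\Delta|\to\infty$) kills the error term and yields \eqref{F4.7}.

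The delicate step is the change of variables. Theorem \ref{Fthm5.6} produces a bijection from $\{|m_i|\le m_+\}$ onto its image $A\subset\{|h_i|\le h_+\}$, so the identity above holds on $A$ and gives $\mathrm{LHS}\le\mathrm{RHS}$ in \eqref{F4.7} for free; the reverse inequality requires that the max on the right is attained inside $A$. To see this I would differentiate
$$F(\und h)=\sum_{i=1}^\ell\Big(\frac{m_i(\und h)^2}{2}+(h_{\rm ext}-h_i)m_i(\und h)\Big)+\log Z_{\ell,\und h}$$
with respect to $h_j$: using $\partial_{h_j}\log Z_{\ell,\und h}=m_j$ from \eqref{1.2} the derivative collapses to $\sum_i(m_i+h_{\rm ext}-h_i)\partial_{h_j}m_i$, and since the Jacobian $(\partial_{h_j}m_i)$ is the positive-definite covariance matrix of the spin variables, vanishing of $\partial_{h_j}F$ for every $j$ forces the self-consistent relation $h_j=m_j+h_{\rm ext}$, precisely the equation of Subsection \ref{Fsubsec.2.2}. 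Choosing $h_+$ large enough so that this fixed point lies strictly inside $\{|h_i|\le h_+\}$ (it does, with $|h_j|\le m_++h^*$), the maximum of $F$ is attained in $A$ and both inequalities are established. Uniformity in $\und m$ of the error in Theorem \ref{thm101.4}, which is used implicitly in the substitution step, is already built into its statement.
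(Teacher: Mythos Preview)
Your overall strategy---replace $-\phi_\ell(\und m)$ by $\frac{1}{\ell}\log Z_{\ell,\und h}-\frac{1}{\ell}\sum h_im_i$ via Theorem~\ref{thm101.4} and then swap the max over $\und m$ for a max over $\und h$---is exactly what the paper has in mind when it says Theorem~\ref{Fthm4.2} is ``a consequence'' of Theorem~\ref{thm101.4}. The inequality $\mathrm{LHS}\le\mathrm{RHS}$ is obtained correctly.

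The reverse inequality, however, is not closed by your critical--point argument. You compute $\partial_{h_j}F=\sum_i(m_i+h_{\rm ext}-h_i)\partial_{h_j}m_i$ correctly and deduce that any interior critical point satisfies $h_j=m_j+h_{\rm ext}$. But two steps are missing. First, knowing that critical points are interior does not force the maximum of $F$ over the box $\{|h_i|\le h_+\}$ to be attained at a critical point; you have no concavity statement for $F$, so a boundary maximum is not excluded. Second, even granting that the maximum sits at a point with $h_j=m_j+h_{\rm ext}$, this only gives $|h_j|\le 1+h^*$; it does \emph{not} yield $|m_j|\le m_+$, which is what ``$\und h\in A$'' means. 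Your sentence ``the maximum of $F$ is attained in $A$'' therefore does not follow from what precedes it. A separate issue you pass over is that the left side of \eqref{F4.7} is a max over the discrete lattice $\mathcal M_\ell^\ell$, while your substitution produces continuous $\und m(\und h)$; the discretization error has to be controlled.

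A cleaner route to the reverse inequality avoids critical points altogether. Write $G(\und h)=\sum_i(m_i^2/2+h_{\rm ext}m_i)-P^*(\und m)$ with $P^*$ the Legendre transform of $\log Z_{\ell,\und h}$, so that $G$ is really a function of $\und m=\und m(\und h)$; it is Lipschitz in $\und m$ with constant $O(\ell)$ since $\nabla_m G=(m_i+h_{\rm ext}-h_i)$. For $|h_i|\le h_+$ one has $|m_i(\und h)|\le \tanh h_+ + c\la\le m_+'$ for some $m_+'<1$; the a~priori Proposition in Appendix~\ref{app.B} (which works for any cutoff $>m^*$) makes the constraint $|m_i|\le m_+'$ inactive. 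Then approximate $\und m(\und h)$ by a nearest $\und m'\in\mathcal M_\ell^\ell$ at distance $O(1/\ell)$, use the Lipschitz bound to get $\frac{1}{\ell}|G(\und h)-G(\und h(\und m'))|=O(1/\ell)$, and finally invoke Theorem~\ref{thm101.4} at the discrete point $\und m'$.
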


\medskip

By \eqref{F4.5} and \eqref{F4.7} we get
 \begin{equation}
    \label{F4.9}
 \lim_{\ga\to 0}\lim_{L\to \infty} \frac 1{|\La|}
 \log Z_{\ga, h_{\rm ext},L}^{\rm per} = \lim_{\ell\to\infty}
 \frac 1 \ell\max _{\und h}\sum_{i=1}^\ell [ \frac{m_i^2}2+
 (h_{\rm ext}-h_i)m_i + \log Z_{\ell,\und h }]
   \end{equation}

\vskip.5cm

\subsection{The quadratic structure of the effective hamiltonian}
\label{Fsubsec.4.4}

The goal therefore  is to study the  ground state energy of the effective hamiltonian
 \begin{equation}
    \label{F4.11}
H^{\rm eff}_{\ell,\und h } = -\sum_{i=1}^\ell \{\frac {m_i^2}2 - h_im_i + h_{\rm ext} m_i\}  -\log(e^{h_i}+e^{-h_i})-\log Z^*_{\ell,\und h } +   A_{0}
   \end{equation}
regarded as a function of $\und u
=(u_1,..,u_\ell)$. For convenience
in \eqref{F4.11} we have subtracted to $\log Z^*_{\ell,\und h }$ (which is defined in \eqref{103.3}) the first term of the expansion \eqref{103.3} (with $N(\cdot)\equiv 0$), which
is a constant.

By Theorem \ref{Fthm5.6} we can restrict to  the set of $\und u: |u_i| \le u_+=\tanh(h_+), i=1,..,\ell$ and in the sequel we will tacitly restrict to such a set. We will prove that the inf over $\und u$ of
$H^{\rm eff}_{\ell,\und h }$ is achieved by vectors $\und u$ with all components equal to each other which is maybe the most relevant/original result of this paper.

 We start by making explicit the leading terms in \eqref{F4.11}
for $\la$ small.  To this end and recalling that $\log(e^{h_i}+e^{-h_i}) = h_iu_i +S(u_i)$, the entropy $S(u)$ being defined in \eqref{101.4.00}--\eqref{101.4.00.01},
we write
     \begin{eqnarray}
    \label{2.24}
 &&\log(e^{h_i}+e^{-h_i}) = h_iu_i + \frac {u_i^2}2 +T(u_i)\nn\\&&
  T(u) =
 -\log 2 +\sum_{k=1}^\infty \frac 1{2k+1}\frac 1{2k+2} u^{2k+2}\\
     \label{Z.1.1}
&&\log Z^*_{\ell,\und h}-A_{0}= -\frac{\la}2\sum_{i=1}^\ell(u_{i+1}-u_i)^2 + \Theta\\
    \label{Z.1.2}
 &&  \xi_i:= (h_i-u_i) (1-u_i^2)\\
    \label{Z.1}
&&\Psi_i  = \la (1-u_i^2) (u_{i+1}+u_{i-1})
 + \Phi_i
   \end{eqnarray}
where $\Psi_i$ is defined in \eqref{4.1} and $\Theta$ and $\Phi_i$   are defined by
the above equations.
By some simple algebra, see  Appendix \ref{FApp.E} for details, we can rewrite  $H^{\rm eff}_{\ell,\und h }$ as:

\medskip

   \begin{lem}
   \label{Flemma4.4.1}
With the above notation:
\begin{eqnarray}
    \label{Z.3}
 H^{\rm eff}_{\ell,\und h }  &=&\sum_{i=1}^\ell \left\{ T(u_i) - h_{\rm ext}u_i -2\la  h_{\rm ext}[u_i- u^3_i]
+2\la \xi_i u_i\right\}\nn\\&+&
  \sum_{i=1}^\ell \left\{\frac \la 2(u_i-u_{i+1})^2+(h_i-u_i)\Phi_i - h_{\rm ext} \Phi_i -\frac {\Psi_i^2}2  \right\} - \Theta\nn\\&-&\la\sum_{i=1}^\ell \left\{ h_{\rm ext}(u_i +u_{i+1})(u_{i+1}-u_i )^2
 + (\xi_i-\xi_{i+1})(u_i-u_{i+1})\right\}
   \end{eqnarray}
   \end{lem}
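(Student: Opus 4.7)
The identity \eqref{Z.3} is purely algebraic, and I would obtain it by substituting the expansions \eqref{2.24}, \eqref{Z.1.1}, \eqref{Z.1.2} and \eqref{Z.1}, together with the relation $m_i = u_i + \Psi_i$ from \eqref{4.1}, into the definition \eqref{F4.11} of $H^{\rm eff}_{\ell,\und h}$, and then grouping terms by their order in $\la$. The first step is to reduce the single-site part $-\frac{m_i^2}{2} + (h_i - h_{\rm ext})m_i - \log(e^{h_i}+e^{-h_i})$. Using $m_i^2 = u_i^2 + 2 u_i \Psi_i + \Psi_i^2$ together with \eqref{2.24}, the $u_i^2$ and $h_i u_i$ contributions cancel pairwise and one is left with $T(u_i) - h_{\rm ext} u_i + (h_i - u_i - h_{\rm ext})\Psi_i - \frac{1}{2}\Psi_i^2$. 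The term $-(\log Z^*_{\ell,\und h} - A_0)$ then contributes, by \eqref{Z.1.1}, exactly $\frac{\la}{2}\sum_i (u_{i+1}-u_i)^2 - \Theta$, supplying the bare gradient-squared term and the constant $-\Theta$ that already appear in the target.

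The heart of the computation is to reorganize $(h_i - u_i - h_{\rm ext})\Psi_i$ into the form displayed in \eqref{Z.3}. Using $\Psi_i = \la(1-u_i^2)(u_{i+1}+u_{i-1}) + \Phi_i$ from \eqref{Z.1} and $(h_i - u_i)(1 - u_i^2) = \xi_i$ from \eqref{Z.1.2}, this quantity splits as $(h_i - u_i)\Phi_i - h_{\rm ext}\Phi_i + \la\,\xi_i(u_{i+1}+u_{i-1}) - \la h_{\rm ext}(1-u_i^2)(u_{i+1}+u_{i-1})$. The two $\Phi_i$-terms are immediately recognizable inside the second bracket of \eqref{Z.3}, so, after summation over $i\in[1,\ell]$ with periodic boundary condition $u_{\ell+1} = u_1$, the remaining task reduces to proving the two bookkeeping identities
\begin{equation*}
\la\sum_i \xi_i (u_{i+1}+u_{i-1}) \;=\; 2\la\sum_i \xi_i u_i \;-\; \la\sum_i (\xi_i - \xi_{i+1})(u_i - u_{i+1})
\end{equation*}
and
\begin{equation*}
-\la h_{\rm ext}\sum_i (1-u_i^2)(u_{i+1}+u_{i-1}) \;=\; -2\la h_{\rm ext}\sum_i (u_i - u_i^3) \;-\; \la h_{\rm ext}\sum_i (u_i + u_{i+1})(u_{i+1}-u_i)^2.
\end{equation*}

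The first identity is a discrete summation by parts: writing $\xi_i(u_{i+1}+u_{i-1}) - 2\xi_i u_i = \xi_i(u_{i+1}-u_i) - \xi_i(u_i - u_{i-1})$ and reindexing the second sum using periodicity produces $(\xi_i - \xi_{i+1})(u_{i+1}-u_i)$. For the second identity I would split $(1-u_i^2)(u_{i+1}+u_{i-1}) = (u_{i+1}+u_{i-1}) - u_i^2(u_{i+1}+u_{i-1})$; the linear part sums to $2\sum_i u_i$, while the cubic part is handled through the elementary identity $u_i^2 u_{i+1} + u_i u_{i+1}^2 = u_i^3 + u_{i+1}^3 - (u_i - u_{i+1})^2 (u_i + u_{i+1})$, whose bond-sum equals $2\sum_i u_i^3 - \sum_i (u_{i+1}-u_i)^2(u_i + u_{i+1})$. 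The only substantive step is this cubic identity: it is precisely the mechanism that converts a cubic nearest-neighbor interaction into an on-site cubic term plus a weighted gradient-square $(u_i + u_{i+1})(u_{i+1} - u_i)^2$, and this gradient-square structure is exactly what makes the homogeneous-minimizer argument of Section \ref{Fsec.5} work. Everything else is careful bookkeeping and sign tracking against the conventions in \eqref{2.24} and \eqref{101.4.00.01}.
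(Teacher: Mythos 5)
Your proof is correct and follows the same route as the paper's Appendix \ref{FApp.E}: both start from \eqref{F4.11}, substitute $m_i = u_i + \Psi_i$ and \eqref{2.24}--\eqref{Z.1} to reach the intermediate form $\sum_i\{T(u_i) - h_{\rm ext}u_i + (h_i-u_i-h_{\rm ext})\Psi_i - \frac{\Psi_i^2}{2}\} - \log Z^*_{\ell,\und h} + A_0$, then split $\Psi_i$ into its $\Phi_i$ and explicit nearest-neighbor parts and reorganize using periodicity. Your two "bookkeeping identities" are exactly what the paper labels an "analogous procedure" -- the cubic identity $u_i^2 u_{i+1} + u_i u_{i+1}^2 = u_i^3 + u_{i+1}^3 - (u_i+u_{i+1})(u_i-u_{i+1})^2$ for the $h_{\rm ext}$ piece appears verbatim in \eqref{Z.2}, and the summation-by-parts rewriting of $\sum_i \xi_i(u_{i+1}+u_{i-1})$ is the unstated analogue for the $(h_i-u_i)\Psi_i$ piece -- so your write-up is simply a more explicit version of the same computation.
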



\medskip
The terms with $\Theta$, $\Psi_i^2$ and $\Phi_i$ are ``under control'' in the following sense:

\medskip

\begin{thm}
\label{Fthm.4.8}
Call
\begin{eqnarray}
    \label{Z.3.01}
 H^{(1)}_{\ell,\und h }  &=&
  \sum_{i=1}^\ell \left\{ (h_i-u_i)\Phi_i - h_{\rm ext} \Phi_i -\frac {\Psi_i^2}2  \right\} - \Theta
   \end{eqnarray}
Then for any $h_{\rm ext} \in [0,h^*]$ there is a continuous function $f^{(1)}(u)$ on $[-1,1]$ and functions
$b^{(1)}_{i,j}(\und u)$, $i<j$, so that
     \begin{eqnarray}
    \label{F4.23}
&&
H^{(1)}_{\ell,\und h } = \sum_{i=1}^\ell f^{(1)}(u_i) + \sum_{1\le i<j\le \ell}
b^{(1)}_{i,j}(\und u) (u_i-u_j)^2
   \end{eqnarray}
with
     \begin{eqnarray}
&&
\sum_{1\le i<j\le \ell}
b^{(1)}_{i,j}(\und u) (u_i-u_j)^2 \ge -c \la^{1+\frac{2}{3}}\sum_{i=1}^\ell  (u_i-u_{i+1})^2
  \label{F4.24}
   \end{eqnarray}
and $c$ a positive constant.
\end{thm}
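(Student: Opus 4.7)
The plan is to write $H^{(1)}_{\ell,\und h}$ as a formal power series in $(u_1,\dots,u_\ell)$ via the cluster expansion of Theorem~\ref{thm103.1}, and then to apply the combinatorial lemma of Section~\ref{Fsec.5} termwise so that every monomial is split into a one-body summand and a gradient-squared summand. By~\eqref{Z.1.1} the leading quadratic gradient has already been extracted from $\log Z^*_{\ell,\und h}$, so in $\Theta$ only two types of contribution remain: (a) the $|N(\cdot)|=2$ residuals $(\alpha_1-\la)u_iu_{i+1}$ and $\alpha_{k}u_iu_{i+k}$ for $k\ge2$, of total size at most $\la^{11/6}$ by~\eqref{103.6}; and (b) monomials with $|N(\cdot)|\ge 4$, whose combined weight is bounded by $e^{-4b}=\la^{5/3}$ via~\eqref{103.8}. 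The same reasoning applied to~\eqref{Z.1} yields $\Phi_i=O(\la^{11/6})$, and $\Psi_i^2=O(\la^2)$. Since $h_i-u_i=\mathrm{arctanh}(u_i)-u_i$ is an odd power series in the single variable $u_i$, multiplying by $\Phi_i$ preserves the power-series form, so $H^{(1)}_{\ell,\und h}=\sum_{N(\cdot)}B_{N(\cdot)}u^{N(\cdot)}$ with coefficients $B_{N(\cdot)}$ inheriting the decay bounds of the $A$'s up to a harmless convolution.

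Applying the Section~\ref{Fsec.5} lemma to each monomial with $|N(\cdot)|=N\ge 2$ gives $u^{N(\cdot)}=\sum_j p_j(N(\cdot))\,u_j^N+\sum_{j<k}d_{j,k}(N(\cdot);\und u)(u_j-u_k)^2$, with $d_{j,k}$ a polynomial in $\und u$ of degree $N-2$. Summing over $N(\cdot)$ weighted by $B_{N(\cdot)}$, and using translation invariance on the $\ell$-cycle together with the exponential decay of $|A_{N(\cdot)}|$ from~\eqref{103.8}, the one-body pieces assemble into $\sum_i f^{(1)}(u_i)$ with $f^{(1)}(u)=\sum_{M\ge1}\hat c_M u^M$ absolutely convergent on $[-u_+,u_+]\subset(-1,1)$ and hence continuous there; one then extends it continuously to $[-1,1]$ in any fashion compatible with the boundary values. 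The gradient-squared coefficients are defined by
\[
b^{(1)}_{j,k}(\und u)=\sum_{N(\cdot):\,j,k\in\mathrm{supp}(N(\cdot))} B_{N(\cdot)}\,d_{j,k}(N(\cdot);\und u).
\]

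For the lower bound~\eqref{F4.24}, the key observation is that whenever $j,k\in\mathrm{supp}(N(\cdot))$ one has $\|N(\cdot)\|\ge R(N(\cdot))\ge|j-k|+1$, so \eqref{103.8} provides $\sum_{N(\cdot):\,j,k\in\mathrm{supp}(N(\cdot))}|B_{N(\cdot)}|\le C\,e^{-b(|j-k|+1)}$. Combined with the uniform polynomial bound $|d_{j,k}(N(\cdot);\und u)|\le C(|N|)\,u_+^{N-2}$ on $|u_l|\le u_+<1$, and with the Cauchy--Schwarz telescoping $(u_j-u_k)^2\le|j-k|\sum_{l=j}^{k-1}(u_l-u_{l+1})^2$, the double sum over $(j,k)$ and $N(\cdot)$ is majorized by $c\,\la^{5/3}\sum_i(u_i-u_{i+1})^2$: the exponential factor $e^{-b|j-k|}=\la^{5|j-k|/12}$ absorbs the linear prefactor $|j-k|$ coming from telescoping, while the nearest-neighbor $|N(\cdot)|=2$ residues contribute only at order $\la^{11/6}\le\la^{5/3}$. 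The main obstacle is precisely this bookkeeping: one must simultaneously (i) use the decay of~\eqref{103.8} in $\|N(\cdot)\|$ to dominate the polynomial growth in $|j-k|$ and in the number of monomials produced by the Section~\ref{Fsec.5} decomposition, and (ii) verify that no $O(\la)$ gradient term survives outside the already-extracted $\tfrac{\la}{2}(u_i-u_{i+1})^2$ piece in Lemma~\ref{Flemma4.4.1}, which is guaranteed by the sharp quadratic-order identification performed in~\eqref{Z.1.1}--\eqref{Z.1}.
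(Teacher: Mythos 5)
Your proposal is correct and follows essentially the same route as the paper's Appendix~\ref{FApp.F}: expand via the cluster-expansion power series of Theorem~\ref{thm103.1}, apply the monomial decomposition of Theorem~\ref{Fthm.5.9} termwise, and resum using the decay in $\|N(\cdot)\|$ from~\eqref{103.8} together with the factor $u_+^{N-2}$ in~\eqref{F5.2} and the telescoping $(u_j-u_k)^2\le|j-k|\sum(u_l-u_{l+1})^2$. The paper treats $\Theta$, $h_{\rm ext}\Phi_i$, $\Psi_i^2$ and $\xi_i\Phi_i$ separately rather than aggregating into a single series $\sum B_{N(\cdot)}u^{N(\cdot)}$, which makes explicit what you summarize as a ``harmless convolution'' (in particular, $\xi_i=\sum_k\kappa_k u_i^{2k+1}$ has only bounded, not exponentially decaying, coefficients, so that piece of the convergence truly relies on $u_+<1$ rather than on coefficient decay), but this is an organizational rather than conceptual difference.
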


\medskip

The proof of Theorem \ref{Fthm.4.8} starts from \eqref{103.3} and it is based on a representation of the monomials $u^{N(\cdot)}$ as sum of one body and gradients squared terms which is
established in Section \ref{Fsec.5}.  After that we exploit the properties of the coefficients
$A_{N(\cdot)}$ stated in \eqref{103.8},  \eqref{103.5} and  \eqref{103.6}. The computations are straightforward but lengthy, the details are reported in  Appendix \ref{FApp.F}.

Define
\begin{eqnarray}
    \label{Z.3.02}
\theta_i(\und u) := \frac{\xi_i-\xi_{i+1}}{u_i-u_{i+1}}
   \end{eqnarray}
when $u_i\ne u_{i+1}$ and equal to $d\xi/du(v)$ when $u_i=u_{i+1}=v$.
Then  $\theta_i(\und u)$ depends only on $u_i$ and $u_{i+1}$,
and as a function  of  $u_i$ and $u_{i+1}$ is continuous, symmetric and bounded in $|u_i|\le R$,
$|u_{i+1}| \le R$, $R<1$. It is essential in our proof that $\theta_i(\und u)< \frac 12$.  We checked
numerically that this is ``fortunately'' true and indeed we found a rigorous proof, reported
in Appendix \ref{FApp.G}, of an upper bound smaller than $1/2$:

\medskip

\begin{prop}
\label{Fprop.4.8.1}

     \begin{eqnarray}
    \label{Z.3.03}
  \theta_i(\und u) &\le& \frac 38
   \end{eqnarray}

\end{prop}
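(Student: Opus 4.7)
The plan is to reduce Proposition~\ref{Fprop.4.8.1} to a pointwise estimate on a single scalar function and then deduce that estimate from a two-term Taylor truncation. Using the inverse relation $h_j = \mathrm{arctanh}(u_j)$, introduce
\[
\xi(u) := \bigl(\mathrm{arctanh}(u) - u\bigr)(1 - u^2), \qquad u \in (-1, 1),
\]
so that $\xi_j = \xi(u_j)$. When $u_i \neq u_{i+1}$, the quantity $\theta_i(\und u)$ is the divided difference
\[
\theta_i(\und u) \;=\; \frac{\xi(u_i) - \xi(u_{i+1})}{u_i - u_{i+1}} \;=\; \int_0^1 \xi'\bigl(t u_i + (1-t)u_{i+1}\bigr) \, dt,
\]
and the definition at $u_i = u_{i+1} = v$ is exactly $\xi'(v)$, so continuity matches. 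In either case it suffices to prove the pointwise bound $\xi'(u) \le 3/8$ for every $u \in (-1, 1)$.

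A direct computation, using $\tfrac{d}{du}\mathrm{arctanh}(u) = (1-u^2)^{-1}$, gives
\[
\xi'(u) \;=\; 3u^2 - 2u\, \mathrm{arctanh}(u).
\]
Since $\xi$ is odd, $\xi'$ is even and it is enough to consider $u \in [0, 1)$. Substituting the Taylor series $\mathrm{arctanh}(u) = \sum_{k \ge 0} u^{2k+1}/(2k+1)$, which converges absolutely on $(-1,1)$, yields
\[
\xi'(u) \;=\; u^2 - \frac{2u^4}{3} - \sum_{k \ge 2} \frac{2 u^{2k+2}}{2k+1} \;\le\; u^2 - \frac{2u^4}{3},
\]
because every term in the tail is non-negative and enters with a minus sign.

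Setting $v = u^2 \in [0,1)$, the auxiliary quadratic $g(v) = v - 2v^2/3$ has a unique critical point at $v = 3/4$ with value $g(3/4) = 3/4 - 3/8 = 3/8$, and this is its maximum on $[0,1]$. Combining this with the previous inequality gives $\xi'(u) \le 3/8$, which via the integral representation propagates to $\theta_i(\und u) \le 3/8$ and proves the proposition. There is no real obstacle here; the only slightly delicate point is the choice of truncation: keeping only the quadratic term would give the useless estimate $\xi'(u) \le u^2$ (which tends to $1$ as $|u|\to 1$), whereas keeping through the quartic term produces exactly the value $3/8$, which is strictly below the threshold $1/2$ that is essential for the surrounding analysis of $H^{\rm eff}_{\ell, \und h}$.
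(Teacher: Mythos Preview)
Your proof is correct and follows essentially the same approach as the paper: both reduce to the pointwise bound $\xi'(u)\le 3/8$, expand $\xi'(u)=u^2-2\sum_{k\ge 1}\frac{u^{2k+2}}{2k+1}$, drop the (nonnegative) tail beyond the quartic term, and maximize $u^2-\tfrac{2}{3}u^4$ to obtain $3/8$. Your version is just slightly more explicit in writing out the divided-difference/integral representation and the derivative computation.
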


\medskip
It follows from \eqref{Z.3} and \eqref{F4.23} that
\begin{eqnarray}
    \label{Z.3.04}
 H^{\rm eff}_{\ell,\und h }  &=&\sum_{i=1}^\ell f(u_i)
+  \la\sum_{i=1}^\ell \left\{\frac 1 2-h_{\rm ext}(u_i +u_{i+1})-\theta_i(\und u)\right\}(u_i-u_{i+1})^2
\nn\\&+& \sum_{i<j} b^{(1)}_{i,j}(\und u) (u_i-u_j)^2
   \end{eqnarray}
where
\begin{eqnarray}
    \label{Z.3.05}
 &&f(u_i) =\{ T(u_i) - h_{\rm ext}u_i -2\la  h_{\rm ext}[u_i- u^3_i]
+2\la \xi_i u_i\}+ f^{(1)}(u_i)
   \end{eqnarray}
Let $u^*$ be the minimizer of $f(\cdot)$, then
by \eqref{Z.3.04}
\[
\inf_{\und u}H^{\rm eff}_{\ell,\und h } \le \ell \inf f(u) =: \ell f(u^*)
\]
as the right hand side is the value obtained by choosing all $u_i = u^*$.

Let
 \begin{equation}
   \label{Z.3.06}
  h_0:=\frac 14 \left[\frac 12 -\frac 38 \right]
   \end{equation}
then if $h_{\rm ext} \in [0,h_0]$ using \eqref{F4.24} we get
\begin{eqnarray}
    \label{Z.3.07}
 H^{\rm eff}_{\ell,\und h }  &\ge&\sum_{i=1}^\ell f(u_i)
+ \frac \la 2\sum_{i=1}^\ell \left\{\frac 1 2-\frac 38 \right\}(u_i-u_{i+1})^2
-\sum_{i } c \la^{1+\frac{2}{3}} (u_i-u_{i+1})^2 \nn\\&\ge&
\sum_{i=1}^\ell f(u_i)
+ \frac \la 2\sum_{i=1}^\ell \left\{\frac 1 2-\frac 38- 2c\la^\frac{2}{3}\right\}(u_i-u_{i+1})^2
   \end{eqnarray}
Hence
\[
\inf_{\und u}H^{\rm eff}_{\ell,\und h } \ge   \ell f(u^*)
\]
for $\la$ so small that $2c\la^{\frac{2}{3}} \le \frac 12 - \frac 38$, because by \eqref{Z.3.07} we then get a lower bound by neglecting the sum of the terms with $(u_i-u_{i+1})^2$.  We have thus proved:

\medskip

\begin{thm}
\label{Fthm.4.8.1}
Let   $h_{\rm ext} \in [0,h_0]$ and $\la$ be
so small that $2c\la^{\frac{2}{3}} \le \frac 12 - \frac 38$, then
the inf of $H^{\rm eff}_{\ell,\und h }$ is equal to the min
of $H^{\rm eff}_{\ell,\und h }$ over homogeneous $\und h$,
namely $\und h$ with all its components
equal to each other.

\end{thm}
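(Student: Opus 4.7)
The plan is to combine the decomposition of Lemma \ref{Flemma4.4.1} with Theorem \ref{Fthm.4.8} and Proposition \ref{Fprop.4.8.1}, exactly as already packaged in the identity \eqref{Z.3.04}, to exhibit $H^{\rm eff}_{\ell,\und h}$ as a one-body sum $\sum_i f(u_i)$ plus gradient-squared corrections. The structural observation driving the proof is that homogeneous configurations annihilate every gradient-squared term at once, while the two smallness hypotheses (on $h_{\rm ext}$ and on $\la$) are precisely calibrated to make those gradient-squared corrections non-negative on all other configurations.

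First I would establish the upper bound. Let $u^*$ be a minimizer of $f$ over the admissible interval $[-u_+,u_+]$. Evaluating \eqref{Z.3.04} on the constant vector $\und u\equiv u^*$ kills every $(u_i-u_j)^2$, and hence $\inf_{\und u} H^{\rm eff}_{\ell,\und h}\le \ell f(u^*)$; by Theorem \ref{Fthm5.6} this value is attained at the homogeneous $\und h$ corresponding to $\und u\equiv u^*$.

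For the matching lower bound I would argue as follows. By Proposition \ref{Fprop.4.8.1} one has $\theta_i(\und u)\le 3/8$, and since $|u_i+u_{i+1}|\le 2$ the hypothesis $h_{\rm ext}\le h_0=\frac14[\frac12-\frac38]$ gives $h_{\rm ext}(u_i+u_{i+1})\le \frac12[\frac12-\frac38]$, so the nearest-neighbor coefficient in \eqref{Z.3.04} satisfies $\frac12 - h_{\rm ext}(u_i+u_{i+1})-\theta_i(\und u)\ge \frac12[\frac12-\frac38]$. The long-range contribution $\sum_{i<j} b^{(1)}_{i,j}(\und u)(u_i-u_j)^2$ is bounded below by $-c\la^{1+2/3}\sum_i(u_i-u_{i+1})^2$ via \eqref{F4.24}, and collecting this with the nearest-neighbor positive term yields a combined coefficient of at least $\frac{\la}{2}\bigl[\frac12-\frac38-2c\la^{2/3}\bigr]$ in front of each $(u_i-u_{i+1})^2$. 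The smallness assumption $2c\la^{2/3}\le \frac12-\frac38$ renders this non-negative, so the entire quadratic contribution can be discarded from below, giving $H^{\rm eff}_{\ell,\und h}\ge \sum_i f(u_i)\ge \ell f(u^*)$.

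Combining the two bounds shows that the infimum equals $\ell f(u^*)$ and is attained at the homogeneous configuration. The delicate work is already done upstream: Lemma \ref{Flemma4.4.1} and Theorem \ref{Fthm.4.8} express all higher-order cluster-expansion terms as a one-body sum plus a long-range quadratic remainder that, crucially, is bounded below by only the nearest-neighbor gradients and by a factor of order $\la^{5/3}$ rather than $\la$; and Proposition \ref{Fprop.4.8.1} rules out the one scenario that would otherwise wreck the argument, namely that $\theta_i$ could consume the whole $1/2$ from the leading Ising gradient term. The main obstacle, in essence, is this competition between the positive nearest-neighbor $\la$-term and the potentially negative long-range $b^{(1)}_{i,j}$-term; the separation-of-scales between $\la$ and $\la^{5/3}$ provided by \eqref{F4.24} is what resolves it.
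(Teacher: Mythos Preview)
Your proof is correct and follows essentially the same approach as the paper: both start from the decomposition \eqref{Z.3.04}, obtain the upper bound by evaluating at the constant configuration $u_i\equiv u^*$, and obtain the matching lower bound by combining Proposition~\ref{Fprop.4.8.1} (giving $\theta_i\le 3/8$), the bound $h_{\rm ext}(u_i+u_{i+1})\le 2h_0=\tfrac12[\tfrac12-\tfrac38]$, and \eqref{F4.24} to make the total gradient-squared coefficient non-negative. The paper's argument is exactly \eqref{Z.3.07} preceding the theorem statement, and your write-up reproduces it faithfully with a bit more commentary.
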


\medskip

Thus the inf  of $H^{\rm eff}_{\ell,\und h }$ is achieved when
 all the components of $\und h$ are
equal to each other, in such a case we can compute
explicitly the minimizer, see the next subsection.
The result comes from the quadratic structure
of the hamiltonian, \eqref{F4.23}--\eqref{F4.24}, somehow reminiscent of the Ginzburg-Landau functional whose integrand has the form $W(u) + |\nabla u|^2$ and the gradient term forces the minimizer to be a constant.

The argument used to prove Theorem \ref{Fthm.4.8.1} does not extend to the complementary case when
$h_{\rm ext} \in [h_0,h^*]$ because   $\{\frac 1 2-h_{\rm ext}(u_i +u_{i+1})\}$ in \eqref{Z.3.04}
may become negative and we would then
loose the positivity of the coefficients of the gradients squared.
Nonetheless we can use the  ``strong convexity'' of the one body term $T(u_i)$ in \eqref{Z.3}
when $u_i$ is away from 0 to prove:

\medskip

\begin{thm}
\label{Fthm.4.8.2}
Let   $h_{\rm ext} \in [h_0,h^*]$ and let $\la$ be small enough, then
the inf of $H^{\rm eff}_{\ell,\und h }$  is equal to the min
of $H^{\rm eff}_{\ell,\und h }$ over homogeneous $\und h$.

\end{thm}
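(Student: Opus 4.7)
The plan is to exploit the strong convexity of the one-body term $T$ in \eqref{Z.3}, whose second derivative $T''(u)=u^2/(1-u^2)$ is uniformly bounded away from $0$ for $u$ away from $0$. For $h_{\rm ext}\in[h_0,h^*]$ and $\la$ small, the minimizer $u^*$ of the one-body function $f$ of \eqref{Z.3.05} sits at a positive value bounded below by some $\delta_0=\delta_0(h_0)>0$ independent of $\la$, so $f$ inherits a strict convexity at $u^*$ of $O(1)$ strength. Since the gradient contributions in \eqref{Z.3.04} are only $O(\la)$ from the explicit nearest-neighbor coefficient (which may become negative when $h_{\rm ext}>h_0$) plus the even smaller $O(\la^{5/3})$ correction from \eqref{F4.24}, the one-body convexity dominates them for $\la$ sufficiently small, forcing the infimum of $H^{\rm eff}_{\ell,\und h}$ to be attained at the homogeneous configuration $u_i\equiv u^*$.

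I would first establish three properties of $f$ on $[-u_+,u_+]$, all uniform in $\la$ small. (i) $f$ has a unique minimizer $u^*\ge \delta_0>0$: since $f(u)=T(u)-h_{\rm ext}u+O(\la)$ uniformly and $T'(u)=\tanh^{-1}(u)-u$ is strictly increasing on $(0,1)$, the unperturbed equation $T'(u)=h_{\rm ext}\ge h_0$ has a unique root bounded below by a fixed $\delta_0$, and the $O(\la)$ perturbation shifts it by $O(\la)$. (ii) $f''(u^*)\ge c_0>0$, since $T''(u^*)\ge \delta_0^2/(1-\delta_0^2)$. (iii) A global coercivity $f(u)-f(u^*)\ge c_1(u-u^*)^2$ for all $|u|\le u_+$. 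I would obtain (iii) by patching a local Taylor estimate near $u^*$ from (ii) with a compactness argument on $\{|u-u^*|\ge r\}$, where $f(u)-f(u^*)$ is bounded below by a positive constant uniformly in $\la$ small because $f$ converges uniformly as $\la\to 0$ to a function with a strict global minimum at the unperturbed $u^*$.

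Using (iii) and the elementary inequality $(u_i-u_{i+1})^2\le 2(u_i-u^*)^2+2(u_{i+1}-u^*)^2$, one has $\sum_i(u_i-u_{i+1})^2\le 4\sum_i(u_i-u^*)^2$. Proposition \ref{Fprop.4.8.1}, together with $|u_i|\le u_+$ and $h_{\rm ext}\le h^*$, yields a uniform bound $|\,1/2-h_{\rm ext}(u_i+u_{i+1})-\theta_i(\und u)\,|\le G$. Combining these estimates with \eqref{F4.24} and \eqref{Z.3.04} gives
\begin{equation*}
H^{\rm eff}_{\ell,\und h}\ge \ell f(u^*) + \bigl(c_1 - 4\la G - 4c\la^{5/3}\bigr)\sum_{i=1}^\ell (u_i-u^*)^2 ,
\end{equation*}
which for $\la$ below an explicit threshold is $\ge \ell f(u^*)$, with equality iff $u_i=u^*$ for every $i$. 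Since $\ell f(u^*)$ is the value of $H^{\rm eff}_{\ell,\und h}$ at the homogeneous configuration $u_i\equiv u^*$, this identifies the infimum with the minimum over homogeneous $\und h$.

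The main obstacle is the uniform global coercivity (iii). Because $T''(0)=0$, one cannot hope for strong convexity of $f$ throughout $[-u_+,u_+]$; the bound $f(u)-f(u^*)\ge c_1(u-u^*)^2$ must instead be assembled locally near $u^*$ and by a non-degeneracy argument elsewhere. Ensuring that the $O(\la)$ perturbations of $f$ coming from $-2\la h_{\rm ext}(u-u^3)$, $2\la\xi_i u_i$, and $f^{(1)}(u_i)$ do not erode this bound near the almost flat region $u=0$ is the delicate point, and this is where the smallness of $\la$ together with the lower bound $h_{\rm ext}\ge h_0$ are used quantitatively.
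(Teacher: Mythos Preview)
Your argument is correct, and it takes a genuinely different route from the paper's proof in Appendix~\ref{FApp.H}. The paper does not attempt a global quadratic coercivity bound for $f$; instead it first localizes the minimizer of $H^{\rm eff}_{\ell,\und h}$ to a ball of radius $\la^{1/4}$ around $\tilde u$, the minimizer of $T(u)-h_{\rm ext}u$ (Lemma~\ref{lemma6.3}), then shows that the full Hessian $\partial^2 H^{\rm eff}_{\ell,\und h}/\partial u_i\partial u_j$ is diagonally dominant and hence positive definite on that ball (Proposition~\ref{prop6.5}), and finally concludes by observing that the homogeneous configuration is a critical point lying in the ball. Your approach bypasses the Hessian computation entirely: you absorb the possibly negative $O(\la)$ gradient coefficients in \eqref{Z.3.04} directly into the $O(1)$ one-body coercivity via the elementary inequality $(u_i-u_{i+1})^2\le 4\sum_i(u_i-u^*)^2$. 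This is cleaner and more elementary; the paper's route has the side benefit of localizing the minimizer more sharply (to within $\la^{1/4}$ of $\tilde u$), but that extra information is not used in the proof of Theorem~\ref{thm101.1}.

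The delicate point you flag at the end---that $T''(0)=0$ precludes uniform strong convexity of $f$ on all of $[-u_+,u_+]$---is real, but your patching argument handles it correctly: on $\{|u-u^*|\le r\}$ with $r<\delta_0/2$ one has $T''\ge T''(\delta_0/2)>0$ uniformly, and on $\{|u-u^*|\ge r\}$ the uniform positive gap $f_0(u)-f_0(u^*_0)\ge c_1^{(0)}(r/2)^2$ survives the $O(\la)$ perturbation, while $(u-u^*)^2\le(2u_+)^2$ is bounded. This indeed yields a uniform $c_1>0$.
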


 Theorem \ref{Fthm.4.8.2} is proved in Appendix \ref{FApp.H}

\vskip.5cm

\subsection{Proof of  Theorem \ref{thm101.1}}
\label{Fsubsec.4.1}
Using Theorem  \ref{Fthm.4.8.1} and Theorem  \ref{Fthm.4.8.2}
we will next
prove  Theorem \ref{thm101.1}.
We thus know that
 \begin{equation}
    \label{101.33}
 \lim_{\ga\to 0}\lim_{L\to \infty}  \frac 1{|\La|}
 \log Z_{\ga, h_{\rm ext},L}^{\rm per} =
 \lim_{\ell\to \infty}
  \max _{m \in \mathcal M_\ell:|m| \le m_+}  [\frac{m^2}2+
 h_{\rm ext} m - \psi_{\la,\ell}(m)]
   \end{equation}
where, letting $\Delta = I\times I'$,
\begin{equation}
    \label{102.88}
 \psi_{\la,\ell}(m) = -\frac {1}{|\Delta|}\log \sum_{\si\in \{-1,1\}^\Delta}
  \mathbf 1_{\sum_{x\in I} \si(x,i) =  \ell m, \;\text{for all}\,i\in I'} e^{-H^{{\rm vert}}_\ell(\si)}
   \end{equation}
Denote by
\begin{equation}
    \label{102.88.1}
 f_{\la,\ell}(m) = -\frac {1}{|\Delta|}\log \sum_{\si\in \{-1,1\}^\Delta}
  \mathbf 1_{\sum_{(x,i)\in \Delta} \si(x,i) =  |\Delta| m} e^{-H^{{\rm vert}}_\ell(\si)}
   \end{equation}
the finite volume free energy of the system with only vertical interactions.  Thus
$\lim_{\ell \to \infty}f_{\la,\ell}(m) =f_{\la}(m)$.  We obviously have
$-\psi_{\la,\ell}(m) \le -f_{\la,\ell}(m)$ and by classical results on the Ising model
\begin{equation}
    \label{102.88.2}
 -f_{\la,\ell}(m) \le -f_{\la}(m) +\frac {c}{\ell}
   \end{equation}
so that
\begin{equation}
    \label{102.88.3}
 \limsup_{\ga\to 0}\lim_{L\to \infty} \frac 1{|\La|}
 \log Z_{\ga, h_{\rm ext},L}^{\rm per} \le  \lim_{\ell\to \infty}
  \max _{m\in \mathcal M_\ell: |m| \le m_+}  [ \frac{m^2}2+
 h_{\rm ext} m - f_{\la}(m)]
   \end{equation}
which proves that the left hand side of \eqref{102.3} is bounded by its right hand side.

We are thus left with the proof of the reverse inequality.
Let
 \begin{eqnarray}
    \label{102.88.4}
&&\tilde m = \text{arg min}\Big\{- h_{\rm ext} m -\frac{m^2}2 + f_\la(m)\Big\}\\&&
m^{(\ell)} = \max\Big\{ m\in \mathcal M_\ell : m \le \tilde m \Big\}\nn\\&&
h^{(\ell)} : \; \frac{d}{dh} p_{\la,\ell}(h)\Big |_{h=h^{(\ell)}} = m^{(\ell)}
   \end{eqnarray}
where $p_{\la,\ell}(h)= \ell^{-1}\log Z_{\la,h,\ell}$ and $Z_{\la,h,\ell}$
is the partition function of  the Ising model in $[1,\ell]$ with
n.n. interaction of strength $\la$ and magnetic field $h$;
$p_\la(h)$ is the corresponding  pressure in the thermodynamic limit $\ell \to \infty$.  It is well known that
\begin{equation}
    \label{102.88.7}
\sup_h | p_{\la,\ell}(h)- p_{\la}(h)| \le \frac c \ell
   \end{equation}
  Then
\begin{equation}
    \label{102.88.5}
 \Big\{h_{\rm ext} m^{(\ell)} +\frac{(m^{(\ell)})^2}2 -h^{(\ell)}m^{(\ell)}+
 \frac 1{|\Delta|}\log\Big( \sum_\si \mathbf 1_{\sum_x \si(x,i)\equiv \ell m^{(\ell)}}
 e^{-H^{\rm vert}_\ell(\si) + \sum h^{(\ell)}\si(x,i)}\Big\}
   \end{equation}
is a lower bound in the asymptotic sense for $ \frac 1{|\La|}
 \log Z_{\ga, h_{\rm ext},L}^{\rm per}$.
By the equivalence of ensembles, see Theorem \ref{thm101.4} in Subsection \ref{Fsubsec.3.3}, the lower bound becomes
\begin{equation}
    \label{102.88.6}
 \Big\{h_{\rm ext} m^{(\ell)}+\frac{(m^{(\ell)})^2}2 -h^{(\ell)}m^{(\ell)}+
 p_{\la,\ell}(h^{(\ell)})\Big\}
   \end{equation}
where by \eqref{102.88.7} we can also replace $p_{\la,\ell}(h^{(\ell)})$ by $p_{\la}(h^{(\ell)})$.
By compactness $h^{(\ell)}$ converges by subsequences and if $\ell_k$ is a convergent subsequence there is $h$ so that $h^{(\ell_k)} \to h$ and consequently
   \begin{equation}
    \label{102.88.8}
\lim_k p'_{\la,\ell}(h^{(\ell_k)})= p'_\la(h)
   \end{equation}
In fact in general any limit point of $ p'_{\la,\ell}(h^{(\ell_k)})$ lies in the interval
$[\frac{d}{dh^-}p_\la(h),\frac{d}{dh^+}p_\la(h)]$ of its left and right derivatives, but since we are considering a $d=1$ system such derivatives are equal to each other.  Moreover by the choice of $h^{(\ell)}$
   \begin{equation}
    \label{102.88.8}
 p'_{\la,\ell}(h^{(\ell)})= m^{(\ell)} \to \tilde m
   \end{equation}
 Hence if $h^{(\ell_k)} \to h$ then
      \begin{equation}
    \label{102.88.9}
 p'_\la(h) = \tilde m
   \end{equation}
 and since there is a unique $\tilde h$ such that   $ p'_\la(\tilde h) = \tilde m$ it follows that for any convergent subsequence $h^{(\ell_n)} \to \tilde h$
 and therefore $h^{(\ell)} \to \tilde h$.  Thus the expression \eqref{102.88.6} converges to
         \begin{equation}
    \label{102.88.10}
 \Big\{h_{\rm ext} \tilde m +\frac{(\tilde m)^2}2 -\tilde h \tilde m+
 p_{\la}(\tilde h)\Big\}
   \end{equation}
   which concludes the proof because
   $-\tilde h \tilde m+
 p_{\la}(\tilde h) = f_\la(\tilde m)$.

\vskip1cm

\setcounter{equation}{0}

\section{Monomials are sum of gradients}
\label{Fsec.5}

In this section we prove a combinatorial lemma, Theorem \ref{Fthm.5.9} below, which is the key ingredient in
the proof of the gradient structure of the hamiltonian.  The whole section is self contained
and can be read independently of the rest of the paper.


\medskip

 \begin{thm}
 \label{Fthm.5.9}
 Let  $\und u=(u_1,..,u_k)\in \mathbb R^k$, $\und
n= (n_1,..,n_k)\in\mathbb N_+^k$ and
   \[
 M_{\und n}( \und u) 
 =u_1^{n_1} \cdots u_k^{n_k},\quad \sum_{i=1}^k n_i =: N
   \]
a monomial of degree $N$ in the $k$ real variables $u_1,..,u_k$.  Then for any $N\ge 2$
 \begin{equation}
    \label{F5.1}
M_{\und n}(\und u) = \sum_{i=1}^k p_i u_i^N + \sum_{1\le i<j\le k} d_{i,j}(\und u) (u_i-u_j)^2
   \end{equation}
where $(p_1,..,p_k)$ is a probability vector, its component $p_i$ depending on $\und n$;
$d_{i,j}(\und u)$ are polynomials of degree $N-2$   with negative coefficients
which  depend on $\und n$ and there is a constant $c$ so that for any positive $U\le 1$
 \begin{equation}
    \label{F5.2}
\sup_{|u_i| \le U, i=1,..,k}\;|d_{i,j}(\und u) | \le c U^{N-2}N^3
   \end{equation}

 \end{thm}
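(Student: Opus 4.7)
The plan is to induct on the number of variables $k$, reducing everything to an explicit two-variable identity. We may assume $n_i\ge 1$ for all $i$; the case $k=1$ is trivial with $p_1=1$ and no cross terms.

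For the base case $k=2$, with $a+b=N$ and $a,b\ge 1$, I claim
\[
u_1^a u_2^b=\frac{a}{N}u_1^N+\frac{b}{N}u_2^N-Q_{a,b}(u_1,u_2)(u_1-u_2)^2,
\]
with $Q_{a,b}$ of degree $N-2$ and non-negative coefficients. To prove this I would set $P:=\frac{a}{N}u_1^N+\frac{b}{N}u_2^N-u_1^a u_2^b$ and apply twice the elementary telescoping $x^m-y^m=(x-y)\sum_{r=0}^{m-1} x^r y^{m-1-r}$. The first application gives $P=(u_1-u_2)G$ with $G=\frac{1}{N}\sum_{i=0}^{N-1}c_i u_1^i u_2^{N-1-i}$, where $c_i=-b$ for $i<a$ and $c_i=a$ for $i\ge a$; since $\sum_i c_i=0$, a second telescoping yields $G=(u_1-u_2)Q_{a,b}$ with $Q_{a,b}=\frac{1}{N}\sum_{j=0}^{N-2}T_j u_1^j u_2^{N-2-j}$ and partial sums $T_j:=\sum_{i=j+1}^{N-1}c_i$. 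A direct computation shows $T_j=b(j+1)$ for $j\le a-1$ and $T_j=a(N-1-j)$ for $j\ge a-1$ (the two formulas agree at $j=a-1$), so $T_j\ge 0$ and $\sum_j T_j=Nab/2$. Hence $d_{1,2}:=-Q_{a,b}$ has only non-positive coefficients and $|d_{1,2}(\und 1)|=ab/2$.

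For $k\ge 3$, pick $i<j$ with $n_i,n_j\ge 1$ and apply the two-variable identity to the factor $u_i^{n_i}u_j^{n_j}$, keeping $\prod_{l\ne i,j}u_l^{n_l}$ outside:
\[
M_{\und n}(\und u)=\frac{n_i}{n_i+n_j}M^{(i)}(\und u)+\frac{n_j}{n_i+n_j}M^{(j)}(\und u)-Q_{n_i,n_j}(u_i,u_j)(u_i-u_j)^2\prod_{l\ne i,j}u_l^{n_l},
\]
where $M^{(i)}$ and $M^{(j)}$ are the two degree-$N$ monomials in $k-1$ effective variables obtained by concentrating the joint exponent $n_i+n_j$ on $u_i$ or on $u_j$. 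Apply the inductive hypothesis to each and combine. A case analysis shows that the coefficient of $u_l^N$ in the result is exactly $n_l/N$ for every $l$ (using $\frac{n_i}{n_i+n_j}\cdot\frac{n_i+n_j}{N}=\frac{n_i}{N}$ in the cases $l=i,j$ and the obvious splitting for $l\ne i,j$), yielding the required probability vector. For a pair $(a,b)\ne(i,j)$ the cross-term coefficient $d_{a,b}$ is a convex combination of the inductive $d^{(1)}_{a,b}$ and $d^{(2)}_{a,b}$, both with non-positive coefficients; for $(a,b)=(i,j)$ it equals $-Q_{n_i,n_j}(u_i,u_j)\prod_{l\ne i,j}u_l^{n_l}$, a product of a polynomial with non-positive coefficients and a positive monomial. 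In either case $d_{a,b}$ has only non-positive coefficients and total degree $N-2$.

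Finally, for the quantitative bound \eqref{F5.2}: since $d_{i,j}$ has degree $N-2$ and coefficients of one sign, for $|u_l|\le U\le 1$ one has $|d_{i,j}(\und u)|\le U^{N-2}|d_{i,j}(\und 1)|$. Setting $C_k(N):=\max_{i<j}|d_{i,j}(\und 1)|$, the recursion above gives $C_k(N)\le\max\bigl(C_{k-1}(N),\,n_in_j/2\bigr)\le\max(C_{k-1}(N),\,N^2/4)$, and $C_2(N)\le N^2/4$ by the base case, so $C_k(N)\le N^2/4$ for all $k$; this is well within the claimed $cN^3$ (in fact $cN^2$ suffices). The main technical hurdle lies entirely in the base case: spotting the threshold structure of the $c_i$'s so that their partial sums $T_j$ can be explicitly identified and shown to be non-negative. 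Beyond that the induction is a routine bookkeeping of convex combinations.
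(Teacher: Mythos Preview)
Your proof is correct and in fact yields a sharper bound than stated: you obtain $|d_{i,j}(\und u)|\le \tfrac{1}{4}N^2 U^{N-2}$ rather than $cN^3 U^{N-2}$, and along the way you prove the explicit formula $p_i=n_i/N$.

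The paper's route is different in presentation, though the underlying recursion is the same. Instead of inducting on $k$, the authors encode the successive two-variable merges into a single Markov chain: a simple symmetric random walk that sequentially plays gambler's ruin between $u_1$ and $u_2$, then between the survivor and $u_3$, and so on. They show that $f(t)=\prod_i u_i^{n_i(t)}$ minus the accumulated gradient corrections is a martingale, so that $M_{\und n}(\und u)=E[f(\infty)]-\sum_s E[a(s)]$; the $p_i$ are hitting probabilities and the coefficients in $d_{i,j}$ are (minus) Green's-function values for the walk, giving the bound $|c_{i,j;m}|\le C N_j^2$ and hence $|d_{i,j}|\le c N^3 U^{N-2}$ after summing over $m$. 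Your induction does the same merges without the probabilistic packaging, and because you keep track only of $|d_{i,j}(\und 1)|$ through the recursion $C_k(N)\le\max(C_{k-1}(N),\,n_in_j/2)$ you avoid the extra factor of $N$. The trade-off is that the paper's construction gives a specific structured form for $d_{i,j}$ (formula (5.3) there), which they later exploit; your argument gives less information about the polynomial structure of $d_{i,j}$ beyond sign and degree, but that is all the theorem asks for.
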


\medskip

\begin{proof}
Call  $N_j= n_1+\dots+n_j$, $j=1,..,k$, so that $N_k=N$.  We will
prove the theorem with
 \begin{equation}
    \label{F5.3}
    d_{i,j}(\und u) = \sum_{m=1}^{N_j-1} c_{i,j;m} \Big(u_{i}^{m-1}
u_{j}^{N_j-m-1} u_{{j+1}}^{n_{j+1}}\cdots u_{k}^{n_k}\Big)
   \end{equation}
with coefficients  $c_{i,j;m}$, $i<j$, $1\le m\le N_j-1$, which depend on $n_1,..,n_j$ and
satisfy the bound
\begin{equation}
    \label{F5.4}
|c_{i,j;m}| \le C N_j^2
   \end{equation}
\eqref{F5.2} follows from \eqref{F5.3}--\eqref{F5.4} which also show that the representation \eqref{F5.1} of $M_{\und n}(\und u)$ is not unique as we can commute the factors $u^{n_i}_i$ in the monomial $M_{\und n}(\und u)$ without changing its value.

The proof of \eqref{F5.1} generalizes the equality
\[
 uv = \frac {u^2}{2} + \frac {v^2}{2} - \frac {1}{2} \big(u-v\big)^2
\]
In fact we use the above identity to rewrite the factor $u_1u_2$ in $M_{\und n}\equiv
M_{\und n}(\und u)$ getting
\begin{equation}
    \label{F5.4.1}
 M_{\und n}   = \frac {1}{2} M_{\und n + \und e_1 - \und e_2}  +
 \frac {1}{2} M_{\und n - \und e_1 + \und e_2} - \frac {1}{2} M_{ \und n - \und e_1 - \und e_2} \big(u_1-u_2\big)^2
   \end{equation}
where
\[
 \und e_1 =(1,0,..,0),\quad   \und e_2=(0,1,0,..,0)
 \]
We thus have
\[
  \frac {1}{2} M_{\und n + \und e_1 - \und e_2}  +
 \frac {1}{2} M_{\und n - \und e_1 + \und e_2} -M_{\und n} = \frac {1}{2} M_{ \und n - \und e_1 - \und e_2} \big(u_1-u_2\big)^2
 \]
which reminds of the discrete version of the equation $\Delta u=f$ that we will
solve by iteration.  There is   a nice probabilistic interpretation under which
the terms $\frac {1}{2} M_{\und n + \und e_1 - \und e_2}$ and $\frac {1}{2} M_{\und n - \und e_1 + \und e_2}$ will be interpreted by saying that $n_1\to n_1\pm 1$ with probability $1/2$,
see the process $\und n(t)$ defined below.
With this in mind we introduce a Markov chain
$\xi(t), t\in \mathbb N$, $\xi(t)\in \Om$, where
 \begin{eqnarray}
    \label{3.4}
 &&   \Om = \bigcup_{i=1}^{k-1} \Om_i,\quad
\Om_i = \{ (i,x): 1\le x \le N_{i+1}-1\},\quad i< k-1  \nn\\&&
\hskip2cm \Om_{k-1}=
 \{(k-1,x): 0 \le x \le N_k\}
   \end{eqnarray}
The transition probabilities $P(\cdot,\cdot)$ are all set equal to 0 except
 \begin{eqnarray}
    \label{3.5.0}
&& P((i,x),(i,y) )= \frac 12, \; |x-y|=1;\nn
\\&&
P((i,1),(i+1,N_{i+1}) )= P((i,N_{i+1}-1),(i+1,N_{i+1}) )= \frac 12
\nn\\&&
P((k-1,0),(k-1,0))=P((k-1,N_k),(k-1,N_k)) =1
   \end{eqnarray}
 The first line in \eqref{3.5.0} describes the motion on the components $\Om_i$ of $\Om$;
 the second one the jump from $\Om_i$ to $\Om_{i+1}$ (the reverse jump having 0 probability)
 while the last line says that the endpoints of $\Om_{k-1}$ are ``traps'', namely once the chain reaches those points it is stuck there forever.

We start the chain at time 0 from
 \begin{equation}
    \label{3.6}
\xi(0) = (1,n_1)
   \end{equation}
We call $\tau_i$, $i=0,..,k-2$, the first time $t$ when $\xi(t)\in \Om_{i+1}$ ($\tau_0=0$) and define for $i\ge 1$, $\si_i= \pm$ if $\xi(\tau_i-1)=N_{i+1}$, respectively $\xi(\tau_i-1)=1$.
For $\tau_i \le t<\tau_{i+1}$ $\xi(t)$ is
a simple symmetric random walk,   thus, by classical estimates,
there are constants $b>0$, $c>0$ so that
 \begin{equation}
    \label{3.7}
P[\tau_{i+1}-\tau_i > s] \le c e^{-b N_{i+2}^{-2}s}
   \end{equation}
%
%
%

To establish a connection with $M_{\und n}(\und u)$ and the iterates of \eqref{F5.4.1}
we introduce new processes $\und n(t)$, $f(t)$ and $a(t)$ which are all
``adapted to the canonical filtration $\mathcal F_t$'', calling
$\theta(t)$   adapted to the canonical filtration $\mathcal F_t$
if $\theta(t)$ is determined by $\{\xi(s), s\le t\}$.

Let $\und n(t)=(n_1(t),..,n_k(t))$ be defined as follows.  When $t < \tau_1$
we set
 \begin{equation}
    \label{3.8}
\und n(t) = \big(\xi(t),N_2-\xi(t),n_3,..,n_k \big)
   \end{equation}
For $t=\tau_1$:
\begin{equation}
    \label{3.8}
\und n(\tau_1) = \begin{cases}\big(N_2,0,n_3,..,n_k \big) &\text{if $\si_1=+$}
\\\big(0,N_2, n_3,..,n_k \big) &\text{if $\si_1=-$}\end{cases}
   \end{equation}
In the interval $\tau_1 \le t < \tau_2$
\begin{equation}
    \label{3.9}
\und n(t) = \begin{cases}\big(\xi(t),0,N_3-\xi(t),n_4,..,n_k \big) &\text{if $\si_1=+$}
\\\big(0,\xi(t), N_3-\xi(t),n_4..,n_k \big) &\text{if $\si_1=-$}\end{cases}
   \end{equation}
By iteration the definition is extended to all $t\in \mathbb N$.  The process $\und n(t)$ is indeed quite simple: fix $2\le i\le k$, then  $n_i(t)=n_i(0)$ for $t\le \tau_{i-2}$ after that it performs a simple symmetric random walk with absorption at 0.  In the time
interval $\tau_{i-2} \le t \le \tau_{i-1}$ all $n_j(t)=0$ with $j<i$ except one, whose label
is denoted $\ell_i$, which  jumps   with opposite sign as $n_i(t)$. For $j>i$,
$n_j(t)=n_j(0)$ in $\tau_{i-1} \le t \le \tau_i$.

The process $f(t)$ is defined as
\begin{equation}
    \label{3.10}
f(t) = \prod_{i=1}^k u_i^{n_i(t)}
   \end{equation}
while $a(t)$  is defined by setting
\begin{equation}
    \label{3.11}
a(t) =  \frac 12\{u_{\ell_i}^{n_{\ell_i}(t)-1}
u_{i}^{N_{i}-n_{\ell_i}(t)-1}
\prod_{j>i} u_j^{n_j}\}
\Big(u_{\ell_i} - u_{i} \Big)^2,\quad
 \tau_{i-2} \le t <\tau_{i-1}
   \end{equation}

 We are going to prove that $f(0) =M_{\und n}(\und u)=u_{1}^{n_1}\cdots u_{k}^{n_k}$ is equal to
\begin{equation}
    \label{3.12}
f(0) = E\Big[ f(t)  \Big] - \sum_{s=-1}^{t-1} E\Big[ a(s)  \Big],\quad a({-1})=0,\;t\ge 0
   \end{equation}
\eqref{3.12} will be proved by showing that
\[
m(t):=f(t)-\sum_{s=-1}^{t-1}   a(s) \; \;\text{is a $\mathcal F_t$-martingale}
\]
namely that
\[
E[m(t+1)\,|\,\mathcal F_t]=0,\quad E[f(t+1)-f(t)\,|\,\mathcal F_t]= a(t)
\]
Since we are conditioning on $\mathcal F_t$ we know the process till time $t$,
suppose that 
$\tau_{i-2} \le t < \tau_{i-1}$, call $\und n(t)= (n'_1,...,n'_{i+1}, n_{i+2},..,n_k)$,
so that all   $n'_j=0$ with $j<i$ except $\ell_i$,
while $n'_j=n_j(0)=n_j$ for $j>i$.
%
Then, by \eqref{3.10}, 
\[
f(t) = u_{\ell_i}^{n'_{\ell_i}}u_{i}^{n'_{i}} \prod_{j=i+1}^k
u_j^{n_j}
\]
and
\[
E[f(t+1)\,|\,\mathcal F_t]= \frac 12 \{u_{\ell_i}^{n'_{\ell_i}+1}u_{i}^{n'_{i}-1}
+u_{\ell_i}^{n'_{\ell_i}-1}u_{i}^{n'_{i}+1}\} \prod_{j=i+1}^k u_j^{n_j}
\]
so that $E[f(t+1)\,|\,\mathcal F_t]- f(t)\,$ is equal to
\[
 \frac 12 \{u_{\ell_i}^{2}
+ u_{i}^{2}-2 u_{\ell_i} u_{i} \} \{u_{\ell_i}^{n'_{\ell_i}-1}u_{i}^{n'_{i}-1} \prod_{j=i+1}^k u_j^{n_j}\}
=  \frac 12 \big(u_{\ell_i}-u_{i}\big)^{2}  u_\ell^{n'_{\ell_i}-1}u_{i}^{n'_{i}-1} \prod_{j=i+1}^k u_j^{n_j}
\]
which is equal to $a(t)$.  Thus $E[f(t+1)-a(t)\,|\,\mathcal F_t] =0$ and therefore $m(t)$ is a martingale.

Since $P[\tau_{k-1}< \infty]=1$ we can take the limit as $t\to \infty$ in \eqref{3.12} which yields \eqref{F5.1} with
\begin{equation}
    \label{3.13}
 p_1 = \frac 12 P[\si_j = +, j\ge 1],\quad
  p_i = \frac 12 P[\si_{i-1} = -, \si_j = +, j\ge i], \; i>1
   \end{equation}
    \begin{eqnarray}
    \label{3.14}
c_{i,j;m} &=& - \frac 12 P\Big[\si_{i-1} = -, \si_n = +, i\le n \le j-1\Big] \nn\\ &\times&
\sum_{t\ge 0}
P_{N_{j-1}}\Big[x^0(t)=m, x^0(s) \in [1, N_j-1], s \le t\Big]
   \end{eqnarray}
   where $x^0(t)$ is a simple symmetric random walk which starts from
   $N_{j-1}$.

\end{proof}

\vskip1cm

%
%
%
%
%
%
%
%
%

\vskip2cm

\appendix

\setcounter{equation}{0}

\section{Proof of Theorem \ref{thm101.2}}
\label{app.A}
%

We preliminary observe that
for any $h_{\rm ext}>0$ there is $m$ so that
$h_{\rm ext} +m = f'_\la(m)$: in fact  $h_{\rm ext} +m - f'_\la(m)$
is positive at $m=0$ and negative as
$m\to 1$ with $f'_\la(m)$ continuous.
If there are several $m$ for which
the equality holds we arbitrarily fix one of them that we denote
by $m_{h_{\rm ext}}$, we shall see a posteriori that there is
uniqueness.  To compute the left hand side
of \eqref{102.3.00} we introduce an interpolating hamiltonian.  For $t\in [0,1]$ we set:
 \begin{eqnarray}
    \label{102.3.01}
H_{t,\ga,L}  (\si)&=& t H^{\rm per}_{\ga, h_{\rm ext},L}(\si) + (1-t) H^0_L (\si)\\
 H^0_L (\si) &=& H^{\rm vert}_L(\si) +H_{h_{\rm ext},L}- \sum_{(x,i)\in \La} m_{h_{\rm ext}} \si(x,i)\nn
   \end{eqnarray}
Denote by $Z^0_L$ the partition function with  hamiltonian  $H^0_L$, by
$P_{t,\ga,L}$ the Gibbs measure with hamiltonian $H_{t,\ga,L}$  and by
$E_{t,\ga,L}$ its expectation, then
 \begin{equation}
    \label{102.3.02}
\log Z_{\ga, h_{\rm ext},L}^{\rm per} - \log Z^0_L = \int_0^1 E_{t,\ga,L}[
H^0_L-H^{\rm per}_{\ga, h_{\rm ext},L}] dt
   \end{equation}
The thermodynamic limit of $\log Z^0_L/|\La|$ is the pressure of the $d=1$ Ising model with only vertical
interactions and magnetic field $h_{\rm ext} +m_{h_{\rm ext}}$, thus, by the choice of
$m_{h_{\rm ext}}$:
\begin{equation}
    \label{102.3.03}
\lim_{L\to \infty} \frac{\log Z^0_L}{|\La|} =  (h_{\rm ext} +m_{h_{\rm ext}})m_{h_{\rm ext}}
- f_\la(m_{h_{\rm ext}})
   \end{equation}
To compute  the left hand side
of \eqref{102.3.00} we need to control the expectation on the right hand side of
\eqref{102.3.02} that we will do by exploiting the assumptions on $h_{\rm ext}$ which imply
the validity of the Dobrushin uniqueness criterion as
we are going to show.  The criterion involves the Vaserstein distance of the conditional probabilities $P_{t,\ga,L}[ \si(x,i)\;|\; \{\si(y,j)\}]$ of a spin $\si(x,i)$ under different
values of the conditioning spins $\{\si(y,j), (y,j)\ne (x,i)\}$.  In the case of Ising spins such Vaserstein distance is simply equal to the absolute value of the difference of the conditional expectations and the criterion requires that for any pair of spin configurations
outside $(x,i)$
 \begin{eqnarray}
    \label{101.9}
 &&\hskip-.5cm |
E_{t,\ga,L}[ \si(x,i)\;|\: \{\si(y,j)\}]-E_{t,\ga,L}[ \si(x,i)\;|\: \{\si'(y,j)\}]|\nn\\&& \hskip1cm
 \le \sum_{y,j}r_{\ga,L}(x,i;y,j)|\si'(y,j)-\si(y,j)|, \qquad
\sum_{y,j}r(x,i;y,j) \le r <1
   \end{eqnarray}
Since
  \begin{eqnarray*}
&&E_{t,\ga,L}[ \si(x,i)\;|\: \{\si(y,j)\}] =\tanh \Big\{ t\sum_{y\ne x} J_{\ga,L}(x,y) \si(y,i)+(1-t)m_{h_{\rm ext}}\\&&\hskip2cm
+\la[\si(x,i+1)+\si(x,i-1)] + h_{\rm ext}\Big\}
   \end{eqnarray*}
($J_{\ga,L}(x,y)$  the kernel $J_\ga(x,y)$ with periodic boundary conditions in $\La$) one can easily check that \eqref{101.9} is satisfied with $r$  as in \eqref{101.8} and
$r(x,i;y,j)=r_{\ga,L}(x,i;y,j)$ with
 \begin{equation}
    \label{101.9.01}
  r_{\ga,L}(x,i;y,j) = \cosh^{-2}( h_{\rm ext}-1-2\la)\Big( J_{\ga,L}(x,y)\mathbf 1_{j=i}
+\la \mathbf 1_{x=y; j=i\pm 1\;{\rm mod} \, L}\Big)
   \end{equation}

By the Dobrushin uniqueness theorem
there is a unique DLR measure
$P_{t,\ga}$ which is the weak limit of $P_{t,\ga,L}$ as $L\to \infty$.  We denote
by $m_{t,\ga,L}$ and $m_{t,\ga}$ the  average of a spin under  $P_{t,\ga,L}$ and $P_{t,\ga}$.
We call $\nu^0_L$ and $\nu^0$ the measures $P_{t,\ga,L}$ and $P_{t,\ga}$ when $t=0$,
thus $\nu^0_L$ is
the Gibbs measure for the Ising system in $\La$ with hamiltonian $H^{\rm vert}$ and magnetic field
$h_{\rm ext}+m_{h_{\rm ext}}$,  $\nu^0$ denoting its thermodynamic limit.  We then have
 \begin{equation}
    \label{101.9.01.1}
 \lim_{L\to\infty} m_{t,\ga,L} = m_{t,\ga},\qquad  \lim_{L\to\infty} m_{0,\ga,L} = m_{h_{\rm ext}}
   \end{equation}

 It also follows from the Dobrushin theory
that under $P_{t,\ga,L}$ the spins are weakly correlated:
let $z\ne x$ then
 \begin{eqnarray}
    \label{101.10}
&&|E_{t,\ga,L}[ (\si(x,i)-m_{t,\ga,L})\;|\:  \si(z,i) ]|\le  2\;\sum_{n\ge 0} \sum^*_{y_1,j_1,..,y_n,j_n}
r_{\ga,L}(x,i;y_1,j_1)\cdots \nn\\&& \hskip1cm\cdots r_{\ga,L}(y_{n-1},j_{n-1};y_n,j_n)
r_{\ga,L}(y_n,j_n;z,i)
   \end{eqnarray}
where the $*$sum means that all the pairs
$(y_k,j_k), k=1,..,n$ must differ from $(z,i)$.  Thus
there is   a constant $c$ so that
 \begin{eqnarray}
    \label{101.11}
&&|E_{{t,\ga,L}}[ (\si(x,i)-m_{t,\ga,L})\;|\:  \si(z,i) ]|  \le c \ga
   \end{eqnarray}
and also (after using Chebitchev)
 \begin{equation}
    \label{101.12}
E_{{t,\ga,L}}\Big[ |\sum_y J_{\ga,L}(x,y) (\si(y,i)-m_{t,\ga,L})| \Big]\le c \ga
   \end{equation}

We can also use the Dobrushin technique to estimate the Vaserstein distance
between  $P_{t,\ga,L}$ and $\nu^0_L$.  The key bound is again the
Vaserstein distance between single spin conditional expectations.  We have
\begin{eqnarray}
    \label{101.9.000}
 &&\hskip-.5cm
 E_{t,\ga,L}[ \si(x,i)\;|\: \{\si(y,j)\}]-E_{\nu^0_L}[ \si(x,i)\;|\: \{\si'(y,j)\}]\nn\\&& \hskip.5cm =\tanh \Big\{ t\sum_{y\ne x} J_{\ga,L}(x,y) \si(y,i)+(1-t)m_{h_{\rm ext}}
+\la[\si(x,i+1)+\si(x,i-1)] + h_{\rm ext}\Big\}\nn\\&&\hskip1cm - \tanh \Big\{
+\la[\si'(x,i+1)+\si'(x,i-1)] + h_{\rm ext}+ m_{h_{\rm ext}}\Big\}
   \end{eqnarray}
thus, calling $A:= \cosh^{-2}( h_{\rm ext}-1-2\la)$, we can bound the absolute value of the left hand side of \eqref{101.9.000}  by:
 \begin{eqnarray*}
    \label{101.9.0000}
 && \sum_{j=i\pm 1} r_{\ga,L}(x,i;x,j) |\si(x,j)-\si'(x,j)| + At |
 \sum_y J_{\ga,L}(x,y) \si(y,i) - m_{h_{\rm ext}}|
    \end{eqnarray*}
After adding and subtracting $m_{t,\ga,L}$ to each $\si(y,i)$ and recalling that
$\sum_y  J_{\ga,L}(x,y)=1$,  we use the  Dobrushin analysis to claim that
there exists a joint representation $\mathcal P_{t,\ga,L}$ of $P_{t,\ga,L}$ and $\nu^0_L$ such that
  \begin{eqnarray}
    \label{101.14}
&& \mathcal E_{t,\ga,L}[ |\si(x,i)-  \si'(x,i)|] \le \sum_{j=i\pm 1} r_{\ga,L}(x,i;x,j)
\mathcal E_{t,\ga,L}[ |\si(x,j)-\si'(x,j)|]\nn\\&&\hskip.3cm +At\Big(
\mathcal E_{t,\ga,L} [|\sum_y J_\ga(x,y) (\si(y,i)-m_{t,\ga,L})|] +|m_{t,\ga,L}-m_{h_{\rm ext}}|\Big)
   \end{eqnarray}
Since $\sum_y J_\ga(x,y) (\si(y,i)-m_{t,\ga,L})$ does not depends on $\si'$ we can replace the $\mathcal E_{t,\ga,L}$ expectation by the $ E_{t,\ga,L}$ expectation and after using \eqref{101.12} we get by iteration
\begin{equation}
    \label{101.15}
 \mathcal E_{t,\ga,L}[ |\si(x,i)-  \si'(x,i)|] \le \frac{At}{1-r} \Big(
c\ga +|m_{t,\ga,L}-m_{h_{\rm ext}}|\Big)
   \end{equation}
with $r$  as  in \eqref{101.8}.
Since $|m_{t,\ga,L}-m_{0,\ga,L}|\le  \mathcal E_{t,\ga,L}[| \si(x,i)]-  \si'(x,i)|]$,
\eqref{101.15} yields
\begin{equation*}
    \label{101.15}
 |m_{t,\ga,L}-m_{0,\ga,L}| \le \frac{At}{1-r} \Big(c\ga
 +|m_{t,\ga,L}-m_{0,\ga,L}| +|m_{0,\ga,L}-m_{h_{\rm ext}}|\Big)
   \end{equation*}
By \eqref{101.8}  $\frac{At}{1-r} \le \frac{r}{1-r}   < \frac 13$, so that
\begin{eqnarray}
    \label{101.16}
&& \frac 23 |m_{t,\ga,L}-m_{0,\ga,L}| \le  \frac 13\Big( c\ga+|m_{0,\ga,L}-m_{h_{\rm ext}}|\Big)\nn\\&&
 |m_{t,\ga,L}-m_{h_{\rm ext}}| \le |m_{h_{\rm ext}}-m_{0,\ga,L}|+
 (c\ga+|m_{0,\ga,L}-m_{h_{\rm ext}}|)
   \end{eqnarray}
Thus $m_{t,\ga,L} \to m_{h_{\rm ext}}$ as first
$L\to \infty$ and then $\ga\to 0$.  This holds
for all $t$ and in particular for $t=1$ hence properties (i) and (ii) are proved. Moreover, since $m_\ga\equiv m_{1,\ga}$ converges as $\ga\to 0$ to $m_{h_{\rm ext}}$ the latter is uniquely determined, as a consequence the equation $ h_{\rm ext} +m= f'_\la(m)$ has a unique solution
$m_{h_{\rm ext}}$ which is the limit of $m_\ga$ as $\ga\to 0$.
To prove (iii) we go back to \eqref{102.3.02} and observe that
   \begin{equation*}
H^0_L-
H^{\rm per}_{\ga, h_{\rm ext},L}
= \sum_{(x,i)\in \La} \si(x,i)\Big(\frac 12\sum_{y\ne x}J_\ga(x,y)\si(y,i)
- m_{h_{\rm ext}}\Big)
   \end{equation*}
   Therefore
     \begin{eqnarray*}
&&|E_{t,\ga,L}[H^0_L-
H^{\rm per}_{\ga, h_{\rm ext},L}]-\sum_{(x,i)\in \La}  E_{t,\ga,L}[\si(x,i)]\Big(\frac {m_{t,\ga,L}}2
- m_{h_{\rm ext}}|\Big)|\nn\\&& \hskip1cm\le \sum_{(x,i)\in \La} \frac 12 E_{t,\ga,L}[|
\sum_{y\ne x}J_{\ga,L}(x,y)(\si(y,i)-m_{t,\ga,L})|]
  \le |\La| c\ga
   \end{eqnarray*}
   \eqref{102.3.02} and  \eqref{102.3.03} then yield \eqref{102.3.00}
 because $m_{t,\ga,L} \to m_{h_{\rm ext}}$ as $L\to \infty$ and then $\ga\to 0$.
 This is the same as taking the inf over all $m$ because we have already seen that
$ h_{\rm ext} +m= f'_\la(m)$ has a unique solution.

\vskip1cm

\setcounter{equation}{0}

\section{Proof  of Theorem \ref{Fthm4.1}}
\label{app.B}

%
%
%
%
%
%
%

Following Lebowitz and Penrose we do coarse graining on a scale $\ell$,
$\ell$ the integer part of $\ga^{-1/2}$. Without loss of generality
we restrict $L$ in \eqref{102.2} to be an integer multiple of $\ell$.  We then split each horizontal line in $\La$ into $L/\ell$ consecutive intervals of length $\ell$ and call
$\mathcal I$ the collection of all such intervals in $\La$.  Thus
\[
\mathcal M_\ell = \{ -1, -1+ \frac{2}{\ell},\dots,1- \frac{2}{\ell},  1 \}
\]
is the set of all possible values of the empirical spin magnetization in an interval $I\in \mathcal I$.  We
denote by $\und M$ the set of all functions
$\und m=\{m(x,i), (x,i)\in \La\}$ on $\La$ with values
in $\mathcal M_\ell$ which are constant on each one of the intervals $I$ of $\mathcal I$.  Due to the smoothness assumption on the Kac potential there is $c$ so that for all $\si$, $\ga$ and $L$
 \begin{equation}
    \label{102.4}
\Big| \sum_{(x,i)\in \La}\frac{1}{2}J_{\ga,L}(x,y)\big(\si(x,i)-m(x,i|\si)\big) \Big| \le c\ga^{1/2} \La
   \end{equation}
where, denoting by $I_{x,i}$ the interval in $\mathcal I$ which contains $(x,i)$,
 \begin{equation}
    \label{102.5}
m(x,i|\si)= \frac 1\ell \sum_{y:(y,i)\in I_{x,i}} \si(y,i)
   \end{equation}
Thus $m(x,i|\si)$ does not change when $(x,i)$ varies in an interval of $\mathcal I$
and therefore $\und m=\{m(x,i|\si),(x,i)\in \La\}\in \und M$.
Then the partition function
 \begin{equation}
    \label{102.6}
 Z_{\ga, L}:= \sum_{\und m\in \und M} e^{\frac 12  \sum_{i,x,y} J_{\ga,L}(x,y)
 m(x,i)m(y,i)+
 \sum_{x,i}h_{\rm ext}m(x,i)} \sum_\si \mathbf 1_{\und m(\cdot|\si) = \und m} e^{- H^{{\rm vert}}_L(\si)}
   \end{equation}
has the same asymptotics as $Z_{\ga, h_{\rm ext},L}^{\rm per}$ in the sense that
 \begin{equation}
    \label{102.7}
 \lim_{\ga\to 0}\lim_{L\to \infty} \frac 1{|\La|}\Big|\log Z_{\ga, L}-
 \log Z_{\ga, h_{\rm ext},L}^{\rm per}\Big| =0
   \end{equation}
We next change the vertical interaction $H^{{\rm vert}}_L(\si)$ by replacing
\[
-\la \si(x,n\ell)\si(x,n\ell+1) \to -\la \si(x,n\ell)\si(x,(n-1)\ell+1)
\]
and call $H^{{\rm vert}}_\ell(\si)$ the new vertical energy.  We then split each
vertical column into intervals of length $\ell$, calling $I'$ such intervals
and $\Delta$ the squares $I\times I'$. Let $\Delta=I\times I'$, $m_\Delta$ the restriction of $\und m$ to $\Delta$, so that $m_\Delta(x,i)$, $x\in I, i\in I'$ is only a function of
$i$ with values in $\mathcal M_\ell$.  Recalling the definition \eqref{F4.3} of
$\phi_\ell(m_\Delta)$
we have that $Z_{\ga, L}$ has the same asymptotics as
\begin{equation}
    \label{102.9}
 Z_{\ga, L,\ell}:= \sum_{\und m\in \und M} e^{\frac 12  \sum_{i,x,y} J_{\ga,L}(x,y)
 m(x,i)m(y,i)+
 \sum_{x,i}\{h_{\rm ext}m(x,i)-\phi_\ell(m_{\Delta_{x,i}})\} }
   \end{equation}
where $\Delta_{x,i}$ denotes the square $\Delta$ which contains $(x,i)$.

The cardinality of  $\und M$ is $\dis{\ell^{|\La|/\ell}}$, hence
$Z_{\ga, L,\ell}$ has the same asymptotics as
\begin{equation}
    \label{102.10}
 Z^{\rm max}_{\ga, L,\ell}:= \max_{\und m\in \und M} e^{\frac 12  \sum_{i,x,y} J_{\ga,L}(x,y)
 m(x,i)m(y,i)+
 \sum_{x,i}\{h_{\rm ext}m(x,i)-\phi_\ell(m_{\Delta_{x,i}})\} }
   \end{equation}
Recalling the definition \eqref{F4.4} of $ Z^{\rm max}_{\Delta}$,
we are going to show that
\begin{equation}
    \label{102.12}
 \frac 1{|\La|}\log Z^{\rm max}_{\ga, L,\ell} =  \frac 1{|\Delta|}
 \log  Z^{\rm max}_{\Delta}
   \end{equation}
To prove \eqref{102.12} we write
\[
 m(x,i)m(y,i) = \frac 12 \Big(m(x,i)^2+m(y,i)^2\Big) -
  \frac 12 \Big(m(x,i)-m(y,i)\Big)^2
\]
and use that $\sum_y J_{\ga,L}(x,y)=1$.  In this way the exponent in the right hand side of \eqref{102.10} becomes a sum over all the squares $\Delta$ of terms which depend on $m_\Delta$
plus an interaction given by
\[
-\sum_{i,x,y} J_{\ga,L}(x,y)
  \frac 12 \Big(m(x,i)-m(y,i)\Big)^2
\]
Due to the minus sign the maximizer is obtained when all $m_\Delta$ are equal to each
other and to the maximizer
in  \eqref{F4.4}.  To complete the proof of \eqref{102.12} we still need to prove
the bound on the magnetization:

%
%

\medskip

\begin{prop}
There are $\la_0>0$ and $m_+< 1$ so that for any $\la\le \la_0$ the maximum
in \eqref{102.10} is achieved on configurations $m_\Delta$ such that
for all $(x,i)\in \Delta$,
$|m_\Delta(x,i)| \le m_+$.

\end{prop}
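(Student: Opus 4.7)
The idea is a single-layer swap: assuming a maximizer $\hat{\und m}$ of \eqref{102.10} has some $|m(x,i_0)|>m_+$, I would replace the layer-$i_0$ magnetization in the containing block by a value near the argmax of the single-layer effective free energy, and show that for $\la$ small this strictly increases the exponent, contradicting maximality. The mechanism is that the reference $1D$ entropy $S(m)$ from \eqref{101.4.00} has infinite slope at $\pm 1$, so the single-layer function $g(m):=m^2/2+h_{\rm ext}m+S(m)$ is maximized at an interior point $m^{*}(h_{\rm ext})$ that stays strictly inside $(-1,1)$ uniformly in $h_{\rm ext}\in[0,h^{*}]$, and small $\la$ cannot close the corresponding gap.

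\textbf{Key steps.} After the identity $m(x,i)m(y,i)=\tfrac12[m(x,i)^2+m(y,i)^2]-\tfrac12(m(x,i)-m(y,i))^2$ used to prove \eqref{102.12}, the maximizer of \eqref{102.10} has the same $m_\Delta$ on every block, so it suffices to bound the layer magnetizations in the single-block problem of maximizing
\begin{equation*}
F(m_1,\dots,m_\ell):=\tfrac{1}{\ell}\sum_{i=1}^\ell\bigl[m_i^2/2+h_{\rm ext}m_i\bigr]-\phi_\ell(m_\Delta),\quad m_i=m_\Delta(\cdot,i)\in\mathcal M_\ell.
\end{equation*}
Since $|H^{{\rm vert}}_\ell(\si)|\le 2\la|\Delta|$ and the number of layer-$i$ spin configurations with magnetization $m_i$ is $\binom{\ell}{\ell(1+m_i)/2}$,
\begin{equation*}
\Bigl|\phi_\ell(m_\Delta)+\tfrac{1}{\ell}\sum_{i=1}^\ell S_\ell(m_i)\Bigr|\le 2\la,\qquad S_\ell(m):=\tfrac{1}{\ell}\log\tbinom{\ell}{\ell(1+m)/2}.
\end{equation*}
More importantly, if $\tilde m_\Delta$ differs from $m_\Delta$ only in layer $i_0$ (new value $m'$), only the $2\ell$ vertical bonds incident to layer $i_0$ are affected; factorizing the sum over $\si_{i_0}$ out of the conditional partition function up to an $e^{\pm 2\la\ell}$ envelope yields
\begin{equation*}
\phi_\ell(\tilde m_\Delta)-\phi_\ell(m_\Delta)=-\tfrac{1}{\ell}\bigl[S_\ell(m')-S_\ell(m_{i_0})\bigr]+R,\qquad |R|\le 4\la/\ell.
\end{equation*}

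\textbf{Reaching the contradiction.} Set $g_\ell(m):=m^2/2+h_{\rm ext}m+S_\ell(m)$; then $S_\ell\to S$ uniformly on compacta of $(-1,1)$ and $g'(m)=m+h_{\rm ext}-\mathrm{arctanh}(m)\to\mp\infty$ as $m\to\pm 1$, so for each $h_{\rm ext}\in[0,h^{*}]$ the argmax $m^{*}(h_{\rm ext})$ of $g$ is unique, interior, and continuous in $h_{\rm ext}$; in particular $M:=\sup_{[0,h^{*}]}m^{*}(h_{\rm ext})<1$. Pick any $m_+\in(M,1)$; then $\delta_0:=\inf_{h_{\rm ext}\in[0,h^{*}]}\bigl\{g(m^{*}(h_{\rm ext}))-\sup_{|m|\ge m_+}g(m)\bigr\}>0$. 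For $\ell$ large one can choose $m'\in\mathcal M_\ell$ within $2/\ell$ of $m^{*}(h_{\rm ext})$, so that $g_\ell(m')-g_\ell(m_{i_0})\ge\delta_0/2$ whenever $|m_{i_0}|\ge m_+$. Combining with the swap identity,
\begin{equation*}
F(\tilde m_\Delta)-F(m_\Delta)\ge\tfrac{1}{\ell}\bigl[g_\ell(m')-g_\ell(m_{i_0})\bigr]-\tfrac{4\la}{\ell}\ge\tfrac{\delta_0/2-4\la}{\ell},
\end{equation*}
which is strictly positive for $\la\le\la_0:=\delta_0/16$. This contradicts maximality, so $|\hat m_\Delta(\cdot,i)|\le m_+$ for every $i$.

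\textbf{Main obstacle.} The delicate point is preserving the boundary error $R$ at order $\la/\ell$ rather than $\la$: the per-swap gain is itself only $\delta_0/\ell$, so an $O(\la)$ error would swamp it. This forces a clean isolation of the $2\ell$ bonds incident to layer $i_0$ in the conditional partition function, so that the $\si_{i_0}$-sum decouples, up to an $e^{\pm 2\la\ell}$ multiplicative envelope, from an $m_{i_0}$-independent residual; once this factorization is in place the rest of the argument is arithmetic.
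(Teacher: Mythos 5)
Your proposal is correct and takes essentially the same route as the paper: both arguments condition on the spins in the layers adjacent to the offending layer, use $|\si(x,i+1)+\si(x,i-1)|\le 2$ to control the swapped layer's vertical energy by an $e^{\pm 2\la\ell}$ envelope (hence the slack $4\la$ per site), and then invoke the concavity and interior maximum of the single-layer function $m^2/2+h_{\rm ext}m+S(m)$. The only cosmetic differences are that the paper fixes $m^*=m^*(h^*)$ and exploits monotonicity in $h_{\rm ext}$ to avoid the uniform infimum $\delta_0$, and proves a pointwise domination of the conditional inner sum rather than bounding the difference of $\phi_\ell$; both are equivalent in substance.
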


\medskip

\begin{proof}
Given $h>0$ let $S(m)$ be the  entropy defined in \eqref{101.4.00} and let $m_h$ be such that
   \begin{equation}
    \label{102.15}
 -[S'(m_h) + m_h]= h
   \end{equation}
Call $m^*$ the value of $m_h$ at $h^*$, $h^*$ as in \eqref{101.17} and   choose $m_+> m^*$.
Fix any horizontal line $i$ in $\Delta$, take
a magnetization $m_i$ such that $m_i\ge m_+$, it is then sufficient to prove that
for all $\si(x,i+1)+\si(x,i-1) = : h_i(x)$,
  \begin{equation}
    \label{102.16}
e^{-\ell U(m_i)} \sum_{\si} \mathbf 1_{\sum \si(x) = \ell m_i}
e^{\la \sum_x \si(x) h_i(x)}
 \le e^{-\ell U(m^*)} \sum_{\si} \mathbf 1_{\sum \si(x) = \ell m^*}
e^{\la \sum_x \si(x) h_i(x)}
   \end{equation}
where $\dis{U(m) = - \frac{m^2}2- h_{\rm ext} m}$. Since $|h_i| \le 2$, this is implied (for $\ell$ large enough) by
  \begin{equation}
    \label{102.17}
 -  U(m_i)  +S(m_i) + 4\la
 <  - U(m^*)  +S(m^*)
   \end{equation}
Since $m_i>m^*$ and $h_{\rm ext} \le h^*$, \eqref{102.17} is implied by
  \begin{equation}
    \label{102.21}
  \frac{m_i^2}2 +h^* m_i + S(m_i) +4\la <  \frac{(m^*)^2}2 +h^* m^*  +S(m^*)
   \end{equation}
The function $m^2+S(m) + h^*m$ is strictly concave in a neighborhood of $m^*$ where
it reaches its maximum, hence (recalling that $m_i\ge m_+>m^*$
 \begin{equation*}
   \Big( \frac{(m^*)^2}2 +h^* m^*  +S(m^*)\Big)
  -\Big( \frac{m_i^2}2 +h^* m_i + S(m_i)\Big)
   \end{equation*}
is strictly positive and \eqref{102.16} follows for  $\la$ small enough.

\end{proof}

\setcounter{equation}{0}

\section{Cluster expansion}
\label{app.C}

In this appendix we will  study the partition function
$Z^*_{\ell,\und h }$ defined in \eqref{103.1}
using cluster expansion.

\vskip.5cm

\subsection{Reduction to a gas of polymers}
\label{Fsubsec.C.1}
We shall first prove in Proposition \ref{Fprop.5.1} below that  $Z^*_{\ell,\und h }$ can be written as
the partition function of a gas of polymers  $\Ga$.  The definition of polymers and the main notation of this section are given below.

\begin{itemize}

\item $\Ga = (C,S,X)$ denotes a polymer, $C$
its spatial support,  $X$ and $S$ its specifications.
$C$ is a collection of pairs of consecutive points ($L$ and $1$ being consecutive points),
and calling connected
two pairs  if they have a common point, then $C$ is connected. We write
$i\in C$ or sometimes $i\in \Ga$ if $i$ is in one of the pairs of $C$.
Each pair in $C$ is either a $X$-pair or a $S$-pair, $S$ and $X^*$ are the collection of all the  $S$ and respectively all the $X$ pairs.
$X$ is the set of all  points $i$ which belong
to one and only one of the  $X$-pairs.

\item $|C|$ is the number of pairs in $C$, $|S|$ the number of pairs in $S$
and $|X|$ the number of points in $X$. It follows directly from its definition that  $|X|$
is even.

\item $\Ga$ and $\Ga'$ are compatible, $\Ga \sim \Ga'$, if the spatial supports
of  $\Ga$ and $\Ga'$ do not have any point in common.

\item $w(\Ga)$ is the weight of the polymer $\Ga$. If $C$ consists of all the possible pairs
(so that $|C|=\ell$) and
$S=\emptyset$ then we set
\begin{equation}
    \label{1.8.0}
w(\Ga) = \sinh ( \la) ^{\ell} 
   \end{equation}
Otherwise:
\begin{equation}
    \label{1.8}
w(C,S,X) = \sinh ( \la) ^{|C|} \Big( \frac {[\cosh ( \la)-1]}{\sinh ( \la)}\Big)
^{|S|}
\{\prod_{x\in X}  u_{x}\}, \quad   u_{x} = \tanh (h_x)
   \end{equation}

\end{itemize}

Each $X$-pair in $\Ga$ contributes to the weight of $\Ga$ by a factor
$\sinh ( \la)$ while each $S$-pair contributes with a factor
$[\cosh ( \la)-1]$, as it readily follows from \eqref{1.8}.  The dependence of the weight on $h_i$ is through the terms $u_{i}$, $i \in X$.

\medskip

\begin{prop}
\label{Fprop.5.1}
Let $\Ga$ and $w(\Ga)$ be as above, then
\begin{equation}
    \label{1.7}
Z^*_{\ell,\und h }=\sum_{\underline \Ga} \prod_{\Ga\in \und \Ga}w(\Ga)
   \end{equation}
where the sum is over all collections $\und \Ga=\Ga_1,..,\Ga_n$ of mutually compatible polymers.
\end{prop}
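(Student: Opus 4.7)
The plan is to carry out the standard high-temperature (polymer) expansion of the one-dimensional Ising partition function on the cycle. The starting point is the identity
\[
e^{\la\si(i)\si(i+1)} = 1 + (\cosh\la -1) + \sinh\la\;\si(i)\si(i+1),
\]
which splits each bond factor into three terms that I shall label $\emptyset$, $S$ and $X$ respectively. Inserting this into $Z_{\ell,\und h}$ from \eqref{F4.10} and expanding the product over the $\ell$ cyclic bonds, I obtain a sum over all labellings $L:\{1,\dots,\ell\}\to\{\emptyset,S,X\}$; the contribution of $L$ is $(\cosh\la-1)^{|S^*(L)|}\sinh(\la)^{|X^*(L)|}$ times the spin factor $\prod_i e^{h_i\si(i)}\prod_{i:\,L(i)=X}\si(i)\si(i+1)$, where $S^*(L)$ and $X^*(L)$ denote the sets of $S$- and $X$-labelled bonds.

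Next I sum over the spins. For each site $i$ let $d_i(L)\in\{0,1,2\}$ be the number of $X$-bonds incident to $i$. Then
\[
\sum_{\si(i)=\pm 1}e^{h_i\si(i)}\si(i)^{d_i(L)} \;=\; 2\cosh(h_i)\cdot u_i^{d_i(L)\bmod 2},
\]
so, setting $X(L):=\{i:d_i(L)\text{ odd}\}$ and dividing by $\prod_i(e^{h_i}+e^{-h_i})$ as in \eqref{103.1}, I arrive at
\[
Z^*_{\ell,\und h} \;=\; \sum_L (\cosh\la-1)^{|S^*(L)|}\,\sinh(\la)^{|X^*(L)|}\prod_{i\in X(L)}u_i.
\]

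The last step is to pass from labellings to polymers. Given $L$, the set $\{i:L(i)\ne\emptyset\}$ of non-$\emptyset$ bonds decomposes into maximal connected arcs on the cycle; each arc, equipped with the inherited partition into $S$- and $X$-pairs and with the associated set $X$ of its sites of odd $X$-degree, is exactly a polymer $\Ga=(C,S,X)$ in the sense of the statement. The handshake lemma inside each arc forces $|X|$ to be even. Distinct arcs are automatically vertex-disjoint on the cycle (if two arcs shared a vertex $i$, both bonds at $i$ would be non-$\emptyset$ and would therefore merge the arcs through $i$), so the arc decomposition coincides with a collection $\und\Ga$ of mutually compatible polymers, and conversely any such collection arises in this way from a unique $L$. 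The overall weight factorises over arcs, and using $|C|=|S|+|X^*|$ the per-arc contribution matches $w(\Ga)$ in \eqref{1.8}; the exceptional case where $L$ assigns $X$ to every bond produces a single polymer wrapping the whole cycle with $X=\emptyset$, recovering \eqref{1.8.0}.

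No substantive obstacle is expected, since this is a routine high-temperature computation. The only points requiring attention are (i) the identification of the per-arc weight with the formula \eqref{1.8}, using the combinatorial identity $|C|=|S|+|X^*|$ and the definition of $X$ as the odd-$X$-degree sites, and (ii) the vertex-disjointness of distinct polymers, which depends crucially on the degree-$2$ structure of the cycle so that two non-$\emptyset$ bonds meeting at a common site are automatically in the same arc.
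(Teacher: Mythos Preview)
Your proof is correct and follows essentially the same approach as the paper. The only cosmetic difference is that the paper performs the bond expansion in two stages---first writing $e^{\la\si_i\si_{i+1}}=1+[e^{\la\si_i\si_{i+1}}-1]$, forming clusters from the bonds carrying the bracket, and then splitting $e^{\la\si_i\si_{i+1}}-1=\sinh(\la)\si_i\si_{i+1}+[\cosh(\la)-1]$ inside each cluster---whereas you do both steps at once with the three-term identity; the resulting polymer decomposition and weights are identical.
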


\medskip

\begin{proof}
We write
\[
Z^*_{\ell,\und h } = \sum_{\si} \{\prod_i \frac{e^{h\si_i}}{e^{h_i}+e^{-h_i}}\}
\{\prod_i  [e^{\la\si_i \si_{i+1}}-1+1]\}
\]
By expanding the last product we get a sum of terms each one being characterized by
the pairs $(i,i+1)$ with
$e^{\la\si_i \si_{i+1}}-1$.  We fix one of these terms and perform the sum over $\si$.
We call cluster
a maximal connected set of pairs with  $[e^{\la\si_i \si_{i+1}}-1]$, this will be the spatial support of a polymer.
The sum over $\si$
factorizes over the clusters.
After writing
\[
e^{\la\si_i \si_{i+1}}-1 = \sinh(\la)\si_i \si_{i+1} + [\cosh(\la)-1]
\]
we  call $(i,i+1)$ a $X$-pair if it has the term $\sinh(\la)\si_i \si_{i+1}$
and a $S$-pair if it has the term $[\cosh(\la)-1]$.  Notice that if $i$ belongs to two $X$-pairs then
we have a product of two $\si_i$ which is equal to 1.  Thus the sum over the spins in a cluster $C$
becomes a sum over $w(\Ga)$ with the spatial support of $\Ga$ equal to $C$. In this way we
get \eqref{1.7}.

\end{proof}

We shall also consider
the partition function
\begin{equation}
    \label{1.8.1}
Z'_{\ell}=\sum_{\underline \Ga} \prod_{\Ga\in \und \Ga}w_1(\Ga)
   \end{equation}
where $w_1(\Ga)$ 
is obtained from $w(\Ga)$ by putting $u_i\equiv 1$.

\vskip.5cm

\subsection{The K-P condition}

The K-P condition for cluster expansion requires that after introducing
a weight $|\Ga|$ then
for any $\Ga$
  \begin{equation*}
\sum_{\Ga' \not \sim \Ga }|w(\Ga')|e^{ |\Ga'|} \le  |\Ga|
   \end{equation*}

\medskip

\begin{prop}
For $\la$ small enough  we have that
  \begin{equation}
    \label{1.10}
\sum_{\Ga' \not \sim \Ga }|w(\Ga')|e^{ |\Ga'|(1+b)} \le  |\Ga|, \quad b>0
   \end{equation}
with
\begin{equation}
    \label{1.10.1}
 |\Ga| = |C(\Ga)|+1,\quad e^b:= \la^{-5/12}
   \end{equation}
having called $C(\Ga)$ the spatial support of $\Ga$.

\end{prop}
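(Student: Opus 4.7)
To verify the Kotecký--Preiss condition \eqref{1.10}, I would first reduce the sum over $\Ga'\not\sim\Ga$ to a sum over polymers through a single site. Since $\Ga'\not\sim\Ga$ means their spatial supports share at least one point,
\begin{equation*}
\sum_{\Ga'\not\sim\Ga}|w(\Ga')|e^{|\Ga'|(1+b)} \;\le\; \sum_{i\in C(\Ga)}\;\sum_{\Ga'\ni i}|w(\Ga')|e^{|\Ga'|(1+b)} \;\le\; |\Ga|\cdot S(i_0),
\end{equation*}
where $S(i_0):=\sum_{\Ga'\ni i_0}|w(\Ga')|e^{|\Ga'|(1+b)}$; here I used that the number of points in $C(\Ga)$ is at most $|C(\Ga)|+1=|\Ga|$, and that $S(i)$ is independent of $i$ by the cyclic symmetry of $[1,\ell]$. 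It therefore suffices to show $S(i_0)\le 1$.

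To bound $S(i_0)$, I would parametrize a polymer $\Ga'\ni i_0$ by (a) the length $n:=|C(\Ga')|$; (b) for $1\le n<\ell$ the starting position of its path support, which admits $n+1$ choices containing $i_0$; (c) the subset $S\subseteq C(\Ga')$ of $S$-pairs (the complementary $X^*=C(\Ga')\setminus S$ determines $X$). Using $|u_x|=|\tanh h_x|\le 1$ and the binomial theorem yields the key tight identity
\begin{equation*}
\sum_{S\subseteq C(\Ga')}|w(C(\Ga'),S,X)| \;\le\; \sinh(\la)^{n}\sum_{k=0}^{n}\binom{n}{k}\Big(\tfrac{\cosh\la-1}{\sinh\la}\Big)^{k}=(\sinh\la+\cosh\la-1)^{n}=(e^{\la}-1)^{n}.
\end{equation*}
For the special case $n=\ell$ (the full loop) the support is unique, and the same computation---including the convention \eqref{1.8.0} for the single term with $S=\emptyset$---again gives a total of $(e^{\la}-1)^{\ell}$.

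Setting $q:=(e^{\la}-1)\,e^{1+b}$ and assembling the pieces gives
\begin{equation*}
S(i_0) \;\le\; e^{1+b}\Big[\sum_{n=1}^{\ell-1}(n+1)q^{n}+q^{\ell}\Big] \;\le\; e^{1+b}\,\frac{2q-q^{2}}{(1-q)^{2}}.
\end{equation*}
With $e^{b}=\la^{-5/12}$ one has $q=e\,\la^{-5/12}(e^{\la}-1)=O(\la^{7/12})\to 0$ as $\la\to 0$, so the whole bound is of order $e^{2(1+b)}(e^{\la}-1)=O(\la^{1/6})$, which is less than $1$ for $\la$ small enough; this yields \eqref{1.10}.

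The only mildly delicate point is the tight identity $\sum_{S}\!\cdots=(e^{\la}-1)^{n}$, which crucially collapses via $\sinh\la+\cosh\la-1=e^{\la}-1$; without this tightness one would only obtain $O(\la^{n})$ with constants too large for the exponent $-5/12$ in the definition of $b$ to close the estimate. Everything else is standard K--P bookkeeping, and I anticipate no serious obstacle.
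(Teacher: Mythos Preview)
Your proof is correct and follows essentially the same route as the paper's. Both arguments reduce to the single-site sum via $\sum_{\Ga'\not\sim\Ga}\le\sum_{i\in C(\Ga)}\sum_{\Ga'\ni i}$, then sum over the $S/X$ decoration of a fixed support $C$ of $n$ pairs to obtain the same quantity (the paper writes it as $\sinh(\la)^{n}\big(1+\tfrac{\cosh\la-1}{\sinh\la}\big)^{n}$, which is precisely your $(e^{\la}-1)^{n}$), counts $\le n+1$ placements through the fixed site, and observes the leading term is of order $e^{2(1+b)}\la=\la^{1/6}\to 0$.

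One small remark: your comment that the ``tight identity'' is the delicate point is slightly overstated. It is an algebraic identity, not a sharp estimate, so there is no looser alternative that could fail; the paper simply leaves it unfactored and argues directly that $\la e^{2b}=\la^{1/6}\to 0$. Everything else matches.
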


\medskip

\begin{proof}
We are first going to prove that for $\la$ small enough
   \begin{equation}
    \label{1.13}
\sum_{\Ga':  {C'} \ni 1} w_1(\Ga') e^{(1+b)|\Ga'|} \le 1
   \end{equation}
Fix $C$ and consider all $\Ga$ with spatial support $C$, i.e.\ $C(\Ga)=C$,
so that   $|\Ga|= |C|+1=:n$.
Then
   \begin{equation}
    \label{1.12}
\sum_{\Ga : C(\Ga)= C} w_1(\Ga)e^{(1+b)|\Ga|} \le  e^{(1+b)}[\sinh ( \la)e^{(1+b)}] ^{n-1} \Big(
1 + \frac{[\cosh ( \la)-1]}{\sinh ( \la)} \Big)^{n-1}
   \end{equation}
Therefore the left hand side of \eqref{1.13} is bounded by
\[
\sum_{n\ge 2} n e^{(1+b)}[\sinh ( \la)e^{(1+b)}] ^{n-1} \Big(
1 + \frac{[\cosh ( \la)-1]}{\sinh ( \la)} \Big)^{n-1} 
\]
which vanishes when $\la\to 0$, because by
\eqref{1.10.1} $\la e^{2b}$ vanishes as $\la\to 0$.  Hence
\eqref{1.13} holds
for $\la$ small enough.

To prove \eqref{1.10} we first write
  \begin{equation}
    \label{1.10.2}
\sum_{\Ga' \not \sim \Ga }|w(\Ga')|e^{(1+b) |\Ga'|} \le
\sum_{\Ga' \not \sim \Ga }w_1(\Ga')e^{(1+b) |\Ga'|} 
   \end{equation}
and then use \eqref{1.13} to get
\begin{equation*}
\sum_{\Ga' \not \sim \Ga }w_1(\Ga')e^{(1+b) |\Ga'|} \le \sum_{i\in  {C(\Ga)}}
\sum_{\Ga':   {C(\Ga')} \ni i} w_1(\Ga') e^{(1+b)|\Ga'|} \le   |\Ga|
   \end{equation*}

   \end{proof}

   \vskip.5cm

\subsection{The basic theorem of cluster expansion}

The theory of cluster expansion states that if
the K-P condition is satisfied then the log of the partition function can be written as an absolutely convergent series over ``clusters'' of polymers.  To define the clusters it is convenient to regard the space $\{\Ga\}$ of all polymers as a graph where two polymers
are connected if they are incompatible, as defined in Subsection \ref{Fsubsec.C.1}.  Then a cluster is a
connected set in $\{\Ga\}$ whose elements may also have multiplicity larger than 1.  We thus
introduce  functions $I: \{\Ga\} \to \mathbb N$
such that $\{\Ga:I(\Ga)>0\}$ is
a non empty connected
set which is the cluster defined above, $I(\Ga)$ being the multiplicity of appearance of $\Ga$ in the cluster. With such notation the theory says that
\begin{equation}
    \label{1.14}
\log Z^*_{L,\und h }=\sum_{I} W^I, \quad W^I :=a_I\prod_{\Ga}w(\Ga)^{I(\Ga)}
   \end{equation}
   \begin{equation}
    \label{1.14.1}
\log Z'_{L}=\sum_{I} W_1^I, \quad W_1^I :=a_I\prod_{\Ga}w_1(\Ga)^{I(\Ga)}
   \end{equation}
where the sums in \eqref{1.14}--\eqref{1.14.1} are absolutely convergent.
The coefficients $a_I$ are combinatorial (signed) factors, in particular $a_I=1$ if $I$ is supported by a single $\Ga$.
We will not need the explicit expression of the $a_I$ and only use the bound provided by
Theorem \ref{Fthm.5.6} below. We use the notation:
\begin{equation}
    \label{1.15}
    |I|_1 = \sum_{\Ga} I(\Ga),\quad
||I|| = \sum_{\Ga} |\Ga|I(\Ga)
   \end{equation}

\medskip

\begin{thm} [Cluster expansion]
\label{Fthm.5.6}
Let $\la$ be so small that the K-P condition \eqref{1.10} holds. Let $\Ga$ be a polymer and $\mathcal I$ a subset in $\{I\}$ such that
$I(\Ga)\ge 1$ for all $I\in \mathcal I$ ($\mathcal I$ could be the whole $\{I\}$).  Then
\begin{equation}
    \label{1.16}
\sum_{I\in \mathcal I}|W_1^I| e^{||I||} \le w_1(\Ga) e^{(1+b)|\Ga|} \sup_{I\in \mathcal I}
e^{-b||I||}
   \end{equation}

\end{thm}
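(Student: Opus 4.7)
My plan is to deduce \eqref{1.16} as a direct consequence of the Dobrushin form of the Kotecky--Preiss cluster expansion inequality, with a choice of weight function that exploits the slack already present in the K-P bound \eqref{1.10}.

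First, I would invoke the standard single-weight Dobrushin--Kotecky--Preiss statement: if a positive function $a(\cdot)$ on polymers satisfies
\begin{equation*}
\sum_{\Ga'\not\sim\Ga}|w_1(\Ga')|\,e^{a(\Ga')}\le a(\Ga)\qquad\text{for every polymer }\Ga,
\end{equation*}
then the cluster expansion converges absolutely and
\begin{equation*}
\sum_{I:\,I(\Ga)\ge 1}|W_1^I|\,\exp\Big(\sum_{\Ga'}a(\Ga')\,I(\Ga')\Big)\le w_1(\Ga)\,e^{a(\Ga)}.
\end{equation*}
I would apply this with the choice $a(\Ga):=(1+b)|\Ga|$. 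The hypothesis then reads $\sum_{\Ga'\not\sim\Ga}|w_1(\Ga')|\,e^{(1+b)|\Ga'|}\le (1+b)|\Ga|$, which is implied by \eqref{1.10} since $|\Ga|\le(1+b)|\Ga|$. Because $\sum_{\Ga'}a(\Ga')I(\Ga')=(1+b)\|I\|$, the conclusion becomes
\begin{equation*}
(\star)\qquad\sum_{I:\,I(\Ga)\ge 1}|W_1^I|\,e^{(1+b)\|I\|}\le w_1(\Ga)\,e^{(1+b)|\Ga|}.
\end{equation*}

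Second, I would deduce \eqref{1.16} from $(\star)$ by the elementary factorisation $e^{\|I\|}=e^{(1+b)\|I\|}\,e^{-b\|I\|}$: pulling out the sup and using the hypothesis $\mathcal I\subseteq\{I:I(\Ga)\ge 1\}$,
\begin{equation*}
\sum_{I\in\mathcal I}|W_1^I|\,e^{\|I\|}\le\Big(\sup_{I\in\mathcal I}e^{-b\|I\|}\Big)\sum_{I:\,I(\Ga)\ge 1}|W_1^I|\,e^{(1+b)\|I\|}\le w_1(\Ga)\,e^{(1+b)|\Ga|}\sup_{I\in\mathcal I}e^{-b\|I\|},
\end{equation*}
which is exactly the claim.

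The main obstacle is arranging the K-P application so as to obtain the exponent $(1+b)\|I\|$ on the left of $(\star)$, rather than the more familiar $e^{b\|I\|}$. The latter is what one gets from the two-weight formulation of K-P with $a(\Ga)=|\Ga|$ and auxiliary $d(\Ga)=b|\Ga|$ (whose hypothesis matches \eqref{1.10} verbatim), but it is not enough: reversing the factorisation of the second step with $b\|I\|$ in place of $(1+b)\|I\|$ produces a sup over $e^{(1-b)\|I\|}$, which is strictly larger than the claimed $\sup e^{-b\|I\|}$. The key observation is that the right-hand side of \eqref{1.10} is merely $|\Ga|$, leaving a multiplicative factor $(1+b)$ of slack; this is precisely what allows the entire $(1+b)$ to be absorbed into the single Dobrushin weight $a(\Ga)=(1+b)|\Ga|$ and thereby yields the sharper exponent in $(\star)$.
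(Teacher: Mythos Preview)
The paper does not actually give a proof of this theorem; it is simply quoted as the ``basic theorem of cluster expansion'' and then used. Your derivation, however, invokes a version of the Kotecky--Preiss inequality that is too strong to be true. You assert that under the single-weight hypothesis $\sum_{\Ga'\not\sim\Ga}|w_1(\Ga')|e^{a(\Ga')}\le a(\Ga)$ one has
\[
\sum_{I:\,I(\Ga)\ge1}|W_1^I|\,\exp\Big(\sum_{\Ga'}a(\Ga')I(\Ga')\Big)\le w_1(\Ga)\,e^{a(\Ga)}.
\]
But the trivial cluster $I_0$ (namely $I_0(\Ga)=1$, $I_0(\Ga')=0$ otherwise) already contributes $w_1(\Ga)e^{a(\Ga)}$ to the left side, exhausting the right side. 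Since $\Ga\not\sim\Ga$, the cluster $I_1$ with $I_1(\Ga)=2$ is also admissible and adds a further $\tfrac12 w_1(\Ga)^2 e^{2a(\Ga)}>0$. Hence your displayed ``standard'' bound, and therefore your $(\star)$, is false. The genuine single-weight K--P estimate is only $\sum_{I:I(\Ga)\ge1}|W_1^I|\le w_1(\Ga)e^{a(\Ga)}$; the two-weight (Ueltschi) refinement, with $a(\Ga)=|\Ga|$ and $d(\Ga)=b|\Ga|$ so that the hypothesis becomes exactly \eqref{1.10}, gives
\[
\sum_{I:\,I(\Ga)\ge1}|W_1^I|\,e^{b\|I\|}\le w_1(\Ga)\,e^{(1+b)|\Ga|}.
\]
Your factorisation trick applied to this correct inequality yields
\[
\sum_{I\in\mathcal I}|W_1^I|\le w_1(\Ga)\,e^{(1+b)|\Ga|}\,\sup_{I\in\mathcal I}e^{-b\|I\|},
\]
i.e.\ the claim \emph{without} the factor $e^{\|I\|}$ on the left. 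In fact the very same trivial-cluster test shows that \eqref{1.16} as printed is saturated by $I_0$ when $\mathcal I=\{I:I(\Ga)\ge1\}$ and so cannot hold with that extra factor either; every subsequent use in the paper (the derivations of \eqref{1.17}, \eqref{103.6}, \eqref{103.8}, \eqref{2.6}) needs only this weaker form, which is what the corrected argument actually produces.
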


\medskip

Observe that the absolute convergence of the sum in \eqref{1.14}--\eqref{1.14.1} is implied by
\eqref{1.16} with $\mathcal I=\{I: I(\Ga)\ge 1\}$ as it becomes
\begin{equation}
    \label{1.17}
\sum_{I: I(\Ga)\ge 1}|W_1^I| e^{|I|} \le w_1(\Ga) e^{|\Ga|}
   \end{equation}
because $\inf_{I\in \mathcal I}
e^{-b|I|} = e^{-b|\Ga|}$ as the inf is realized by  $I^*$ which has $I^*(\Ga)=1$ and
$I^*(\Ga')=0$ for all $\Ga'\ne \Ga$. \eqref{1.17}
proves that the sum in \eqref{1.14.1} and hence the sum in \eqref{1.14} are both
absolutely convergent.

   \vskip.5cm

\section{Proof of Theorem \ref{thm103.1}}
In this section we will prove Theorem \ref{thm103.1} as a direct
consequence of
Theorem \ref{Fthm.5.6}.

\medskip

\subsection{Proof of\eqref{103.3} }
We start from \eqref{1.14} and observe that
\begin{equation*}
  W^I :=a_I\prod_{\Ga}w(\Ga)^{I(\Ga)} = \{ a_I\prod_{\Ga}w_1(\Ga)^{I(\Ga)}\}\{
  \prod_{\Ga} (u_{X(\Ga)})^{I(\Ga)}\}
   \end{equation*}
The last factor is equal to $u^{N(\cdot)}$ where $N(\cdot)$ is determined by $I$:
 \begin{equation}
    \label{2.0.0}
N(x) = \sum_{\Ga} I(\Ga) \mathbf 1_{x\in X(\Ga)}
   \end{equation}
hence  \eqref{103.3}. Recalling \eqref{103.7} we observe that
$|N(\cdot)|$ is even
because the cardinality of each $X(\Ga)$ is even.

%
%
%
%
%
%
%
%
%
%

\vskip.5cm

\subsection{The term with $|N(\cdot)|=0$}
The term with $|N(\cdot)|=0$ is a constant $A_{0}$ (i.e.\ it does not depends on $u$)
and it will not play any meaningful role.  It is bounded as follows:

\begin{lem}
There is a constant $c$ (independent of $u$ and $\ell$) such that
\[
|A_{0}| \le c\la^2\ell
\]

\end{lem}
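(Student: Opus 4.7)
My plan is to identify $A_0$ explicitly as a restricted cluster sum and then apply Theorem \ref{Fthm.5.6} cluster-by-cluster. By \eqref{103.3} together with \eqref{2.0.0}, $A_0$ equals the sum $\sum_I W_1^I$ over non-zero clusters $I$ whose support consists entirely of polymers $\Ga$ with $X(\Ga)=\emptyset$; indeed the condition $N(\cdot)\equiv 0$ is exactly that no point lies in any $X(\Ga)$ for a $\Ga$ with $I(\Ga)\ge 1$. On the circle $[1,\ell]$ there are only two kinds of such polymers. The first are \emph{pure-$S$} polymers, where every one of the $|C|$ pairs is an $S$-pair: such a polymer of size $n=|C|\ge 1$ with leftmost site $i\in[1,\ell]$ is unique — call it $\Ga^*_{i,n}$ — and by \eqref{1.8} has weight $w_1(\Ga^*_{i,n})=(\cosh\la-1)^n\le\la^{2n}$ for $\la$ small. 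The second is the single full-cycle all-$X$ polymer of \eqref{1.8.0}, whose weight $\sinh(\la)^\ell$ is exponentially small and whose contribution to $|A_0|$ will be bounded by $\sinh(\la)^\ell e^\ell$ via Theorem \ref{Fthm.5.6}, which is negligible compared to $\la^2\ell$.

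Next, since every non-zero pure-$S$ cluster $I$ contains at least one $\Ga^*_{i,n}$ in its support, I would write
\[
|A_0|\;\le\;\sinh(\la)^\ell e^\ell\;+\;\sum_{i=1}^\ell\sum_{n\ge 1}\;\sum_{I:\,I(\Ga^*_{i,n})\ge 1}|W_1^I|,
\]
the potential over-counting of each $I$ being harmless for an upper bound. To control the inner sum I would apply Theorem \ref{Fthm.5.6} with $\mathcal I=\{I:I(\Ga^*_{i,n})\ge 1\}$: on this set $\|I\|\ge|\Ga^*_{i,n}|=n+1$ with equality attained at $I=\delta_{\Ga^*_{i,n}}$, so the supremum factor $\sup_{\mathcal I}e^{-b\|I\|}=e^{-b(n+1)}$ cancels precisely the factor $e^{b|\Ga^*_{i,n}|}$ appearing on the right-hand side of \eqref{1.16}. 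Dropping the extra $e^{\|I\|}\ge 1$ on the left then gives
\[
\sum_{I:\,I(\Ga^*_{i,n})\ge 1}|W_1^I|\;\le\;w_1(\Ga^*_{i,n})\,e^{|\Ga^*_{i,n}|}\;\le\;e\,(e\la^2)^n.
\]

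Substituting and summing the geometric series in $n\ge 1$, which converges once $e\la^2<1/2$, would yield $|A_0|\le c\la^2\ell$ for a suitable constant $c$ (one may take $c=2e^2$ plus a negligible correction from the full-cycle all-$X$ term). The step I would scrutinise most carefully — and the one place where a naive application of cluster expansion gives the wrong answer — is precisely the cancellation between $e^{(1+b)|\Ga^*|}$ on the right of \eqref{1.16} and the supremum $e^{-b\|I\|}$: preserving the $b$-enhancement on the general cluster weight (used to enforce K-P) while removing it from the distinguished minimal polymer $\Ga^*_{i,n}$ is exactly what prevents the coarser bound $O(\la^{7/6}\ell)$ one would get from $e^b=\la^{-5/12}$ and restores the sharp $\la^2$ scaling. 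Everything else is a routine geometric-series computation.
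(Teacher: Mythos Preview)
Your proof is correct and follows essentially the same route as the paper's: both anchor each contributing cluster at a pure-$S$ polymer $(C,C,\emptyset)$, apply the cluster-expansion bound \eqref{1.17} to get $(\cosh\la-1)^{|C|}e^{|C|+1}$, and sum geometrically over $|C|$ and over the $\ell$ sites. Your version is slightly more careful in that you explicitly separate out the full-cycle all-$X$ polymer of \eqref{1.8.0} (which also has $X=\emptyset$ and hence contributes to $A_0$); the paper's proof silently drops this term, which is harmless since its contribution $\sinh(\la)^\ell e^{\ell+1}$ is exponentially small in $\ell$.
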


\begin{proof}
By \eqref{1.17}
\begin{eqnarray*}
|A_{0}| &\le& \sum_{i=1}^\ell \sum_{C\ni i}\sum_{I: I(C,C,\emptyset)>0}|W^I|\\
 &\le& \sum_{i=1}^\ell \sum_{C\ni i}[\cosh(\la)-1]^{|C|}e^{|C|+1} \le \ell c \la^2
\end{eqnarray*}

\end{proof}

\medskip

\medskip

\subsection{Proof  of \eqref{103.6}}
We have
 \begin{eqnarray*}
 \alpha_{j-i} u_iu_j &=& \sum_{\Ga=(C,S,X)}
 \mathbf 1_{ X=\{i,j\}} \sum_{I:I(\Ga)= 1; I(\Ga')=0 \;\text{if $ X'\ne \emptyset$ and $\Ga'\ne \Ga$} }
 W^I
  \end{eqnarray*}
Thus by \eqref{1.16}
 \begin{eqnarray*}
 |\alpha_{j-i}| &\le&  \sum_{\Ga=(C,S,X)}
 \mathbf 1_{ X=\{i,j\}} w_1(\Ga) e^{|\Ga|}
 \\ &\le  & \sum_{\Ga=(C,S,X)}
 \mathbf 1_{ X=\{i,j\}} (\sinh(\la))^{|i-j|}(\cosh(\la)-1)^{|C|-|i-j|} e^{|C|+1}
  \end{eqnarray*}
which is bounded by
\[
 |\alpha_{j-i}|
 \le  \sum_{n\ge 0, m\ge 0}
  (\sinh(\la))^{|i-j|}(\cosh(\la)-1)^{n+m} e^{|i-j| +n+m+1}
\]
We have thus proved the second inequality in \eqref{103.6}.

To prove the first one we call $\Ga^*=(C^*,S^*,X^*)$ where $C^*=(i,i+1)$, $S^*=\emptyset$, $X^*=\{i,i+1\}$ and write
 \begin{eqnarray}
     \label{3.000}
  \alpha_{i,i+1} u_iu_{i+1} &=& \sinh(\la)u_iu_{i+1} +
  \sum_{I: |I|_1>1;I(\Ga^*)= 1; I(\Ga')=0 \;\text{if  $\Ga' \ne \Ga^*$ and}\; X'\ne \emptyset}
 W^I \nn\\
  &+&\sum_{\Ga=(C,S,X), \Ga \ne \Ga^*}
 \mathbf 1_{ X=\{i,j\}} \sum_{I:I(\Ga)= 1; I(\Ga')=0 \;\text{if $X'\ne \emptyset$ and $\Ga' \ne \Ga$}}
 W^I
  \end{eqnarray}
If $I$ is  the second  term on the right hand side
then $||I|| \ge 2+2$ so that  this term is bounded by
  \[
 |w(\Ga^*)| e^{|\Ga^*|}e^{-2b} \le \sinh (\la) e^2 e^{-2b}
\]
Proceeding as in the proof of the second  inequality in \eqref{103.6}
 we can bound the last term on the right hand side of \eqref{3.000}  by
  \[
 \le  \sum_{n\ge 0, m\ge 0, n+m>0}
  (\sinh(\la))(\cosh(\la)-1)^{n+m} e^{2 +n+m+1}
\]
  which proves  the first  inequality in \eqref{103.6}.

\medskip

\subsection{Proof  of \eqref{103.8}}
If $I$ determines $N(\cdot)$ then for all $j$
 \begin{equation}
    \label{2.3.0}
N(j) \le \sum_{\Ga: C(\Ga) \ni j} I(\Ga)
   \end{equation}
hence
 \begin{equation}
    \label{2.3}
 R(N(\cdot)) \le \sum_{\Ga: I(\Ga)>0}|\Ga|,\quad |N(\cdot)| =
\sum_j N(j)  \le  \sum_j\sum_{\Ga: C(\Ga) \ni j} |I(\Ga)|
   \end{equation}
Thus
 \begin{equation}
    \label{2.4}
||I|| \ge \|N(\cdot)\|
   \end{equation}
so that the left hand side of  \eqref{103.8} is bounded by:
 \begin{equation}
    \label{2.5}
\sum_{\Ga\ni i}\sum_{I: I(\Ga)>0, ||I|| \ge M
} |W_1^I| \le \sum_{\Ga\ni i} w_1(\Ga)e^{|\Ga|(1+b)} e^{- b M }
   \end{equation}
having used \eqref{1.16}. \eqref{103.8} then follows from
\eqref{1.13}.

\vskip1cm

\setcounter{equation}{0}

\section{A priori bounds}

 We will extensively use the bounds in this section which are corollaries  of Theorem \ref{thm103.1}.

 \medskip

\begin{cor}
There are constants $c_k$, $k\ge 0$, so that for any $i\in \{1,..,\ell\}$, $k\ge 0$ and $ M\ge
4$,
 \begin{equation}
    \label{2.6}
\sum_{N(\cdot):  N(i)>0, |N(\cdot)|>2,\|N(\cdot)\|\ge M} \|N(\cdot)\|^k |A_{N(\cdot)}| \le c_k M^k e^{-bM} = c_k \la^{5/3} e^{-b(M-4)}  
   \end{equation}

\end{cor}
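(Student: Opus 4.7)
The estimate will follow from \eqref{103.8} by inserting the polynomial weight $\|N(\cdot)\|^k$ via a standard summation by parts.

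First I would observe that adding the restriction $|N(\cdot)|>2$ (equivalently $|N(\cdot)|\ge 4$, since $|N(\cdot)|$ is even by Theorem \ref{thm103.1}) can only shrink the left-hand side of \eqref{103.8}, so the cumulative tails
\[
T(M') \;:=\; \sum_{\substack{N(\cdot):\,N(i)>0,\,|N(\cdot)|>2 \\ \|N(\cdot)\|\ge M'}} |A_{N(\cdot)}|
\]
still satisfy $T(M')\le e^{-bM'}$ for every integer $M'\ge M\ge 4$. Note that $|N(\cdot)|\ge 4$ together with the definition of $\|N(\cdot)\|$ in \eqref{103.7} forces $\|N(\cdot)\|\ge 4$, which is consistent with the hypothesis $M\ge 4$.

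Then I would apply Abel summation to the target sum, decomposed over the level sets $\{\|N(\cdot)\|=M'\}$. Writing the $M'$-th level contribution as $T(M')-T(M'+1)$ and rearranging gives
\begin{equation*}
\sum_{\|N(\cdot)\|\ge M}\|N(\cdot)\|^k|A_{N(\cdot)}| \;=\; M^k T(M) \;+\; \sum_{M'> M}\bigl[(M')^k-(M'-1)^k\bigr]\,T(M').
\end{equation*}
Using the elementary bound $(M')^k-(M'-1)^k\le k(M')^{k-1}$ together with $T(M')\le e^{-bM'}$, the right-hand side is dominated by $M^k e^{-bM}+k\sum_{M'>M}(M')^{k-1}e^{-bM'}$. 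The tail sum is handled by the substitution $M'=M+j$ and the elementary inequality $(M+j)^{k-1}\le 2^{k-1}(M^{k-1}+j^{k-1})$, which reduces it to $e^{-bM}$ times a convergent geometric-type series, producing a bound of the form $c_k M^k e^{-bM}$ with $c_k$ depending only on $k$.

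The rewriting as $c_k\,\la^{5/3}e^{-b(M-4)}$ is then the bookkeeping identity $e^{-bM}=e^{-4b}\,e^{-b(M-4)}=\la^{5/3}\,e^{-b(M-4)}$, which follows from the definition $e^b=\la^{-5/12}$ in \eqref{103.7} and isolates the factor $\la^{5/3}$ gained from the cutoff $M\ge 4$; the polynomial prefactor $M^k$ can be absorbed into the constant $c_k$ at the cost of replacing $b$ by a slightly smaller effective exponent, via $M^k\le C_k\,e^{\varepsilon M}$ for any $\varepsilon>0$. There is no real obstacle: the argument is a textbook exponential-times-polynomial tail estimate, and the only thing to verify is that the $|N(\cdot)|>2$ restriction preserves \eqref{103.8}, which is automatic since it merely removes non-negative terms from an absolutely convergent sum.
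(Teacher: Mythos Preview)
Your proof is correct and is precisely the standard argument the paper's one-line ``It follows from Theorem \ref{thm103.1}, see \eqref{103.8}'' is pointing at. The Abel summation you perform is the routine way to upgrade the cumulative tail bound \eqref{103.8} to a bound with the polynomial weight $\|N(\cdot)\|^k$, and your observation that the extra restriction $|N(\cdot)|>2$ only removes non-negative terms is exactly right.

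One small remark on the last paragraph: the ``equality'' $c_k M^k e^{-bM}=c_k\la^{5/3}e^{-b(M-4)}$ in the paper is just the bookkeeping identity $e^{-bM}=\la^{5/3}e^{-b(M-4)}$ applied with the $M^k$ factor left intact (the paper uses the bound in the form $c_k M^k e^{-bM}$ later on, e.g.\ in Appendix \ref{FApp.F}), so there is no need to absorb $M^k$ into the exponential; your comment about $M^k\le C_k e^{\varepsilon M}$ is harmless but unnecessary. Also note that the implicit constants in your tail sums (such as $\sum_{j\ge1}j^{k-1}e^{-bj}$) are decreasing in $b$, hence uniformly bounded once $\la$ is small, so $c_k$ is genuinely independent of $\la$.
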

\begin{proof}

It follows from  Theorem \ref{thm103.1}, see \eqref{103.8}.

\end{proof}
%
%

%
%
%


\medskip

\begin{cor}
\label{lemma5.1}

There are constants $c'_k$, $k\ge 1$, so that for any $\ell$ and $i\in [1,\ell]$
      \begin{equation}
    \label{FD.1}
 \sum_{i_1,..,i_{k-1}} | \frac{\partial^{k-1}}{\partial u_{i_1}\cdots \partial
 u_{i_{k-1}}} \frac \partial {\partial u_i} \log Z^*_{\ell,\und h}| \le c'_k \la
   \end{equation}
for any $\la$ as small as required in Theorem \ref{thm103.1}.    Moreover
       \begin{equation}
    \label{FD.2}
 \Psi_i(u)=0 \;\text{ if $|u_i|=1$}
   \end{equation}

\end{cor}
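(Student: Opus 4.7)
The plan is to derive both statements directly from the cluster expansion \eqref{103.3} of Theorem \ref{thm103.1}, by differentiating $\log Z^*_{\ell,\und h} = \sum_{N(\cdot)} A_{N(\cdot)} u^{N(\cdot)}$ term by term and then estimating the result using the bounds \eqref{103.5}--\eqref{103.6} on the $|N(\cdot)|=2$ part and \eqref{2.6} on the $|N(\cdot)|\ge 4$ part. The second statement \eqref{FD.2} will follow at once from the $(1-u_i^2)$ prefactor in the representation of $\Psi_i$ given in \eqref{4.1}.

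For \eqref{FD.1}, the starting observation is that for any choice (with possible repetitions) of indices $i,i_1,\dots,i_{k-1}$, the derivative $\partial_{u_i}\partial_{u_{i_1}}\cdots\partial_{u_{i_{k-1}}} u^{N(\cdot)}$ equals a product of falling factorials $[N(j)]_{\mu_j}:=N(j)(N(j)-1)\cdots(N(j)-\mu_j+1)$, where $\mu_j$ denotes the multiplicity of $j$ in the list $\{i,i_1,\dots,i_{k-1}\}$, multiplied by a reduced monomial in $u$. Using $|u_j|\le 1$, the elementary bound $[N(j)]_{\mu_j}\le N(j)^{\mu_j}$, and the identity $\sum_m N(m)\mathbf 1_{N(m)\ge 1}=|N(\cdot)|$, summation over $i_1,\dots,i_{k-1}$ gives
\[
\sum_{i_1,\dots,i_{k-1}}\Big|\frac{\partial^k \log Z^*_{\ell,\und h}}{\partial u_i\,\partial u_{i_1}\cdots \partial u_{i_{k-1}}}\Big|\;\le\; \sum_{N(\cdot):\,N(i)>0}|A_{N(\cdot)}|\,N(i)\,|N(\cdot)|^{k-1}\;\le\; \sum_{N(\cdot):\,N(i)>0}|A_{N(\cdot)}|\,\|N(\cdot)\|^{k}.
\]

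Since $A_{N(\cdot)}=0$ when $|N(\cdot)|$ is odd, the right-hand side naturally splits into an $|N(\cdot)|=2$ and an $|N(\cdot)|\ge 4$ piece. In the $|N(\cdot)|=2$ piece the only contributing multi-indices are $N=e_j+e_n$ with $j\ne n$ (see \eqref{103.5}), and the condition $N(i)>0$ forces $i\in\{j,n\}$; by \eqref{103.6},
\[
\sum_{N:\,|N|=2,\,N(i)>0}|A_N|\|N\|^k\;\le\; 2\cdot 2^k\Big(|\alpha_1|+c\sum_{n\ge 2}(\la e)^n\Big)\;\le\; C_k\,\la
\]
for $\la$ small enough. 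In the $|N(\cdot)|\ge 4$ piece one automatically has $\|N(\cdot)\|\ge 4$, so \eqref{2.6} with $M=4$ and the choice $e^b=\la^{-5/12}$ controls it by $c_k 4^k\la^{5/3}$, which is dominated by the $O(\la)$ term. Summing the two bounds yields \eqref{FD.1} with a suitable $c'_k$.

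Finally, \eqref{FD.2} is immediate from the explicit formula for $\Psi_i$ in \eqref{4.1}: the prefactor $(1-u_i^2)$ vanishes at $|u_i|=1$, while the accompanying series $\sum_{N(\cdot):N(i)>0} N(i)\,A_{N(\cdot)} u^{N^{(i)}(\cdot)}$ is absolutely convergent and uniformly bounded on $\{|u_j|\le 1\}$, this being precisely the $k=1$ case of the estimate just proved. No serious obstacle is anticipated; the only care needed is to track correctly the combinatorics arising from coinciding indices among $i,i_1,\dots,i_{k-1}$, and to separate the $|N(\cdot)|=2$ contribution (to which \eqref{2.6} does not apply, since that bound is restricted to $|N(\cdot)|>2$) from the higher-order part where \eqref{2.6} furnishes the required smallness.
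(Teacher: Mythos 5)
Your proof is correct and follows the same strategy as the paper: split the cluster expansion into the $|N(\cdot)|=2$ part, controlled by \eqref{103.5}--\eqref{103.6}, and the $|N(\cdot)|\ge 4$ part, controlled by \eqref{2.6}, with the derivative combinatorics absorbed into a power of $\|N(\cdot)\|$ (your bound $\|N(\cdot)\|^k$ is in fact slightly sharper than the paper's $|N(\cdot)|^k R(N(\cdot))^k$). One small slip: in the $|N(\cdot)|=2$ estimate the factor $(d+1)^k$ from $\|N\|^k$ should appear alongside $(\la e)^d$ in the sum over separations $d\ge 2$, but the series $\sum_{d\ge 2}(d+1)^k(\la e)^d$ is still $O(\la^2)$, so the conclusion is unaffected.
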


\medskip

\begin{proof}
We write $ \log Z^*_{\ell,\und h}= K_1+K_2$ where $K_1$ is obtained by
restricting the sum on the right hand side
of
\eqref{103.3} to  $|N(\cdot) | \le 2$, $K_2$ is  the sum of the remaining terms.
By
\eqref{103.5}--\eqref{103.6} we easily check that $K_1$ satisfies
the bound in \eqref{FD.1}.  We bound
     \begin{equation*}
 \sum_{i_1,..,i_{k-1}} | \frac{\partial^{k-1}}{\partial u_{i_1}\cdots \partial
 u_{i_{k-1}}} \frac \partial {\partial u_i} K_2|
   \end{equation*}
   by
     \begin{equation*}
\sum_{M>2} \sum_{\|N(\cdot)\|= M, N(i)>0} |N(\cdot)|^k R(N(\cdot))^k
   \end{equation*}
  \eqref{FD.1} then follows from \eqref{2.6}.
  \eqref{FD.2}  follows directly from the definition of $\Psi_i(u)$.

\end{proof}

\medskip

\begin{cor}
\label{cor3}
Recalling \eqref{103.5} and writing $\alpha = \sum_{j>i}\alpha_{j-i}$,
      \begin{equation}
    \label{FD.1.1}
\sum_{i<j}  \alpha_{j-i} u_iu_j = \alpha\sum_i u_i^2 - \frac 12
 \sum_{j>i}\alpha_{j-i}(u_i-u_j)^2
   \end{equation}

\end{cor}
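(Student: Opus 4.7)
The proof is a direct algebraic manipulation relying on the elementary identity
\[
u_iu_j \;=\; \tfrac12\bigl(u_i^2+u_j^2\bigr) \;-\; \tfrac12\bigl(u_i-u_j\bigr)^2.
\]
Substituting this into the left-hand side of \eqref{FD.1.1} immediately produces the gradient term $-\tfrac12\sum_{i<j}\alpha_{j-i}(u_i-u_j)^2$ that appears on the right. What remains is to show that the diagonal contribution $\tfrac12\sum_{i<j}\alpha_{j-i}(u_i^2+u_j^2)$ equals $\alpha\sum_k u_k^2$.

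For each fixed $k\in\{1,\dots,\ell\}$ the coefficient of $u_k^2$ in the diagonal contribution, obtained by collecting the pairs $(i,j)=(k,j)$ with $j>k$ and the pairs $(i,j)=(i,k)$ with $i<k$, equals
\[
\tfrac12\Bigl(\sum_{j>k}\alpha_{j-k}+\sum_{i<k}\alpha_{k-i}\Bigr)
\;=\;\tfrac12\Bigl(\sum_{m=1}^{\ell-k}\alpha_m+\sum_{m=1}^{k-1}\alpha_m\Bigr),
\]
which a priori depends on $k$. The key input is that the cluster expansion in Theorem \ref{thm103.1} is performed on the ring $\Z/\ell\Z$, so the polymer weights $w(\Ga)$, and therefore the coefficients $\alpha_m$, are invariant under cyclic translation and reflection. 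In particular $\alpha_m=\alpha_{\ell-m}$ for $m=1,\dots,\ell-1$, which yields $\sum_{m=1}^{k-1}\alpha_m = \sum_{m=\ell-k+1}^{\ell-1}\alpha_m$; the coefficient of $u_k^2$ then telescopes to the $k$-independent value $\tfrac12\sum_{m=1}^{\ell-1}\alpha_m$, which is the quantity denoted $\alpha$ in the statement. The informal notation $\alpha=\sum_{j>i}\alpha_{j-i}$ in the corollary is to be read modulo this translation/reflection invariance, so that the sum is unambiguous.

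There is no serious obstacle: the entire content is a two-line quadratic identity together with the cyclic symmetry of the expansion coefficients. The mild point one has to be careful about is the convention for $\alpha$, where one must invoke the $\alpha_m=\alpha_{\ell-m}$ symmetry produced by the ring structure to conclude that the $k$-dependent sum above is actually a constant. The role of this corollary in the larger argument is to repackage the $|N(\cdot)|=2$ piece of \eqref{103.5} as the sum of a one-body term $\alpha\sum_i u_i^2$ and a non-positive gradient-squared contribution, which is exactly the decomposition required as input for Theorem \ref{Fthm.4.8} and for the analysis of $H^{\rm eff}_{\ell,\und h}$ in Subsection \ref{Fsubsec.4.4}.
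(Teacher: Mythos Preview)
Your proof is correct and supplies exactly the elementary algebraic argument the paper leaves implicit: the corollary is stated without proof in the paper, so the intended derivation is precisely the polarization identity $u_iu_j=\tfrac12(u_i^2+u_j^2)-\tfrac12(u_i-u_j)^2$ that you use. You have also correctly isolated the only nontrivial point, namely that the coefficient of $u_k^2$ in $\tfrac12\sum_{i<j}\alpha_{j-i}(u_i^2+u_j^2)$ is a priori $k$-dependent and becomes constant only via the ring symmetry $\alpha_m=\alpha_{\ell-m}$ inherited from translation invariance of the periodic partition function $Z_{\ell,\und h}$.

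One small remark on the bookkeeping: your computation shows the coefficient of each $u_k^2$ equals $\tfrac12\sum_{m=1}^{\ell-1}\alpha_m$, whereas the paper's shorthand ``$\alpha=\sum_{j>i}\alpha_{j-i}$'' literally reads as $\sum_{m=1}^{\ell-i}\alpha_m$ for fixed $i$, which is not the same number. You are right to treat the paper's notation as informal; the identity \eqref{FD.1.1} holds exactly with $\alpha=\tfrac12\sum_{m=1}^{\ell-1}\alpha_m$, and this is the value used downstream in Appendix~\ref{FApp.F} where only the one-body nature of $\alpha\sum_i u_i^2$ matters.
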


\setcounter{equation}{0}
\vskip1cm

\section{Proof of Theorem  \ref{Fthm5.6} and Theorem
\ref{thm101.4}}
\label{app.D}

We write $\|v\|$ for the sup norm of the vector $v$:
$\|v\|:= \max_{i=1,..,\ell}|v_i|$.

\medskip

\subsection{Proof of Theorem \ref{Fthm5.6}}

Existence.  By \eqref{FD.1} we can use the implicit function theorem to claim existence of a
small enough time $T>0$  such that the equation
 \begin{equation}
    \label{FD.3}
m   = u(t)  + t\Psi (u(t))
   \end{equation}
has a solution $u(t), t\in [0,T]$, such that: $u(0)=m$, $u(t)$ is differentiable and $\|u(t)\|<1$, recall that
$\|m\|<1$.

If $\la$ is small enough
 \eqref{FD.1} with $k=1$ yields
 \begin{equation}
    \label{FD.4}
\max_i \sup _{\|u\|\le 1}\sum_j | \frac {\partial }{\partial u_j}\Psi_i(u)| =: r <1
   \end{equation}
so that the matrix $1+ t \nabla \Psi(u(t))$,
$(\nabla \Psi)_{i,j}  =\frac {\partial }{\partial u_j}\Psi_i $, is invertible for $t \le \min\{T,1\}$ and therefore
for  $t \le \min\{T,1\}$
 \begin{equation}
    \label{FD.5}
  \dot u(t) =f(u(t),t):=-\Big(1 + t \nabla \Psi (u(t))\Big)^{-1}  \Psi(u(t)),\quad u(0)=m
   \end{equation}
By \eqref{FD.4}--\eqref{FD.1} $f(u,t)$ is bounded and differentiable for $t\le 1$
and $\|u\|\le 1$,
thus we can extend $u(t)$ till $\min\{1,\tau\}$ where  $\tau$ is the largest time $ \le 1$
such that $\|u(t)\| \le 1$ for $t\le \tau$.  Thus for $t\le \tau$ \eqref{FD.3} has a solution
$u(t)$ which we claim to satisfy $\|u(t)\|<1$.  To prove the claim we suppose
by contradiction that there is a time $t\le \tau$ and $i$ so that $|u_i(t)|=1$.
By \eqref{FD.3}, $m_i=u_i + t\Psi_i(u)= u_i$ (having used \eqref{FD.2}).  We have thus reached a contradiction  because $\|m\|<1$.  Thus the claim is proved and as a consequence
$\tau =1$ and therefore we have a solution
of \eqref{FD.3} for all $t\le 1$ with.

\noindent
Uniqueness.  Suppose there are two solutions $u$ and $v$.  Then
 \begin{equation*}
 u-v= \Psi(v)-\Psi (u)
   \end{equation*}
   Define $u(s) = su +(1-s)v$, $s\in [0,1]$, then
    \begin{equation*}
 \|u-v\| \le \int_0^1 \|\nabla\Psi (u(s)) (u-v)\|\,  ds
   \end{equation*}
Since  $\|u(s)\| <1$ by \eqref{FD.4} $\|\nabla\Psi (u(s)) (u-v)\| \le r\|u-v\|$,
so that $\|u-v\| \le r \|u-v\|$ and therefore $u=v$.

Boundedness. Calling $u=u(t)$ when $t=1$, by \eqref{FD.3} and  \eqref{FD.1}
 \begin{equation}
    \label{FD.6}
  \|u\| \le \|m\|  + \|\Psi (u )\| \le  \|m\|  + c_1 \la
   \end{equation}
so that if $\|m\| \le m_+$ then for $\la$ small enough  $\|u\| <1$
and therefore there exists $h_+$ such that $\|\und h\| \le h_+$.

\medskip

\vskip.5cm

\subsection {Proof of Theorem \ref{thm101.4}}
\label{Fsubsec.3.3}

Since
\begin{equation}
    \label{FD.7}
 -\phi_\ell(\und m) = \frac 1{\ell^2}\log  \{e^{-\ell \sum_i h_i m_i}
 \sum_{\si\in \{-1,1\}^{\Delta}}
  \mathbf 1_{\und m(\cdot|\si) =  \und m}  e^{-\sum_{x,i}\{ -\la \si (x,i)\si_\Delta(x,i+1)
 -  h_i  \si (x,i)\}}\}
   \end{equation}
we have for free
 \begin{equation}
    \label{FD.8}
      \frac{1}{\ell^2}\log \{e^{-\ell \sum_i h_i m_i}Z_{\ga,\Delta,\und h}\} \ge
-\phi_\ell (\und m)
   \end{equation}
and we are thus left with the proof of a lower bound for $-\phi_\ell (\und m) $.

Call $I_i = \{(x,i): x \le \ell - \ell^a\}$, let $a' \in (\frac 12, a)$ and
\begin{equation}
    \label{FD.9}
   \mathcal B_i=\{ \si(\cdot,i): |\sum_{(x,i)\in I_i} [\si(x,i) -m_i]| \le \ell^{a'}\}
   \end{equation}
Let $\mu$ be the Gibbs probability for the system with vertical interactions and magnetic fields $\und h$. We look for a lower bound for
\begin{equation*}
   \mu\Big[\{ \bigcap_i \mathcal B_i\} \cap\{\und m(\cdot|\si) =  \und m\}\Big]
   \end{equation*}
By the central limit theorem
\begin{equation}
    \label{101.29}
   \mu\Big[ \mathcal B_i^c \Big] \le e^{-b \ell^{2a'-1}},\quad b>0
   \end{equation}
because the spins in $I_i$ are i.i.d. with mean $m_i$.  Moreover
\begin{equation}
    \label{101.30}
   \mu\Big[\{ \und m(\cdot|\si) =  \und m \}\;|\; \{ \bigcap_i \mathcal B_i\}\Big]
   \ge   e^{-4\la \ell^{1+a}} 2^{-\ell^{1+a}}
   \end{equation}
because, given $\dis{\{ \bigcap_i \mathcal B_i\}}$, there is at least one configuration
in the complement of $I_i$ on each horizontal line.
Thus
    \begin{equation*}
   \mu\Big[\{ \bigcap_i \mathcal B_i\} \cap\{\und m(\cdot|\si) =  \und m\}\Big] \ge
   (1-\ell e^{-b \ell^{2a'-1}})e^{-4\la \ell^{1+a}} 2^{-\ell^{1+a}}
   \end{equation*}
hence
    \begin{equation*}
 -\phi_\ell (\und m) \ge  \frac{1}{\ell^2}\log \{e^{-\ell \sum_i h_i m_i}Z_{\ga,\Delta,\und h}\}
 - \frac 1 {\ell^2} \log \{
   (1-\ell e^{-b \ell^{2a'-1}})e^{-4\la \ell^{1+a}} 2^{-\ell^{1+a}}\}
   \end{equation*}
which together with \eqref{FD.8} proves \eqref{101.24}.

\vskip1cm

\setcounter{equation}{0}

\section{Proof of Lemma \ref{Flemma4.4.1}}
\label{FApp.E}


We first write
    \begin{eqnarray}
    \label{2.23}
H^{\rm eff}_{\ell,\und h } &&=\sum_{i=1}^\ell \{ -\frac {u_i^2}2 -(h_{\rm ext}- h_i)u_i
-\log(e^{h_i}+e^{-h_i})\}\nn\\
&&+\sum_{i=1}^\ell \{[h_i-u_i-h_{\rm ext}]\Psi_i -\frac {\Psi_i^2}2  \} - \log Z^*_{\ell,\und h}
+A_\emptyset
   \end{eqnarray}
 We have $\log(e^{h_i}+e^{-h_i}) = h_iu_i +S(u_i)$, the entropy $S(u)$ being defined in \eqref{101.4.00}--\eqref{101.4.00.01}. Thus
 \begin{eqnarray}
    \label{2.25}
H^{\rm eff}_{\ell,\und h }  =\sum_{i=1}^\ell \{T(u_i) - h_{\rm ext} u_i
 + (h_i-u_i- h_{\rm ext})\Psi_i -\frac {\Psi_i^2}2  \} - \log Z^*_{\ell,\und h}
 +A_\emptyset
   \end{eqnarray}

The term with $h_{\rm ext}\Psi_i$ in \eqref{2.25} becomes
   \begin{equation*}
- h_{\rm ext} \sum_i \Phi_i   +\la h_{\rm ext}
\sum_i \Big(u_i^2 u_{i+1}+u_i u_{i+1}^2\Big)-2\la h_{\rm ext}
\sum_i u_i
   \end{equation*}
which can be written as
   \begin{equation}
    \label{Z.2}
- h_{\rm ext}\sum_i  [\Phi_i +2\la u_i] +\la h_{\rm ext}
\sum_i \Big(2u_i^3 -(u_i +u_{i+1})(u_{i+1}-u_i )^2\Big)
   \end{equation}
After an analogous procedure  for the term with $(h_i-u_i)\Psi_i$ we get
\eqref{Z.3}.

\vskip1cm

\setcounter{equation}{0}

\section{Proof of Theorem \ref{Fthm.4.8}}
\label{FApp.F}

We  say that a function $F(\und u)$ is ``sum of one body and gradients squared terms'' if
\[
F(\und u)= \sum_{i=1}^\ell f (u_i) + \sum_{1\le i<j\le \ell}
b _{i,j}(\und u) (u_i-u_j)^2
\]
for some functions $f(u)$ and $b_{i,j}(\und u)$.  Thus \eqref{F4.23}
claims that
$H^{(1)}_{\ell,\und h }$ is ``sum of one body and gradients squared terms''.  We say in short that the  ``gradients squared terms are bounded as desired'' if
\[
 \sum_{1\le i<j\le \ell}
|b _{i,j}(\und u)| (u_i-u_j)^2 \le c \la^{1+\frac{2}{3}} \sum_{  i  }
  (u_i-u_{i+1})^2
\]
Hence \eqref{F4.24} will follow by showing that the  gradients squared terms of
$H^{(1)}_{\ell,\und h }$ are bounded as desired.

We will
examine separately the various terms which contribute to $H^{(1)}$ and prove that each one of them is sum of one body and gradients squared terms and that the latter are bounded as desired.

\medskip

\subsection{The
$\Theta$ term}
\label{FAppsubsec.F.1}

 By \eqref{Z.1.1} 
 \begin{equation*}
 \Theta = \sum_{N(\cdot)\ne 0} A_{N(\cdot)} u^{N(\cdot)}+\frac{\la}2\sum_{i=1}^\ell(u_{i+1}-u_i)^2
  \end{equation*}
Call $\Theta^{(2)}$ the above expression when we restrict the sum to
$N(\cdot): |N(\cdot)|=2$
and  call $\Theta^{(>2)}=\Theta-\Theta^{(2)}$. Thus $\Theta^{(>2)}$ is equal to
the sum of $A_{N(\cdot)}$ over  $N(\cdot): |N(\cdot)|>2$, i.e.\ $|N(\cdot)|\ge 4$,  recall in fact from Theorem \ref{thm103.1} that $A_{N(\cdot)}=0$ if ${N(\cdot)}$ is odd.  We start from
$\Theta^{(2)}$ which,
recalling \eqref{FD.1.1}, is equal to
\begin{equation}
    \label{FApp.1}
  \Theta^{(2)} = \alpha \sum_i u_i^2 -\frac 12\sum_i(\alpha_{1} - \la )(u_{i+1}-u_i)^2 
 -\frac 12 \sum_{i<j, j-i>1} \alpha_{j-i} (u_{j}-u_i)^2 
  \end{equation}
Thus $\Theta^{(2)}$ is sum of one body and gradients squared terms.  To prove that the latter
are bounded as desired we write
\begin{equation}
    \label{FApp.1.1}
 (u_{j}-u_i)^2 \le (j-i) \sum_{k=i}^{j-1}(u_{k+1}-u_k)^2
  \end{equation}
and call $n=k-i\ge 0$, $m= j-k\ge 1$.  We then
use \eqref{103.6} to bound the sum of the terms with the gradients by
\begin{equation}
    \label{FApp.2}
\sum_{k} (u_{k+1}-u_k)^2\Big\{c \la e^{-2b} + \sum_{n\ge 0,m\ge 1, m+n>1} (m+n)c \la^{m+n}e^{m+n}\Big\}
  \end{equation}
which is the desired bound because $\frac 23\le \frac 56$.

We rewrite $\Theta^{(>2)}$ using \eqref{F5.1} for each one of the factors $u^{N(\cdot)}$.
Thus given $N(\cdot)$ we call $i_1<i_2<\cdots <i_k$ the sites where $N(\cdot)>0$
and call $\und n=(N(i_1),..,N(i_k))$.  We then apply \eqref{F5.1} with $u_1 = u_{i_1}, \dots, u_k = u_{i_k}$ so that  $p_i$ and $d_{i,j}$ in  \eqref{F5.1}  become functions of $\und u$ and $N(\cdot)$.  We then get
\begin{equation}
    \label{FApp.3}
  \Theta^{(>2)} = \sum_{N(\cdot):|N(\cdot)|\ge 4} A_{N(\cdot)}\{  \sum_{i:N(i)>0} p_i u_i^{|N(\cdot)|}
  +\sum_{j>i :  N(j)>0,N(i)>0} d_{i,j}  (u_i-u_j)^2\}
  \end{equation}
 which  is sum of one body and gradients squared terms.  To get the desired   bound on the
 latter  we use  \eqref{FApp.1.1} and \eqref{F5.2} to get
 \begin{equation*}
   \sum_{k}(u_k-u_{k+1})^2 \sum_{i,j: j>k\ge i}\{ (j-i)\sum_{N(\cdot):|N(\cdot)|\ge 4,
   N(i)>0, N(j)>0}c |N(\cdot)|^3|A_{N(\cdot)}| \}
  \end{equation*}
Since both $  N(i)>0$, $N(j)>0$ then $j-i\le R(N(\cdot))$ and given   $R(N(\cdot))\ge k-i$
there are at most $R(N(\cdot))$ possible values of $j$.  Therefore
the above expression is bounded by
  \begin{equation*}
   \sum_{k}(u_k-u_{k+1})^2  \sum_{i\le k} \sum_{N(\cdot):|N(\cdot)|\ge 4,
   N(i)>0, R(N(\cdot)) \ge k-i} \|N(\cdot)\|^5|A_{N(\cdot)}| \} 
  \end{equation*}
We upper bound the above if we extend the sum over $N(\cdot)$ such that
\[
|N(\cdot)|\ge 4,
   N(i)>0, \|N(\cdot)\| \ge \ga_{k-i},\quad \ga_{k-i}:=\max\{4,k-i\}
\]
We then apply \eqref{2.6} with $k=5$ to get
  \begin{equation*}
   \sum_{k}(u_k-u_{k+1})^2  \sum_{i\le k}  c_5 \ga_{k-i}^{5} e^{-b\ga_{k-i}}
   = e^{-4b} \sum_{k}(u_k-u_{k+1})^2 \{ \sum_{i\le k}  c_5 \ga_{k-i}^{5} e^{-b(\ga_{k-i}-4)}\}
     \end{equation*}
The curly bracket is bounded by
\begin{equation*}
4^55+\sum_{n\ge 1}(n+4)^5 e^{-bn } \le c 
  \end{equation*}
 Thus also $\Theta^{(>2)}$ is bounded  as desired. 

 \medskip

\subsection{The
term $h_{\rm ext} \sum_i \Phi_i$}
By \eqref{Z.1}
  \begin{equation}
  \label{FApp.3.00}
\Phi_i = (1-u_i^2)\Big( (\alpha_{1}-\la) (u_{i+1}+u_{i-1})
+ \sum_{j>i+1} \alpha_{j-i} u_j
+ \sum_{N(\cdot): |N(\cdot)| \ge 4} N(i)A_{N(\cdot)} u^{N(\cdot)-e_i}\Big)
\end{equation}
where $e_i(j)=0$ if $j\ne i$ and $=1$ if $j=i$.

Call $g_i:=(1-u_i^2)  (\alpha_{1}-\la)$ then the first term contributes to $\sum_i
\Phi_i$ by
  \begin{eqnarray*}
\sum_i \Big( 2g_i u_{i} -(g_i-g_{i+1})(u_i-u_{i+1})\Big) =
\sum_i  2g_i u_{i} + (\alpha_{1}-\la)\sum_i (u_i+u_{i+1}) (u_i-u_{i+1})^2
\end{eqnarray*}
which is  sum of one body and gradients squared terms.
%
By \eqref{103.6} the coefficients of the gradients squared are bounded
in absolute value by  $2c \la e^{-2b}$ which is the desired bound because $\frac 23\le \frac 56$.

By an analogous argument and writing   $g'_i:=(1-u_i^2)$, the contribution of the second term in \eqref{FApp.3.00}
is
  \begin{eqnarray*}
\sum_{i<j} \alpha_{j-i} \Big( 2g'_i u_{i} -(g'_i-g'_{j})(u_i-u_{j})\Big)
= \sum_{i<j} \alpha_{j-i} \Big( 2g'_i u_{i} +(u_i+u_{j}) (u_i-u_{j})^2\Big)
\end{eqnarray*}
which is  sum of one body and gradients squared terms.  We bound the latter using
\eqref{FApp.1.1} and the second inequality in \eqref{103.6} to get
\[
\sum_{k }(u_{k+1}-u_k)^2 \{\sum_{i\le k <j, j-i>2}2 c( e\la)^k\}
%
\]
which is the desired bound because the curly bracket is bounded by $c' \la^2$.%

To write the contribution to $\sum_i \Phi_i$ of the last term in \eqref{FApp.3.00} we introduce the following notation.  Given $N(\cdot): N(i)>0$  we call $N'(\cdot)= N(\cdot)-e_i$
and $N''(\cdot)=N(\cdot)+e_i$.  Let then
 $i_1<i_2<\cdots <i_k$ the sites $j$ where $N'(j) >0$,
 $\und n=(N'(i_1),..,N'(i_k))$ and denote by $p^{-}_j$, $d^-_{j,j'}$ the
corresponding coefficients in \eqref{F5.1}. Similarly let
$i'_1<i'_2<\cdots <i'_k$ the sites $j$ where $N''(j) >0$,
$\und n=(N''(i_1),..,N''(i_k))$ and denote by $p^{+}_j$, $d^+_{j,j'}$ the
 corresponding coefficients in \eqref{F5.1}. Then  the contribution to $\sum_i \Phi_i$ of the last term in \eqref{FApp.3.00} can be written as
\begin{eqnarray}
    \label{FApp.3.000}
 && \sum_{N(\cdot):|N(\cdot)|\ge 4} A_{N(\cdot)} \sum_{i:N(i)>0} N(i) \Big( \sum_{j:N'(j)>0}[p^-_j u_j^{|N(\cdot)|-1} -\sum_{j:N''(j)>0} p^+_ju_j^{|N(\cdot)|+1}]\nn\\&&
  +\sum_{j<j': N'(j)>0,N'(j')>0 }d^-_{j,j'} (u_j-u_{j'})^2-
  \sum_{j<j': N''(j)>0,N''(j')>0 }d^+_{j,j'} (u_j-u_{j'})^2\}
  \end{eqnarray}
which is sum of one body and gradients squared terms.  To bound the latter
we examine the terms with $d^-$, those with $d^+$ are analogous and their analysis is omitted.
For the $d^-$ terms we get the bound:
\begin{eqnarray*}
 && \sum_{N(\cdot):|N(\cdot)|\ge 4} |A_{N(\cdot)}| \sum_{i:N(i)>0} N(i)
 \sum_{j<j': N'(j)>0,N'(j')>0 } c|N(\cdot)|^3 (u_j-u_{j'})^2 \\&&
 \le \sum_{N(\cdot):|N(\cdot)|\ge 4} |A_{N(\cdot)}|
 \sum_{j<j': N(j)>0,N(j')>0 } c|N(\cdot)|^4 (u_j-u_{j'})^2
  \end{eqnarray*}
which has an analogous structure as the gradient term in \eqref{FApp.3}. Its
analysis   is similar
and thus omitted. We have thus
proved that   $h_{\rm ext} \sum_i \Phi_i$ has the desired structure.

 \medskip

\subsection{The term
$\sum_i\Psi_i^2$}
We introduce the following notation: given $i, N(\cdot),N'(\cdot),\si,\si'$,
$\si\in \{-1,1\}$, $\si'\in \{-1,1\}$, $N(i)>0$, $N'(i)>0$, we call
\[
\bar N(\cdot) = N(\cdot)+N'(\cdot),\quad K\equiv K_{i,\bar N(\cdot),\si,\si'}:=\bar N(\cdot)+(\si +\si')e_i
\]
Then   $\sum_i\Psi_i^2$ is equal to
\begin{eqnarray}
    \label{FApp.3?}
 &&\sum_i  \sum_{N(\cdot),N'(\cdot),\si,\si'} N(i)N'(i)A_{N(\cdot)}
 A_{N'(\cdot)}(-1)^{\frac{\si+\si'}2 +1}  \Big( \sum_{j:K(j)>0} p_j(K)u_j^{|K|}\nn\\&&
  +\sum_{j<j':  K(j)>0,   K(j')>0} d_{j,j'}(K) (u_{j'}-u_j)^2\Big)
  \end{eqnarray}
which is sum of one body and gradient squared terms. Let
\begin{eqnarray*}
 &&C_{j,j'}:= \sum_i  \sum_{N(\cdot),N'(\cdot),\si,\si'} N(i)N'(i)|A_{N(\cdot)}|
 |A_{N'(\cdot)}| \sum_{j<j':  K(j)>0,   K(j')>0} |d_{j,j'}(K)| 
  \end{eqnarray*}
then  the gradient squared terms are bounded by $\sum_{j<j'}C_{j,j'}(u_{j'}-u_j)^2$.  We have
 \begin{equation*}
 C_{j,j'} \le 4 \sum_i  \sum_{N(\cdot),N'(\cdot)} N(i)N'(i)|A_{N(\cdot)}||
 A_{N'(\cdot)}|  
  \sum_{j<j':  \bar N(j)>0, \bar N(j')>0} c(|N(\cdot)|+|N'(\cdot)|+2)^3
    \end{equation*}
because 4 is the cardinality of $(\si,\si')$.  Moreover
 \begin{equation*}
C_{j,j'} \le  4c \sum_i  \sum_{N(\cdot),N'(\cdot):\bar N(j)>0, \bar N(j')>0, N(i)>0,N'(i)>0} |A_{N(\cdot)}||
 A_{N'(\cdot)}|   (2|N(\cdot)|)^4(2|N'(\cdot)|)^4
    \end{equation*}
By the symmetry between $N(\cdot)$ and $N'(\cdot)$ we get
with an extra factor 2:
\begin{equation*}
C_{j,j'} \le  8c 4^4\sum_i  \sum_{N(\cdot),N'(\cdot): N(j)>0, \bar N(j')>0, N(i)>0,N'(i)>0} |A_{N(\cdot)}||
 A_{N'(\cdot)}|  |N(\cdot)|^4 N'(\cdot)|^4
    \end{equation*}
Moreover either $R(N(\cdot)) \ge (j'-j)/2$, or $R(N'(\cdot)) \ge (j'-j)/2$ or both, hence
  \begin{eqnarray*}
 && C_{j,j'} \le  8c 4^4  \Big(\sum_{N(\cdot): N(j)>0, R(N(\cdot)) \ge \frac{j'-j}2}
 |A_{N(\cdot)}| |N(\cdot)|^4 \sum_{i:N(i)>0} \sum_{N'(\cdot): N'(i)>0} N'(\cdot)|^4
 \\&&\hskip1cm+\sum_{N(\cdot): N(j)>0}
 |A_{N(\cdot)}| |N(\cdot)|^4 \sum_{i:N(i)>0} \sum_{N'(\cdot): N'(i)>0, R(N'(\cdot)) \ge \frac{j'-j}2} N'(\cdot)|^4\Big)
  \end{eqnarray*}
  By  \eqref{2.6}
    \begin{eqnarray*}
 && C_{j,j'} \le  8c 4^4  \Big(\sum_{N(\cdot): N(j)>0, R(N(\cdot)) \ge \frac{j'-j}2}
 |A_{N(\cdot)}| |N(\cdot)|^4 |N(\cdot)|c_4 e^{-2b  }
 \\&&
 \hskip1cm+\sum_{N(\cdot): N(j)>0}
 |A_{N(\cdot)}| |N(\cdot)|^4 |N(\cdot)| c_4 e^{-b\max\{2,  \frac{j'-j}2\}}\Big)
  \end{eqnarray*}
  Using again  \eqref{2.6}
      \begin{eqnarray*}
 && C_{j,j'} \le  8c 4^4  2c_4 e^{-2b  }c_5 e^{-b\max\{2,  \frac{j'-j}2\}}
 =: c'e^{-2b  }e^{-b\max\{2,  \frac{j'-j}2\}}
  \end{eqnarray*}
Hence
\[
\sum_{j<j'}C_{j,j'} (u_{j'}-u_j)^2 \le
\sum_{k}(u_{k+1}-u_k)^2 \sum_{j,j':j\le k < j'}(j'-j) c'e^{-2b  }e^{-b\max\{2,  \frac{j'-j}2\}}
\]
The last sum is bounded proportionally to $e^{-4b}$ (details are omitted)
which gives the desired bound. 

 \medskip

\subsection{The
term $  \sum_i \xi_i \Phi_i$}
\label{FAppsubsec.F.3}
Recalling \eqref{Z.3.01} and  \eqref{Z.1}  the contribution to $H^{(1)}_{\ell,\und h }$ due to   $\sum_i \xi_i \Phi_i$ is
\begin{eqnarray}
   \label{FAppsubsec.F.3.1}
  \sum_{i=1}^\ell   (h_i-u_i)(1-u_i^2)\{\sum_{j>i+1} \alpha_{j-i}  u_j
+ \sum_{N(\cdot): N(i)>0,|N(\cdot)| \ge 4}N(i) A_{N(\cdot)} u^{N^{(i)}(\cdot)}  \}
   \end{eqnarray}
 We have
 \begin{equation}
 \label{FAppsubsec.F.3.2}
 (h-u)(1-u^2) = \frac{u^3}3-2 \sum_{k=2}^{\infty}  \frac 1{4k^2-1}  u ^{2k+1}
 =: \sum_{k=1}^{\infty} \kappa_k u ^{2k+1}
 \end{equation}
with $|\kappa_k| <1$;  since $|u| \le u_+ < 1$ the series converges exponentially.
We start from the terms with $\alpha_{j-i}$:
\[
 \sum_{i=1}^\ell \sum_{j>i+1} \alpha_{j-i} \sum_{k\ge 1} \kappa_k u_i^{2k+1} u_j
 =  \sum_{i=1}^\ell \sum_{j>i+1} \alpha_{j-i} \sum_{k\ge 1} \kappa_k \{(p_iu_i^{2k+2}+p_j u_j^{2k+2})+   d  (u_i-u_j)^2\}
\]
where $(p_i,p_j)$ is the probability vector introduced in Theorem \ref{Fthm.5.9}
and $d$ the corresponding coefficient.  They depend on the pair $(2k+1,1)$ and
$|d| \le c k^{6}u_+^{2k}$.  This is sum of one body and squared gradients terms
and we are left with   bounding   the latter.  We have the bound
\[
   \sum_{i=1}^\ell \sum_{j>i+1} |\alpha_{j-i}| \sum_{k\ge 1}  ck^6  u_+^{2k}  (u_i-u_j)^2
   \le  \sum_{i=1}^\ell \sum_{j>i+1} |\alpha_{j-i}|    c' (u_i-u_j)^2
\]
which satisfies the desired bound as proved in Subsection \ref{FAppsubsec.F.1}.

We next study the last term  on the right hand side of \eqref{FAppsubsec.F.3.1}.  Proceeding as before we check that it is sum of one body and gradients squared terms and next prove that
the gradients are bounded as desired. We first bound them by
\[
 \sum_{i}\sum_{j<j'}  \sum_{N(\cdot): N(i)>0,N(j)>0, N(j')>0, |N(\cdot)|\ge 4}N(i)
 |A_{N(\cdot)}| \sum_{k\ge 1} c(2k+|N(\cdot)|)^3 u_+^{2k}(u_{j'}-u_j)^2
 \]
We have $(2k+|N(\cdot)|)^3 \le (2k)^3 |N(\cdot)|^3$ so that we get the bound
\[
 \sum_{i}\sum_{j<j'}  \sum_{N(\cdot): N(i)>0,N(j)>0, N(j')>0, |N(\cdot)|\ge 4}N(i)
 |A_{N(\cdot)}| c' |N(\cdot)|^3 (u_{j'}-u_j)^2
 \]
 with
\[
c':= \sum_{k\ge 1}(2k)^3 u_+^{2k}
\]
We can perform the sum over $i$ to get
\[
 \sum_{j<j'}  \sum_{N(\cdot):  N(j)>0, N(j')>0, |N(\cdot)|\ge 4}
 |A_{N(\cdot)}| c' |N(\cdot)|^4 (u_{j'}-u_j)^2
 \]

We are thus reduced to the case considered  in Subsection \ref{FAppsubsec.F.1}, we omit the details.

\vskip2cm

\vskip1cm

\setcounter{equation}{0}

\section{Proof of Proposition \ref{Fprop.4.8.1}}
\label{FApp.G}

Recalling that $\xi(u):=(h(u)-u)(1-u^2)$, we have, supposing $u'>u'$,
  \begin{equation}
   \label{2.25.3}
\xi(u')-\xi(u)= \int_{u }^{u'} \frac{d\xi}{du} du \le a(u_i-u_j),
\end{equation}
with $\dis{a = \max_{|u| < 1}\frac{d\xi}{du}}$. Thus $\theta_i(\und u) \le a$ and by \eqref{FAppsubsec.F.3.2}
\begin{eqnarray*}
a
&=& \max_{|u|<1}\left(u^2-2 u \sum_{k=1}^{\infty}\frac{u^{2k+1}}{2k+1} \right)
 < \max_{|u|<1} \left(u^2-\frac{2}{3}u^4\right)=\frac{3}{8}
\end{eqnarray*}
having retained only the term with $k=1$.

\vskip1cm

\setcounter{equation}{0}

\section{Proof of Theorem \ref{Fthm.4.8.2}}
\label{FApp.H}


We shall use in the proof that in $H^{\rm eff}_{\ell,\und h }$ all terms but $\left(T(u) - h_{\rm ext} u\right)$, cf. \eqref{2.25}, are proportional to $\lambda$.\\
Calling $\tilde u$ the minimizer of $\left(T(u) - h_{\rm ext} u\right)$ :
\begin{itemize}

\item It will follow from Lemma \ref{lemma6.3} that the minimizer
$\und u^*$ of $H^{\rm eff}_{\ell,\und h }$ has components $u^*_i$ such that $|u^*_i-\tilde u|
< \la^{1/4}$ (for all  $\la$ small enough), and that the minimizer $v$ of
$f(u)$, $f(u)$ the one body term defined in \eqref{F4.23}, is such that $|v-\tilde u|< \la^{1/4}$;

\item Since the gradient of $H^{\rm eff}_{\ell,\und h }$ vanishes at $\und v=(v_i=v,\;i=1,..,\ell)$, cf. \eqref{F4.23}, $\und v$ is a critical point of $H^{\rm eff}_{\ell,\und h }$;

\item $T(u)$ is a convex function and its second derivative $T''(u)$ is a
  strictly increasing, positive function of $u \in (0,1)$ which diverges as
  $u\to 1$, as it follows from \eqref{2.24}.  Then the matrix $\frac {\partial^2}{\partial u_i\partial u_j}H^{\rm eff}_{\ell,\und h }$ is positive definite in the ball
   $\und u: |u_i-\tilde u|
< \la^{1/4}$, cf. Proposition \ref{prop6.5}.

\end{itemize}

As a consequence, the minimizer  of $H^{\rm eff}_{\ell,\und h }$ in the ball coincides with
$\und v$ and since   $\und u^*$ is in the ball it coincides with
$\und v$, thus proving that all the components of $\und u^*$ are equal to each other.
We are thus left with the proof of Lemma \ref{lemma6.3} and  Proposition \ref{prop6.5}.
We need a preliminary lemma.

\medskip

\begin{lem}
\label{lemma6.2}

For any $h_{\rm ext} \in [h_0,h^*]$ there is a unique $\tilde u $ such that
    \begin{equation}
    \label{2.24.1}
 \frac {d}{du}\{ T(  u) - h_{\rm ext}   u\}\Big|_{u=\tilde u} = 0
   \end{equation}
and there is $c_{h_0}>0$ so that
   \begin{equation}
    \label{2.24.2}
\inf_{h_{\rm ext} \in [h_0,h^*]} \frac {d^2}{du^2}  T(u)\Big|_{u=\tilde u}
\ge c_{h_0}
   \end{equation}

\end{lem}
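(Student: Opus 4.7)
The plan is to exploit the explicit power series for $T(u)$ given in \eqref{2.24} so as to compute $T'$ and $T''$ in closed form, then read off both uniqueness and the uniform lower bound directly from monotonicity. Differentiating the series term by term gives
\[
T'(u)=\sum_{k=1}^{\infty}\frac{u^{2k+1}}{2k+1}=\mathrm{arctanh}(u)-u,\qquad T''(u)=\sum_{k=1}^{\infty}u^{2k}=\frac{u^{2}}{1-u^{2}},
\]
both convergent on $(-1,1)$. Thus the critical point equation \eqref{2.24.1} reads $T'(\tilde u)=h_{\rm ext}$, i.e.\ $\mathrm{arctanh}(\tilde u)-\tilde u=h_{\rm ext}$.

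First I would establish existence and uniqueness of $\tilde u$. The function $T''$ is strictly positive on $(0,1)$, so $T'$ is strictly increasing on $[0,1)$ with $T'(0)=0$ and $T'(u)\to+\infty$ as $u\to 1^{-}$. Since $h_{\rm ext}\in[h_0,h^{*}]$ is a strictly positive, bounded interval, the equation $T'(\tilde u)=h_{\rm ext}$ has a unique solution $\tilde u=\tilde u(h_{\rm ext})\in(0,1)$, and the map $h_{\rm ext}\mapsto\tilde u(h_{\rm ext})$ is continuous and strictly increasing by the inverse function theorem applied away from $u=0$.

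For the lower bound I would use the fact that $T''(u)=u^{2}/(1-u^{2})$ is itself strictly increasing in $u$ on $(0,1)$. Since $\tilde u(h_{\rm ext})\ge \tilde u(h_{0})>0$ for every $h_{\rm ext}\in[h_{0},h^{*}]$, monotonicity of $T''$ gives
\[
T''(\tilde u(h_{\rm ext}))\ge T''(\tilde u(h_{0}))=\frac{\tilde u(h_{0})^{2}}{1-\tilde u(h_{0})^{2}}=:c_{h_{0}}>0,
\]
proving \eqref{2.24.2}. In particular $c_{h_{0}}$ depends only on $h_{0}$ (equivalently on the solution of $\mathrm{arctanh}(u)-u=h_{0}$), as required.

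There is no real obstacle here: the whole statement reduces to elementary facts about the function $T'(u)=\mathrm{arctanh}(u)-u$. The only point requiring minor care is that $T''$ vanishes at $u=0$, so $T$ is only convex (not strictly convex) on a neighbourhood of the origin; this is harmless because the hypothesis $h_{\rm ext}\ge h_{0}>0$ forces $\tilde u$ to be uniformly bounded away from $0$, which is precisely what separates $T''(\tilde u)$ from zero.
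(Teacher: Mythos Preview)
Your proof is correct and follows the same approach as the paper: both rely on the fact that $T''(u)$ is positive for $u\ne 0$ and strictly increasing on $(0,1)$. The paper's proof is a one-sentence sketch of exactly this observation, whereas you carry it out explicitly via the closed forms $T'(u)=\mathrm{arctanh}(u)-u$ and $T''(u)=u^{2}/(1-u^{2})$, which makes the argument more transparent.
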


\begin{proof}
The proof follows from the fact that the second derivative
of $T(u)$ is positive away from 0 and in $(0,1)$ increases to $\infty$ as $u\to 1$.

\end{proof}

\medskip

Fix all $u_j, j\ne i$ and call
$F(u_i)$ the energy $H^{\rm eff}_{\ell,\und h }(\und u)$ as a function of $u_i$.  Then

\medskip

\begin{lem}
\label{lemma6.3}
There is  $c'_{h_0}>0$ so that for all $\la$ small enough the following holds.
Let $h_{\rm ext} \in [h_0,h^*]$ and $\tilde u$ as in Lemma \ref{lemma6.2} then
   \begin{equation}
    \label{2.24.3}
\inf_{u_i: |u_i-\tilde u| \ge \la^{1/4}}F(u_i) \ge  F(\tilde u) +   c'_{h_0}
\la^{1/2}
   \end{equation}

\end{lem}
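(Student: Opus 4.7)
The plan is to exploit the fact that in the decomposition \eqref{Z.3} of $H^{\rm eff}_{\ell,\und h}$, the only contribution that is not proportional to $\la$ is the one-body piece $T(u_i) - h_{\rm ext} u_i$. Fixing $u_j$ for all $j\neq i$, I would write
$$F(u_i) = T(u_i) - h_{\rm ext} u_i + \la\, G(u_i) + C_0,$$
where $C_0$ is independent of $u_i$ and $G(u_i)$ gathers all the $\la$-proportional contributions of \eqref{Z.3}: the quadratic gradient pairs $(u_i-u_{i\pm 1})^2$, the mean-field correction $-2 h_{\rm ext}[u_i - u_i^3] + 2\xi_i u_i$ (with $\xi_i$ itself a bounded polynomial-like function of $u_i$ via \eqref{Z.1.2} and \eqref{FAppsubsec.F.3.2}), the discrete-derivative terms $(\xi_i - \xi_{i\pm 1})(u_i - u_{i\pm 1})$, and the $H^{(1)}_{\ell,\und h}$ pieces decomposed in \eqref{F4.23}. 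Using $|u_j|\le u_+<1$ together with the cluster-expansion estimates of Theorem \ref{thm103.1} and Corollary \ref{lemma5.1}, both $|G(u_i)|$ and $|G'(u_i)|$ are dominated by a constant $C$ depending only on $u_+$, $h_+$ and $h^*$.

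The second step is to quantify the strong convexity of $T(u) - h_{\rm ext} u$ near the minimizer $\tilde u$ provided by Lemma \ref{lemma6.2}. Since $h_{\rm ext}\in[h_0,h^*]$ confines $\tilde u$ to a compact subinterval of $(0,1)$, the bound $T''(\tilde u)\ge c_{h_0}$ and continuity of $T''$ ensure that $T''(u)\ge c_{h_0}/2$ throughout $[\tilde u - \la^{1/4}, \tilde u + \la^{1/4}]$ for $\la$ small enough. Taylor expansion around $\tilde u$, using $T'(\tilde u)=h_{\rm ext}$, then gives
$$T(\tilde u \pm \la^{1/4}) - h_{\rm ext}(\tilde u \pm \la^{1/4}) - \bigl(T(\tilde u) - h_{\rm ext}\tilde u\bigr) \ \ge\ \frac{c_{h_0}}{4}\,\la^{1/2},$$
and since $u\mapsto T(u)-h_{\rm ext}u$ is strictly convex with unique minimum at $\tilde u$ it is monotone on each side of $\tilde u$; hence this quadratic bound extends to the full region $|u-\tilde u|\ge \la^{1/4}$.

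Combining the two ingredients, for any $u_i$ with $|u_i - \tilde u|\ge \la^{1/4}$,
$$F(u_i) - F(\tilde u) \ \ge\ \frac{c_{h_0}}{4}\,\la^{1/2} - 2C\la \ \ge\ \frac{c_{h_0}}{8}\,\la^{1/2}$$
for all $\la$ sufficiently small, so $c'_{h_0}:=c_{h_0}/8$ will do. I expect the main obstacle to lie in the first step rather than the convexity estimate: one must carefully verify, term by term in \eqref{Z.3} and in the decomposition of $H^{(1)}_{\ell,\und h}$ from Appendix \ref{FApp.F}, that every contribution other than $T(u_i) - h_{\rm ext} u_i$ indeed carries an explicit factor of $\la$ and has $u_i$-derivative bounded uniformly for $\und u$ in the ball $\{|u_j|\le u_+\}$, which requires invoking the cluster-expansion bounds to control the nonlocal $\Phi_i$, $\Psi_i^2$ and $\Theta$ pieces.
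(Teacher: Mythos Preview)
Your proposal is correct and follows essentially the same strategy as the paper: isolate $T(u_i)-h_{\rm ext}u_i$, use the strict convexity bound $T''(\tilde u)\ge c_{h_0}$ from Lemma~\ref{lemma6.2} to gain $c\la^{1/2}$ on the set $|u_i-\tilde u|\ge\la^{1/4}$, and check that all remaining terms vary only by $O(\la)$. The paper carries this out via the simpler expression \eqref{2.25} rather than \eqref{Z.3}, bounding directly $|(h_i-u_i)(1-u_i^2)|\le c$, $(1-u_i^2)^{-1}|\Psi_i|\le c\la$, and the $u_i$-variation of $\log Z^*_{\ell,\und h}$ by $c\la|u_i-u'_i|$ from \eqref{103.3}; this avoids having to track the $H^{(1)}$ pieces of \eqref{F4.23} and Appendix~\ref{FApp.F} separately, so it is a shorter route to the same bound.
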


\begin{proof}
By \eqref{2.24.2}
\[
\inf_{u_i: |u_i-\tilde u| \ge \la^{1/4}} |
\{ T(u) - h_{\rm ext} u\} - \{ T(\tilde u) - h_{\rm ext} \tilde u \}| \ge \frac {c_{h_0}}2 \la^{1/2}
\]
We are going to show that the variation of all the other terms in \eqref{2.25} are bounded proportionally to $\la$ and this will then complete the proof of the lemma.  We have
\[
|(h_i-u_i) (1-u_i^2)| \le c, \quad (1-u_i^2)^{-1}|\Psi_i| \le c \la
\]
(the first inequality by \eqref{FAppsubsec.F.3.2}, the last inequality by \eqref{FD.1}).  \\
Call $G(u_i)$ the value of $\log Z^*_{\ell,\und h}$ when $\tanh (h_i)= u_i$ and the other $h_j$ are fixed, then
\[
|G(u_i) - G(u'_i)| =|\sum_{N(\cdot): N(i)>0}
A_{N(\cdot)} u ^{N^{(i)}(\cdot)}(u_i-u'_i)| \le c \la |u_i-u'_i|
\]

where, to derive the last inequality, we have used Theorem \ref{thm103.1}.

\end{proof}

\medskip

As a corollary of the above lemmas

\medskip

\begin{lem}
\label{lemma6.4}
For $\la$ small enough the inf of $H^{\rm eff}_{\ell,\und h }$ is achieved in the ball
$\und u: \max \{ |u_i-\tilde u | \le \la^{1/4}, i=1,..,\ell\}$.

\end{lem}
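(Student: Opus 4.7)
The strategy is to argue that any minimizer must lie in the stated ball by a coordinate-wise improvement argument, using Lemma \ref{lemma6.3} as the key input. Since $H^{\rm eff}_{\ell,\und h}$ is continuous on the compact set $\{\und u: |u_i|\le u_+, i=1,\dots,\ell\}$ to which we are restricted (after Theorem \ref{Fthm5.6}), the infimum is actually attained; let $\und u^*$ be a minimizer.

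The first step is to observe that for any $i$, fixing the components $u_j^*$ for $j\ne i$ and viewing $H^{\rm eff}_{\ell,\und h}$ as a function of the single variable $u_i$ puts us exactly in the setting of Lemma \ref{lemma6.3}. Thus if we let $F_i(\cdot)$ denote this one-variable function and $\tilde u$ be the point of Lemma \ref{lemma6.2}, then Lemma \ref{lemma6.3} gives
\begin{equation*}
F_i(u_i) \ge F_i(\tilde u) + c'_{h_0}\lambda^{1/2} \qquad \text{whenever } |u_i-\tilde u|\ge \lambda^{1/4},
\end{equation*}
provided $\lambda$ is small enough (and this smallness threshold is uniform in $i$ and in the choice of the frozen coordinates $u_j^*$, since the bounds used in the proof of Lemma \ref{lemma6.3} only involve $\lambda$ and the universal quantities from Lemma \ref{lemma6.2} and Theorem \ref{thm103.1}).

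Now suppose by contradiction that some component $u_{i_0}^*$ of the minimizer satisfies $|u_{i_0}^*-\tilde u|\ge \lambda^{1/4}$. Let $\und u'$ be the configuration obtained from $\und u^*$ by replacing the $i_0$-th component with $\tilde u$, leaving the others unchanged. Then, by the displayed inequality above applied to $i=i_0$,
\begin{equation*}
H^{\rm eff}_{\ell,\und h}(\und u') = F_{i_0}(\tilde u) \le F_{i_0}(u_{i_0}^*) - c'_{h_0}\lambda^{1/2} = H^{\rm eff}_{\ell,\und h}(\und u^*) - c'_{h_0}\lambda^{1/2},
\end{equation*}
contradicting the minimality of $\und u^*$. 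Hence every component of $\und u^*$ lies within $\lambda^{1/4}$ of $\tilde u$, which is the claim.

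\textbf{Potential obstacle.} The only subtle point is checking that the constants $c_{h_0}$ and $c'_{h_0}$ in the one-variable estimate do not degrade when one freezes arbitrary admissible values of the other coordinates $u_j$, $j\ne i$; but this is precisely what the proof of Lemma \ref{lemma6.3} provides, since all terms in $H^{\rm eff}_{\ell,\und h}$ other than $T(u_i)-h_{\rm ext}u_i$ depend on $u_i$ only through expressions whose variations are controlled by $\lambda$ uniformly in the other $u_j$'s. So no further work is needed beyond the one-variable statement already established.
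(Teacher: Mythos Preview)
Your proof is correct and is exactly the argument the paper has in mind: the text states Lemma \ref{lemma6.4} only as ``a corollary of the above lemmas'' without writing out the coordinate-wise replacement you spell out. Your remark about the uniformity of $c'_{h_0}$ in the frozen coordinates is the one point worth making explicit, and you handle it correctly by pointing back to the proof of Lemma \ref{lemma6.3}.
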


\medskip

\begin{prop}
\label{prop6.5}
For $\la$ small enough the matrix  $\frac{\partial^2}{\partial u_i\partial u_j} H^{\rm eff}_{\ell,\und h }$ is strictly positive in the ball
$\und u: \max \{ |u_i-u_{h_{\rm ext} }| \le \la^{1/4}, i=1,..,\ell\}$.

\end{prop}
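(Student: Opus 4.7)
\medskip

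\textbf{Proof proposal.} The plan is to treat the Hessian as a perturbation of a positive diagonal matrix coming from the one-body term $T(u_i)$. Write
\[
H^{\rm eff}_{\ell,\und h } = H^{(0)} + H^{(1)},\qquad H^{(0)} := \sum_{i=1}^\ell [T(u_i) - h_{\rm ext} u_i],
\]
so that $\nabla^2 H^{(0)}$ is the diagonal matrix with entries $T''(u_i)$. In the ball $|u_i - \tilde u| \le \la^{1/4}$, continuity of $T''$ and Lemma \ref{lemma6.2} yield $T''(u_i) \ge T''(\tilde u) - C\la^{1/4} \ge c_{h_0}/2$ for $\la$ small enough. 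The plan is then to show that the Hessian of $H^{(1)}$ has operator norm bounded by $C\la$, so that $\nabla^2 H^{\rm eff}_{\ell,\und h} \ge (c_{h_0}/2 - C\la) I$ is strictly positive for $\la$ small.

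For a symmetric matrix $R$ the operator norm is controlled by the Gershgorin bound $\|R\|_{\rm op} \le \max_i \sum_j |R_{ij}|$, so it suffices to prove
\[
\max_i \sum_j \Bigl|\frac{\partial^2 H^{(1)}}{\partial u_i \partial u_j}\Bigr| \le C\la.
\]
I will estimate each term of $H^{(1)}$ as listed in \eqref{2.25} using the cluster-expansion bounds of Corollary \ref{lemma5.1}. First, the contribution $-\log Z^*_{\ell,\und h} + A_\emptyset$ satisfies the row-sum bound directly by \eqref{FD.1} with $k=2$. Second, for the terms $(h_i - u_i - h_{\rm ext})\Psi_i$, observe that since $m_i = \partial_{h_i} \log Z_{\ell,\und h}$ and $\log Z_{\ell,\und h} = \log Z^*_{\ell,\und h} + \sum_j \log(e^{h_j}+e^{-h_j})$, together with $\partial_{h_i} = (1-u_i^2)\partial_{u_i}$, one has $\Psi_i = (1-u_i^2)\,\partial_{u_i} \log Z^*_{\ell,\und h}$. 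Moreover by \eqref{FAppsubsec.F.3.2} the prefactor $(h_i - u_i - h_{\rm ext})(1-u_i^2)$ is a smooth bounded function of $u_i$ alone (with bounded derivatives up to order two). Applying the product rule to $\partial_i \partial_j$, one obtains a sum of terms of the form (bounded function of $u_i$)$\times \partial^{\alpha} \log Z^*_{\ell,\und h}$ with $|\alpha|\le 3$; summing over $j$ and using \eqref{FD.1} with $k=2,3$ gives the $O(\la)$ bound. Third, for $-\sum_i \Psi_i^2/2$, the identity $\Psi_i = (1-u_i^2)\partial_{u_i} \log Z^*_{\ell,\und h}$ and \eqref{FD.1} with $k=1$ yield $|\Psi_i|\le C\la$ and $\sum_j |\partial_j \Psi_i|\le C\la$; expanding $\partial_i \partial_j \Psi_i^2$ via the product rule, each term is a product of two factors each at most $O(\la)$ after summation, giving an overall $O(\la^2)$.

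Combining, $\nabla^2 H^{\rm eff}_{\ell,\und h} \ge (c_{h_0}/2 - C\la)\, I$, which is strictly positive for $\la$ small enough. The main (essentially only) obstacle is the bookkeeping of mixed derivatives of the products involving $\Psi_i$: all such derivatives must be reduced, via the identity $\Psi_i = (1-u_i^2)\partial_{u_i}\log Z^*_{\ell,\und h}$, to partial derivatives of $\log Z^*_{\ell,\und h}$ of bounded order, so that the uniform-in-$\ell$ row-sum estimates \eqref{FD.1} can be applied. No fine structure of $H^{\rm eff}_{\ell,\und h}$ (such as the gradient-square representation of Theorem \ref{Fthm.4.8}) is needed here; the strict convexity of $T$ away from $0$, guaranteed by Lemma \ref{lemma6.2} when $h_{\rm ext}\in[h_0,h^*]$, dominates all $\la$-small corrections uniformly in the ball.
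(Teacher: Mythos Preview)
Your proposal is correct and follows essentially the same route as the paper's own proof. Both arguments establish strict positivity via diagonal dominance (Gershgorin): the diagonal entries $\partial_{u_i}^2 H^{\rm eff}_{\ell,\und h}$ are bounded below by $T''(\tilde u)-O(\la)\ge c_{h_0}-O(\la)$ using Lemma~\ref{lemma6.2}, while the off-diagonal row sums $\sum_{j\ne i}|\partial_{u_i}\partial_{u_j}H^{\rm eff}_{\ell,\und h}|$ are $O(\la)$ by Corollary~\ref{lemma5.1} (i.e.\ the bounds \eqref{FD.1}) together with \eqref{103.8}. The paper's proof states these two bounds and stops; you have simply expanded the second one term by term using the representation \eqref{2.25} and the identity $\Psi_i=(1-u_i^2)\partial_{u_i}\log Z^*_{\ell,\und h}$, which is exactly the content hidden behind the paper's citation of Corollary~\ref{lemma5.1}.

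One small remark on your treatment of $\sum_i\Psi_i^2$: to make the estimate uniform in $\ell$ it is cleanest to bound
\[
\sum_{i,j}|\Psi_i|\,|\partial_{u_j}\partial_{u_k}\Psi_i|
\;\le\;\bigl(\sup_i|\Psi_i|\bigr)\sum_{i,j}|\partial_{u_i}\partial_{u_j}\partial_{u_k}\log Z^*_{\ell,\und h}|
\;\le\; C\la\cdot c'_3\la,
\]
where the double sum over $i,j$ with $k$ fixed is controlled directly by \eqref{FD.1} with order three. This is consistent with what you wrote (``each term is a product of two factors each at most $O(\la)$ after summation''), but it is worth being explicit that the full sum over both free indices is absorbed into a single application of \eqref{FD.1}, so no factor of $\ell$ appears.
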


\begin{proof}
From Lemma \ref{lemma6.2} and Corollary \ref{lemma5.1} one obtains
$$
\frac{\partial^2}{\partial u_i^2} H^{\rm eff}_{\ell,\und h }\geq c_{h_0}-\lambda c_1, \text{for} \quad i=1,2,...,L
$$
For any $i$,
$$
\sum_{j\ne i}|\frac{\partial^2}{\partial u_i \partial u_j} H^{\rm eff}_{\ell,\und h }|\leq c_2\lambda
$$
from \eqref{103.8} and Corollary \ref{lemma5.1}.
\end{proof}


\begin{thebibliography}{}


\bibitem{FMMPV1} R.L. Fontes, D. Marchetti, I. Merola, E. Presutti, M.E. Vares: \textit{Phase transitions in layered systems}, J. Stat. Phys.  {\bf 157}, 407--421 (2014).


\bibitem{FMMPV2} R.L. Fontes, D. Marchetti, I. Merola, E. Presutti, M.E. Vares: \textit{Layered systems at the mean field critical temperature}, J. Stat. Phys.  to appear.


\bibitem{merola} I. Merola:  {\it Asymptotic expansion of the pressure in the inverse interaction range.} Journal of Statistical Physics, {\bf 95},  745-758, ISSN: 0022-4715, doi: 10.1023/A:1004503611860.  (1999)



\bibitem{Pres} E. Presutti: \textit{Scaling Limits in Statistical Mechanics and Microstructures in Continuum Mechanics}, Theoretical and mathematical physics, Springer (2009).

\bibitem{Zhang} Y. Zhang, T.-T. Tang, C. Girit, Z. Hao, M. C. Martin, A. Zettl, M. F. Crommie, Y. R. Shen and F. Wang, \textit{Direct observation of a widely tunable bandgap in bilayer graphene}, Nature \textbf{459}, 820--823 (2009).

\bibitem{Rutt} G. Rutter, S. Jung, N. Klimov, D. Newell, N. Zhitenev and J. Stroscio: \textit{Microscopic polarization in bilayer graphene}, Nature Physics \textbf{7}, 649--655 (2011).

\bibitem{LeRoy} B. J. LeRoy, M. Yankowitz, \textit{Emergent complex states in bilayer graphene}, Science \textbf{345}, 31--32 (2014).

\bibitem{Schwierz} F. Schwierz, \textit{Graphene transistors}, Nature Nanotechnology \textbf{5}, 487--496 (2010).

\bibitem{Shahil} K. M. F. Shahil and A. A. Balandin, \textit{Graphene?Multilayer Graphene Nanocomposites as Highly Efficient Thermal Interface Materials}, Nano Lett. \textbf{12}, 861--867 (2012).

\bibitem{Yanko} M. Yankowitz,	 J. I-J. Wang,	A. G. Birdwell,	Yu-An Chen, K. Watanabe,T. Taniguchi, P. Jacquod,	P. San-Jose, P. Jarillo-Herrero and B. J. LeRoy, \textit{Electric field control of soliton motion and stacking in trilayer graphene},
Nature Materials \textbf{13}, 786--789 (2014).

\end{thebibliography}
\end{document}